\documentclass[11pt]{amsart} 

\usepackage[english]{babel}
\usepackage[utf8]{inputenc}
\usepackage[T1]{fontenc}
\usepackage{comment}
\usepackage{amsmath}
\usepackage{amssymb}
\usepackage{amsfonts}
\usepackage{amsthm}
\usepackage{amscd}
\usepackage{xcolor}
\usepackage{slashed}
\usepackage[hidelinks]{hyperref}
\usepackage[noabbrev,capitalise]{cleveref}
\usepackage{geometry}
\usepackage{mathtools}
\usepackage{csquotes}
\usepackage{float}
\usepackage{subcaption}
\usepackage{footnote}

\usepackage{pgfplots}
\pgfplotsset{compat=1.18}

\numberwithin{equation}{section}

\newtheorem{definition}{Definition}[section]
\newtheorem{lemma}[definition]{Lemma}

\newtheorem{theorem}[definition]{Theorem}
\newtheorem{proposition}[definition]{Proposition}

\newtheorem{remark}[definition]{Remark}

\DeclareMathOperator{\tr}{tr}
\DeclareMathOperator{\Div}{div}

\title{The Spacetime Positive Mass Theorem with Multiple Time Dimensions}

\author{Sven Hirsch}
\address{Columbia University, 2990 Broadway, New York, NY 10027, USA}
\email{sven.hirsch@columbia.edu}
\author{Alec Payne}
\address{North Carolina State University, 2311 Stinson Drive, Raleigh, NC 27607, USA}
\email{ajpayne4@ncsu.edu}
\author{Yiyue Zhang}
\address{Beijing Institute of Mathematical Sciences and Applications, Beijing, 101408, China}
\email{zhangyiyue@bimsa.cn}

\begin{document}

\begin{abstract}
We generalize the spacetime positive mass theorem to include multiple time dimensions. In particular, we show that the mass remains nonnegative in the sense that the energy $E$ is bounded from below by the trace norm of the linear momenta $J^1,\dots, J^m$. Equality in this energy inequality implies a foliation by flat submanifolds of a generalized initial data set. Moreover, under an additional umbilicity assumption, we find that the initial data set isometrically embeds into a generalized pp-wave. 
\end{abstract}
\maketitle

\section{Introduction}

Since the introduction of Kaluza--Klein theory, physicists have speculated about general relativity with extra dimensions of spacetime. Often the proposal is to include additional spatial dimensions, but a less common consideration is having more than one time dimension. In this paper, we investigate mathematical relativity with additional \emph{timelike} dimensions. This leads to interesting rigidity statements from a purely mathematical perspective.

There are many plausible objections to physical models that include extra time dimensions. For instance, introducing additional time directions could lead to particle instabilities~\cite{Dorling-Dimensionality-Time}, causality problems~\cite{ConstraintsExtraTime}, and negative probability states (``ghosts'')~\cite[Ch.\ 7]{Bars:2009msw}, even when the additional time directions are compactified~\cite{YNDURAIN199115}. This setting naturally leads to laws of motion modeled by ultrahyperbolic PDEs, which some argue lead to unpredictive physics~\cite{Tegmark-Dimensionality-Spacetime} (see also~\cite{Walter-Weinstein-determinism, WeinsteinMultipleTimeDimensions}). 

Despite the apparent difficulties, physical theories involving multiple time dimensions are still considered. For instance, both Bars' two-time physics~\cite{Bars:2009msw, Bars2001Survey} and $F$-theory~\cite{VafaF-theory} consider spacetimes with two time dimensions. The idea of multiple time dimensions has also been explored at a conceptual level in science fiction, most notably in Egan's \emph{Dichronauts} which envisions a spacetime with signature $(2,2)$~\cite{Egan2017Dichronauts}.

In this paper we study extra time dimensions in the context of the famous positive mass theorem (PMT) \cite{SchoenYau79PMT, SY1981, Witten1981, EichmairHuangLeeSchoen11, HKK, Eichmair2013JangReduction, DingSpacetimePMT}, which is one of the foundations of mathematical relativity.
Remarkably, we find that much of the analytic and geometric framework continues to apply in multiple time dimensions. 

Recall that the classical spacetime PMT for spin manifolds asserts the following.

\begin{theorem}\label{thm pmt}{\cite{Witten1981, DingSpacetimePMT}} 
Let $(M^n,g,k)$ be an asymptotically flat spin initial data set.
Assume that the dominant energy condition
\begin{align*}
    \mu \ge |J|
\end{align*}
holds, where $\mu$ denotes the energy density and $J$ the momentum density.
Then the ADM energy $E$ and linear momentum $P$ satisfy
\begin{align*}
    E \ge |P|.
\end{align*}
\end{theorem}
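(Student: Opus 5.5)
We will prove \cref{thm pmt} by Witten's spinor method, in the form adapted to initial data sets. Regard $M$ as a hypersurface in a spacetime with future unit normal $e_0$. The restricted spinor bundle $S$ carries a connection
\[
\widetilde\nabla_i\psi=\nabla_i\psi+\tfrac12 k_{ij}\,e_j\cdot e_0\cdot\psi
\]
and a Dirac-type operator $\widetilde D=\sum_i e_i\cdot\widetilde\nabla_i$. The Hermitian metric is chosen so that $e_0\cdot$ is self-adjoint and the $e_i\cdot$ are skew-adjoint. The first step is the Schr\"odinger--Lichnerowicz--Witten identity, obtained by integrating the Weitzenb\"ock formula over a large coordinate ball $B_r$:
\[
\int_{B_r}\Bigl(|\widetilde\nabla\psi|^2+\tfrac12\langle\psi,(\mu+J_ie_i\cdot e_0)\psi\rangle-|\widetilde D\psi|^2\Bigr)
=\int_{\partial B_r}\langle\psi,(\widetilde\nabla_\nu+\nu\cdot\widetilde D)\psi\rangle .
\]
This is a direct computation. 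The curvature term $\tfrac14 R$ combines with the $k$-terms through the constraint equations $2\mu=R+(\tr k)^2-|k|^2$ and $J=\Div k-d\tr k$. The endomorphism $J_ie_i\cdot e_0$ is self-adjoint, and its square is $|J|^2$, so its eigenvalues are $\pm|J|$. Consequently the dominant energy condition $\mu\ge|J|$ makes the zeroth-order bulk term pointwise nonnegative.

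The second step is to produce a solution of $\widetilde D\psi=0$ that is asymptotic to a chosen constant spinor $\psi_0$ at infinity. I would write $\psi=\psi_0+\xi$, with $\psi_0$ constant in the asymptotic chart and cut off in the interior. Since $\widetilde D\psi_0=O(r^{-q-1})$, it then suffices to solve $\widetilde D\xi=-\widetilde D\psi_0$ for $\xi$ in a weighted Sobolev space decaying at infinity. The integral identity with zero boundary term, together with the dominant energy condition and a weighted Poincaré/Hardy inequality on asymptotically flat manifolds, gives coercivity
\[
\|\widetilde\nabla\xi\|_{L^2}\lesssim\|\widetilde D\xi\|_{L^2}.
\]
Lax--Milgram, applied to $\widetilde D^*\widetilde D$, then yields existence, and elliptic theory gives the decay $\xi=O(r^{-q})$. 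I expect this step to be the main technical obstacle. Injectivity and the weighted estimate must hold uniformly: one needs $\widetilde D^*\widetilde D$ to be an isomorphism between the weighted spaces, and the zeroth-order term coming from $k$ must be absorbed using $\mu-|J|\ge0$ together with the decay rate $q>(n-2)/2$.

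The third step is the asymptotic computation. Using the asymptotic flatness rates, the boundary integrand converges as $r\to\infty$ to the ADM quantities:
\[
\lim_{r\to\infty}\int_{\partial B_r}\langle\psi,(\widetilde\nabla_\nu+\nu\cdot\widetilde D)\psi\rangle
= c_n\,\langle\psi_0,(E+P_ie_i\cdot e_0)\psi_0\rangle,\qquad c_n>0.
\]
The $g$-part reproduces the ADM energy, as in the Riemannian case, and the $k$-part gives the momentum $P_i=\lim\int(k_{ij}-\tr k\,g_{ij})\nu^j$. Since $\widetilde D\psi=0$, the left side of the identity is nonnegative, so the right side is as well. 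The endomorphism $P_ie_i\cdot e_0$ has eigenvalues $\pm|P|$. Choosing $\psi_0$ to be an eigenspinor with eigenvalue $-|P|$ therefore gives $E-|P|\ge0$.
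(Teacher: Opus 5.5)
Your proposal is correct and is the same Witten argument the paper uses in the proof of \cref{thm main}, specialized to $m=1$: the divergence identity of \cref{lemma divergence identity}, the eigenvalue bound of \cref{eq mu J positivity}, the boundary limits in \cref{thm integral identity}, and a $-|P|$ eigenspinor chosen at infinity. The only real difference is the existence step, where you use Lax--Milgram with the coercive form $\|\widetilde D\xi\|_{L^2}^2\ge\|\widetilde\nabla\xi\|_{L^2}^2$, while the paper deforms $\overline{\slashed D}$ to $\slashed D$ to get Fredholm index zero and proves injectivity via $\overline{\nabla}\psi=0\Rightarrow|\nabla|\psi||\le C|\psi|\Rightarrow\psi\equiv0$; note that this same unique-continuation fact, not the dominant energy condition, is what makes $\|\widetilde\nabla\xi\|_{L^2}$ a genuine norm in your approach, because the $k$-term separating $\widetilde\nabla$ from $\nabla$ is not small in general and cannot be absorbed by a Hardy inequality alone.
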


Here, $k$ is a symmetric $(0,2)$-tensor representing the second fundamental form of $(M^n,g)$ as a spacelike hypersurface in a Lorentzian spacetime $(\overline M^{n+1},\overline g)$.
If instead $(M,g)$ arises as an initial data set in a pseudo-Riemannian manifold $(\overline M^{n+m},\overline g)$ with $m$ timelike directions, then the second fundamental form becomes vector-valued.
Equivalently, it consists of $m$ symmetric tensors $k^1,\dots,k^m$.
In this setting, we are able to show\footnote{A related problem was studied in~\cite[Theorem 4.1]{ChenHijaziZhang2012DiracWitten}. See Remark \ref{mistakes}.}:

\begin{theorem}\label{thm main}
Let $(M^n,g,k^1,\dots,k^m)$, with $m\le n$, be an asymptotically flat spin initial data set.
Assume that the dominant energy condition
\begin{align*}
    \mu \ge \|\mathcal J\|_{\mathrm{tr}}
\end{align*}
holds, and that the $k^\alpha$, $\alpha=1,\dots,m$, pairwise commute as $g$-self-adjoint endomorphisms.
Then the energy--momentum satisfies
\begin{align*}
    E \ge \|\mathcal P\|_{\mathrm{tr}}.
\end{align*}
\end{theorem}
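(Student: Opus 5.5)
We will follow Witten's spinorial argument, adapted to $m$ timelike normal directions. Let $\mathcal S$ be a spinor bundle over $M$ carrying a representation of the Clifford algebra $Cl(n,m)$, built by twisting the spinor bundle of $(M,g)$ with a bundle on which timelike unit normals $e_{n+1},\dots,e_{n+m}$ act. Call the resulting Clifford generators $\nu_\alpha$, so that $\nu_\alpha^2=1$, they anticommute with each other and with tangential Clifford multiplication, and they are self-adjoint for a positive definite Hermitian metric. On $\mathcal S$ we put the modified connection
\begin{align*}
\widetilde\nabla_i\psi=\nabla_i\psi+\tfrac12\sum_{\alpha=1}^m k^\alpha_{ij}\, e_j\cdot\nu_\alpha\cdot\psi,
\end{align*}
which is the restriction of the ambient Levi-Civita spin connection, and the Dirac--Witten operator $\widetilde D=\sum_i e_i\cdot\widetilde\nabla_i$.

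\textbf{Step 1 (Weitzenb\"ock formula).} We compute the Lichnerowicz-type identity
\begin{align*}
\widetilde D^*\widetilde D=\widetilde\nabla^*\widetilde\nabla+\tfrac12\Big(\mu+\sum_\alpha J^\alpha_i\, e_i\cdot\nu_\alpha\Big)+\mathcal R,
\end{align*}
where $\mu$ and $J^\alpha=\operatorname{div}k^\alpha-d\operatorname{tr}k^\alpha$ are the constraint quantities. Here $\mu=\tfrac12\big(R+\sum_\alpha((\operatorname{tr}k^\alpha)^2-|k^\alpha|^2)\big)$ up to the paper's convention. The term $\mathcal R$ collects the cross terms involving $k^\alpha k^\beta\,\nu_\alpha\nu_\beta$ with $\alpha\neq\beta$. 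These are proportional to the commutators $[k^\alpha,k^\beta]$ contracted with Clifford elements, so the commutation hypothesis makes $\mathcal R$ vanish. This is where that hypothesis enters, and I expect this bookkeeping to be the most delicate step.

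\textbf{Step 2 (pointwise positivity).} We show that the endomorphism $A=\sum_{\alpha,i}J^\alpha_i\,e_i\nu_\alpha$ is self-adjoint with spectrum contained in $[-\|\mathcal J\|_{\mathrm{tr}},\|\mathcal J\|_{\mathrm{tr}}]$. Take the singular value decomposition of the $n\times m$ matrix $\mathcal J=(J^\alpha_i)$, rotating the frames $\{e_i\}$ and $\{\nu_\alpha\}$ by $O(n)\times O(m)$, which the spinor representation permits. This gives $A=\sum_{a\le m}\sigma_a\, e_a\nu_a$. The elements $e_a\nu_a$ square to $1$, since $(e_a\nu_a)^2=-e_a^2\nu_a^2=1$, and they pairwise commute. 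Hence they are simultaneously diagonalizable with eigenvalues $\pm1$, and every eigenvalue of $A$ has the form $\sum\pm\sigma_a$, which is at most $\sum_a\sigma_a=\|\mathcal J\|_{\mathrm{tr}}$ in absolute value. The dominant energy condition then gives $\mu+A\ge0$. This uses $m\le n$.

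\textbf{Step 3 (boundary term and existence).} Integrating the formula over large domains gives
\begin{align*}
\int_M|\widetilde\nabla\psi|^2+\langle(\mu+A)\psi,\psi\rangle-|\widetilde D\psi|^2=\lim_{r\to\infty}\int_{S_r}\langle\widetilde\nabla_\nu\psi+\nu\cdot\widetilde D\psi,\psi\rangle .
\end{align*}
For $\psi$ asymptotic to a constant spinor $\psi_0$, the limit equals $c_n\langle(E+\sum_{\alpha,i}P^\alpha_i\, e_i\nu_\alpha)\psi_0,\psi_0\rangle$, by the standard asymptotic computation applied separately to each $k^\alpha$. We then solve $\widetilde D\psi=0$ with $\psi\to\psi_0$ in the usual way. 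Positivity from Step~2 gives coercivity, the Lax--Milgram theorem gives existence, and the weighted Sobolev theory is unchanged from the case $m=1$.

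\textbf{Step 4 (conclusion).} The left-hand side is nonnegative, so $E+\sum P^\alpha_i e_i\nu_\alpha\ge0$ as an operator on constant spinors. As in Step~2, this operator has an eigenvalue $E-\sum_a\sigma_a(\mathcal P)$: choose $\psi_0$ in the joint eigenspace where every $e_a\nu_a$ acts by $-1$. That eigenspace is nonempty because the commuting involutions have joint eigenspaces for every sign pattern, which can be seen by multiplying by suitable $e_b$. Therefore $E\ge\|\mathcal P\|_{\mathrm{tr}}$.
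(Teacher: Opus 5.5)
Your proposal is correct and follows the paper's route. The paper also uses a Witten-type divergence identity, in which the $k^\alpha k^\beta$ cross terms are antisymmetric in the tangential indices and so vanish when the $k^\alpha$ commute. It bounds $\sum_\alpha J^\alpha_i e_i\eta_\alpha$ from below by $-\|\mathcal J\|_{\mathrm{tr}}$ via the singular value decomposition, identifies the boundary term as $EN(\psi^\infty)+\sum_\alpha\langle P^\alpha,X^\alpha(\psi^\infty)\rangle$, and evaluates it on a constant spinor on which each rotated $\widetilde e_\alpha\widetilde\eta_\alpha$ acts by $-1$.

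The differences are minor. You solve $\widetilde D\psi=0$ by Parker--Taubes-style coercivity and Lax--Milgram, whereas the paper deforms $\overline{\slashed D}$ to $\slashed D$ and uses index invariance plus injectivity. You bound the $J$-term by simultaneously diagonalizing the commuting involutions, whereas the paper bounds $\langle \widehat e_s\widehat\eta_s\psi,\psi\rangle\ge-|\psi|^2$ term by term. Your remark that Step~2 ``uses $m\le n$'' is unnecessary, since the argument runs equally well with $\min(m,n)$ singular values.
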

Here, $\mathcal J=(J^1,\dots,J^m)$, $\mathcal P=(P^1,\dots,P^m)$ are $m\times n$ matrices, consisting of rows of momentum densities $J^{\alpha}$ and momentum vectors $P^{\alpha}$, and $\|\cdot\|_{\mathrm{tr}}$ denotes the pointwise trace norm. Our definitions of these objects, as well as those of initial data sets $(M^n, g, k^1, \dots, k^m)$, the energy $E$, and the energy density $\mu$, can be found in Sections~\ref{sec IDS} and~\ref{sec trace norm}. These reduce to their classical counterparts for $m=1$, and when the extra $k^\alpha$ are small in $C^1$, $\mu$ and $\|\mathcal{J}\|_{\mathrm{tr}}$ approximate the usual energy and momentum densities. We remark that we may remove the assumption that $k^\alpha$ commute by redefining our dominant energy condition to include the the anticommutator of $k^\alpha$ (see Remarks \ref{remark geometric meaning of commuting kalpha} and \ref{commutativity}). 

We emphasize that our definitions do not come from a generalization of the Einstein field equations to multiple time dimensions. Rather, we suitably generalize all quantities from the $m=1$ initial data set formulation to arrive at natural mathematical results. This is further justified on mathematical grounds by our rigidity results in Theorems \ref{thm rigidity 1} and \ref{thm rigidity 2}.

In the case of a single time dimension, the PMT admits a strong rigidity statement: initial data sets saturating the inequality in Theorem~\ref{thm pmt} must embed into Minkowski space or into a pp-wave spacetime \cite{HirschJangZhang24, HirschHuang25, BeigChrusciel, ChruscielMaerten05, HuangLee20, Huang-Lee:2024, HZ23}.
Such pp-wave spacetimes model gravitational radiation and are geometrically characterized by the existence of a codimension $1$ foliation of flat hypersurfaces in the initial data set $(M,g,k)$.

With multiple timelike directions, the rigidity problem becomes substantially more subtle.
This phenomenon already appears in the positive mass theorem with cosmological constant, which effectively behaves as the presence of an additional timelike dimension \cite{hirsch2025causal}.
In that context, equality is attained, among other examples, by Siklos wave spacetimes, whose initial data sets admit a codimension $2$ foliation by flat submanifolds.
Also, see Cecchini--Lesourd--Zeidler's proof of the spacetime PMT with shields which makes use of two time dimensions indirectly with $k^1=k$ and $k^2=fg$~\cite{CecchiniLesourdZeidler2024PositiveMass}.

Our first rigidity result is the following.

\begin{theorem}\label{thm rigidity 1}
Suppose equality holds in Theorem~\ref{thm main}, that is,
\begin{align*}
    E=\|\mathcal P\|_{\mathrm{tr}}.
\end{align*}
Then, $M$ is foliated by flat submanifolds $\Sigma$ of codimension $m$.
\end{theorem}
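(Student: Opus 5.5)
We will follow the Witten-spinor approach behind Theorem~\ref{thm main}, now keeping track of the error terms. In that proof one builds, on the spinor bundle of $M$ twisted by the Clifford module generated by $m$ extra anticommuting ``time'' generators $e_{n+1},\dots,e_{n+m}$ (with $e_{n+\alpha}^2=+1$ or $-1$ as the signature dictates), the modified connection
\begin{align*}
\widetilde\nabla_i\psi=\nabla_i\psi+\tfrac12\sum_{\alpha=1}^m k^\alpha_{ij}\,e_j\cdot e_{n+\alpha}\cdot\psi .
\end{align*}
One then solves the Dirac--Witten equation $\widetilde D\psi=0$ with $\psi$ asymptotic to a constant spinor $\psi_0$. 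The resulting Lichnerowicz--Witten identity reads
\begin{align*}
\int_M |\widetilde\nabla\psi|^2+\langle(\mu+\mathcal J\cdot e\,e_{n+\cdot})\psi,\psi\rangle
=\tfrac{\omega_{n}}{2}\big(E|\psi_0|^2+\langle \mathcal P\cdot e\, e_{n+\cdot}\,\psi_0,\psi_0\rangle\big),
\end{align*}
up to normalizing constants. The commutation of the $k^\alpha$ is used so that cross terms $k^\alpha k^\beta e_{n+\alpha}e_{n+\beta}$ cancel and the zeroth-order term is exactly $\mu+\sum_\alpha J^\alpha_i e_ie_{n+\alpha}$. The self-adjoint endomorphism $\sum_{\alpha,i} X^\alpha_i e_ie_{n+\alpha}$ has spectrum $\pm$ the sums of singular values of the $m\times n$ matrix $X$. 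Hence it is bounded below by $-\|X\|_{\mathrm{tr}}$, and the bound is attained on a joint eigenspace determined by the singular value decomposition of $X$. This is exactly why the trace norm appears.

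Under equality $E=\|\mathcal P\|_{\mathrm{tr}}$, we proceed as follows. Take the singular value decomposition of $\mathcal P$, with right singular vectors $v_1,\dots,v_m\in\mathbb R^n$ and corresponding left vectors in $\mathbb R^m$. Choose $\psi_0$ in the $-\|\mathcal P\|_{\mathrm{tr}}$ eigenspace, which is the simultaneous eigenspace of the commuting involutions $\pm e_{v_\alpha}e_{n+\alpha'}$. The right-hand side then vanishes, and the dominant energy condition forces $\widetilde\nabla\psi\equiv0$. Doing this for a basis of that eigenspace gives a family of $\widetilde\nabla$-parallel spinors $\psi_1,\dots,\psi_N$, where $N$ is $2^{-m}$ times the rank. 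Next we form the associated ``Dirac currents''. For each pair we define the lapse-type functions $u_{ab}=\langle\psi_a,\psi_b\rangle$, and for each $\alpha$ the vector fields $X^\alpha_{ab}$ by $\langle X^\alpha_{ab},Y\rangle=\langle Y\cdot e_{n+\alpha}\psi_a,\psi_b\rangle$, or more simply take $a=b$. Differentiating with the parallel condition gives the generalized Killing initial data equations
\begin{align*}
\nabla u=-\textstyle\sum_\alpha k^\alpha(X^\alpha),\qquad \nabla X^\alpha=-u\,k^\alpha+(\text{skew terms in }e_{n+\alpha}e_{n+\beta}).
\end{align*}
At infinity $u\to|\psi_0|^2$ and $X^\alpha\to$ the vector $\langle e_ie_{n+\alpha}\psi_0,\psi_0\rangle$, which by the eigenspace choice is $\mp|\psi_0|^2 v_\alpha$. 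So the pair $(u,X^\alpha)$ is ``null'' in the trace-norm sense. We then show that the pointwise algebraic identity $u^2=\|(X^1,\dots,X^m)\|_{\mathrm{tr}}$-type equality, namely $u=|X^\alpha|$ with the $X^\alpha$ mutually orthogonal, propagates from infinity. This is done by showing that the spinor stays in the corresponding eigenspace. The argument is that $\Pi\psi$, where $\Pi$ is the projector built from $X^\alpha/u$, satisfies a parallel-type equation and vanishes at infinity.

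With $u>0$ and the $X^\alpha$ orthogonal of length $u$, we consider the distribution $\mathcal D=\{X^1,\dots,X^m\}^\perp$ of codimension $m$. Using the equation for $\nabla X^\alpha$ together with the symmetry of $k^\alpha$, we show that the $1$-forms $X^\alpha/u$ are closed modulo each other, so $\mathcal D$ is integrable by Frobenius. The leaves $\Sigma$ then carry the restrictions of the parallel spinors. Along $\Sigma$, the $k$-terms either act through the normal directions $e_{X^\alpha}e_{n+\alpha}$, which act as scalars on the eigenspace, or reduce to a connection on the normal bundle. The curvature of $\Sigma$ then annihilates a full set of spinors, which forces $\Sigma$ to be flat via the Gauss equation and the standard fact that a maximal space of parallel spinors gives flatness. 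The main obstacle is this last paragraph, namely showing simultaneously that $\mathcal D$ is integrable and that the induced connection on $\Sigma$ is flat. It requires a careful Clifford computation in which commutativity of the $k^\alpha$ is used again. If that computation fails in general, we would first try to deduce flatness from $\widetilde R\psi=0$ for all $\psi$ in the eigenspace and identify the resulting curvature constraints directly.
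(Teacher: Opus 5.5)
Your outline follows the paper's architecture: equality gives $\overline\nabla$-parallel spinors with null asymptotics, nullness propagates, $\{X^\alpha\}^\perp$ is integrable, and a full family of spinors parallel along the leaves gives flatness. However, the step that carries the whole difficulty for $m\ge 2$ is only asserted.

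Consider $\Phi_\alpha=N\psi-X^\alpha\eta_\alpha\psi$, which up to a factor is your $\Pi\psi$ (writing $N$ for your $u$ and $\eta_\alpha$ for your $e_{n+\alpha}$). Note that $\Pi_\alpha$ is not a projector until $|X^\alpha|=N$ is known. When you differentiate $\Phi_\alpha$, the derivative of $X^\alpha$ produces the term $k^\beta_{il}\omega^{\alpha\beta}_{lj}\,e_j\eta_\alpha\psi$, where $\omega^{\alpha\beta}_{lj}=\langle e_le_j\eta_\alpha\eta_\beta\psi,\psi\rangle$. This is a new bilinear: it is absent for $m=1$ and is not a function of $N$ and $X^\alpha$. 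Until this term is bounded by $|\Phi|$, there is no closed ``parallel-type'' equation and no Gr\"onwall argument.

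This is exactly where the paper spends its effort. It uses the closed system $\nabla_i\omega^\Gamma_I=-k^\alpha_{ij}e_j\eta_\alpha\omega^{\Gamma_\alpha}_{I_j}$ for the entire hierarchy of forms. It then proves $N^{\ell-1}\omega^\Gamma_I=\widetilde X^\Gamma_I$ for all $\ell$ via $|\nabla F|\le CF$, in particular $N\omega^{\alpha\beta}_{lj}=X^\alpha_jX^\beta_l-X^\alpha_lX^\beta_j$. Only after that does it show $Q\equiv 0$.

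Your spinor-valued route can actually be closed more cheaply, using the pointwise identity
\[
N\omega^{\alpha\beta}_{lj}=X^\alpha_jX^\beta_l-X^\alpha_lX^\beta_j+\operatorname{Re}\langle e_le_j\eta_\alpha\eta_\beta\psi,\Phi_\beta\rangle \qquad (l\neq j),
\]
obtained by inserting $N\psi=X^\beta\eta_\beta\psi+\Phi_\beta$. With it, $\nabla\Phi_\alpha$ becomes real-linear in $(\Phi_1,\dots,\Phi_m)$ with bounded coefficients, since $N$ is bounded below. Then $\Phi\equiv 0$ follows from the decay at infinity. This ingredient is what your sketch is missing. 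Your Frobenius step also needs it: $\nabla_AX^\alpha_B-\nabla_BX^\alpha_A=0$ for $A,B\perp X$ holds only because the skew term $k^\beta_{Al}\omega^{\alpha\beta}_{lB}$ vanishes there.

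There is a second gap. Your flatness step places the whole family of parallel spinors on one foliation, but that foliation is defined by the $X^\alpha$ of a single spinor. You must show every spinor in the family has the same directions $X^\alpha/N$. The paper does this in Lemma~\ref{NX u psi}, by uniqueness for the linear system satisfied by $(N,\omega^\Gamma_I)$ with matching data at infinity. In your setting you could instead rerun the $\Phi$-argument for another parallel spinor $u$ with the frame of $\psi$ frozen, i.e.\ for $N(\psi)u-X^\alpha(\psi)\eta_\alpha u$.

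Once that is settled, the step you flag as the main obstacle is routine. With $e_\alpha=X^\alpha/N$ one has $h^\alpha_{AB}=-k^\alpha_{AB}$, $\langle\nabla_Ae_\alpha,e_\gamma\rangle=k^\gamma_{A\alpha}-k^\alpha_{A\gamma}$ and $(\eta_\alpha\psi)_1=-e_\alpha\psi_1$. All $k$-terms then cancel against the second fundamental form and normal-connection terms, except the scalar $-\tfrac12k^\alpha_{A\alpha}=\tfrac12\nabla_A\log N$. This gives $\nabla^\Sigma_A(N^{-1/2}\psi_1)=0$. Letting $\psi_1^\infty$ run over a basis of $\mathcal S(\mathbb R^n)$ yields a parallel frame of $\mathcal S(M)|_\Sigma$, so $\Sigma$ is flat.
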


Under an additional umbilicity assumption, this foliation can be promoted to an embedding result.

\begin{theorem}\label{thm rigidity 2}
Assume, in addition to the hypotheses of Theorem~\ref{thm rigidity 1}, that
\[
k^\alpha = f^\alpha g \qquad \text{for } \alpha=1,\dots,m,
\]
for some functions $f^\alpha$.
Then $(M,g,k^1,\dots,k^m)$ admits an isometric embedding into a generalized pp-wave $(\overline M^{n+m},\overline g)$ whose second fundamental form is given by $(k^1,\dots,k^m)$.
Moreover, the normal bundle is trivial.
\end{theorem}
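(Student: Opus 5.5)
We deduce Theorem~\ref{thm rigidity 2} from the rigidity analysis behind Theorem~\ref{thm rigidity 1}, using the umbilicity assumption to make the structure explicit. The first step is to revisit equality in the Witten argument for Theorem~\ref{thm main}. When $E=\|\mathcal P\|_{\mathrm{tr}}$, the Witten spinor $\psi$ is parallel with respect to the modified connection
\[
\widetilde\nabla_X\psi=\nabla_X\psi+\tfrac12\sum_\alpha k^\alpha(X)\cdot e_\alpha\cdot\psi .
\]
Here $e_\alpha$ act as Clifford generators for the $m$ timelike directions, and $\mu=\|\mathcal J\|_{\mathrm{tr}}$ holds pointwise. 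From $\psi$ we build the Dirac-current data: a lapse-type function $u=|\psi|^2$, vector fields $Y^\alpha$, and functions $u^\alpha$ encoding the normal components. These satisfy the generalized Killing initial data equations
\[
\nabla_i Y_j^\alpha = \cdots,\qquad \nabla u=\sum_\alpha u^\alpha k^\alpha(\cdot)+\cdots .
\]
Saturation of the trace-norm inequality forces the null condition $u^2=\sum$ (appropriate norms of $Y$) in the trace-norm sense. This is exactly what produced the flat codimension-$m$ foliation $\Sigma$ in Theorem~\ref{thm rigidity 1}: the leaves are level sets of $m$ functions whose gradients are the $Y^\alpha$.

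Next we substitute $k^\alpha=f^\alpha g$. Then the $k^\alpha$ commute automatically, and the KID equations simplify. The $Y^\alpha$ become closed and conformal-gradient-like, $\nabla Y^\alpha = (\text{function})\,g + (\text{terms in } df)$. The constraint quantities take the form
\[
\mu=\tfrac12\bigl(R+n(n-1)\textstyle\sum_\alpha (f^\alpha)^2\bigr),\qquad J^\alpha=(1-n)\,df^\alpha .
\]
We then make the ansatz that $\overline M=M\times\mathbb R^m$, with coordinates $t^1,\dots,t^m$ and metric
\[
\overline g=-\sum_\alpha (dt^\alpha)^2+\text{cross terms }2\,dt^\alpha\,d x^{i}\ldots+\bigl(g\text{-part}\bigr),
\]
of generalized pp-wave type. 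Concretely, we write $g=H\,(dx^1)^2+\ldots$ in coordinates adapted to the foliation, namely $m$ null-type coordinates transverse to the flat leaves $\Sigma\cong\mathbb R^{n-m}$. The embedding is the slice $\{t=\phi(x)\}$, where the graph functions $\phi^\alpha$ are chosen so that the induced metric is $g$ and the second fundamental form in direction $\alpha$ is $f^\alpha g$. Umbilicity is what makes this solvable: a Hessian equation $\nabla^2\phi^\alpha\sim f^\alpha g$ restricted to flat leaves has the explicit solutions $\phi^\alpha$ quadratic in the flat coordinates. The profile function $H$ of the pp-wave is then read off from $g$ along the transverse directions. The normal bundle is trivialized by the global frame $\partial_{t^\alpha}$ (orthonormalized), because $\psi$ gives a global frame of normal currents. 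We then check Gauss–Codazzi: Codazzi holds because $J^\alpha$ is determined by $df^\alpha$, and the Ricci equation (normal curvature) vanishes because the $k^\alpha$ commute.

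The main obstacle is showing that the metric $g$ actually has the product-with-flat-leaves form with a single profile function. This requires that the leaf-orthogonal distribution be integrable and that $g$ restricted to it be determined by the currents. I would first try to prove that the $m$ one-forms $Y^\alpha$ are closed and that $\nabla Y^\alpha$ vanishes along $\Sigma$, using umbilicity to kill the trace-free parts. Coordinates $x^\alpha$ with $dx^\alpha\propto Y^\alpha$ then exist, and the remaining metric components follow from flatness of the leaves.
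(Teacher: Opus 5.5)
There is a genuine gap: you never actually construct the ambient manifold, and the steps you propose toward your normal form for $g$ are false. The ambient metric is only described as ``$-\sum_\alpha(dt^\alpha)^2$ plus cross terms plus a $g$-part'', with a profile $H$ to be read off from a normal form of $g$. You never exhibit a nontrivial parallel null spinor on it, and that is exactly what being a generalized pp-wave means here. Checking Gauss--Codazzi--Ricci on a slice is only a necessary condition for an embedding and says nothing about ambient parallel spinors. Also, commutativity of the $k^\alpha$ only gives $R^\perp_{ij\alpha\beta}=\overline R_{ij\alpha\beta}$ (cf.\ Remark~\ref{remark geometric meaning of commuting kalpha}), not a flat normal connection.

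Your plan for the ``main obstacle'' fails as well. By Lemma~\ref{lemma NXw} and Theorem~\ref{Nomega X}, for $k^\beta=f^\beta g$,
\[
\nabla_iX^\alpha_j=-Nf^\alpha g_{ij}-N^{-1}\sum_{\beta}f^\beta\bigl(X^\alpha_jX^\beta_i-X^\alpha_iX^\beta_j\bigr).
\]
So $dX^\alpha\neq0$ as soon as some $f^\beta$ with $\beta\neq\alpha$ is nonzero. Hence the $X^\alpha$ are not gradients, and the leaves are not their joint level sets. Along the leaves, $\nabla_AX^\alpha_B=-Nf^\alpha g_{AB}\neq0$: $\Sigma$ is umbilic but not totally geodesic. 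The ``quadratic'' solutions of $\nabla^2\phi^\alpha\sim f^\alpha g$ likewise presuppose that $f^\alpha$ is constant on leaves and that $\Sigma\cong\mathbb R^{n-m}$ globally, and you have established neither.

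The missing idea is that no normal form for $g$ is needed: the lapse $u=N$ and shifts $Y^\alpha=X^\alpha$ you already introduced define the ambient space directly. On $\overline M=M\times\mathbb R^m$ take the multiple Killing development
\[
\overline g=-N^2\sum_{\alpha=1}^m d\tau_\alpha^2+g_{ij}\bigl(dx_i+X^\alpha_i d\tau_\alpha\bigr)\bigl(dx_j+X^\beta_j d\tau_\beta\bigr),
\]
with $N$ and $X^\alpha$ coming from the null spinor of Theorem~\ref{N=X}. Then $M=\{\tau=0\}$ is isometrically embedded, and the unit normals $\eta_\alpha=N^{-1}(\partial_{\tau_\alpha}-X^\alpha)$ trivialize the normal bundle globally. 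The second fundamental form is $-\tfrac12N^{-1}(\nabla_iX^\alpha_j+\nabla_jX^\alpha_i)=k^\alpha_{ij}$.

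This is precisely where umbilicity enters. The term $k^\beta_{il}\omega^{\alpha\beta}_{lj}$ equals $f^\beta\omega^{\alpha\beta}_{ij}$, which is antisymmetric and drops out under symmetrization; for general commuting $k^\beta$ this does not happen. Finally, extend $\psi$ by $\partial_{\tau_\alpha}\overline\psi=0$ and compute the connection coefficients of $\overline g$ with the spinorial connection formula; this gives $\overline\nabla\overline\psi=0$. That parallel null spinor is what makes $\overline M$ a generalized pp-wave, and your proposal has no substitute for it.
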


Here, a pseudo-Riemannian spin manifold $(\overline M^{n+m},\overline g)$ is called a \emph{generalized pp-wave} if it admits a nontrivial parallel null spinor $\overline\psi\in\mathcal S(\overline M^{n+m})$. 
This reduces to a standard pp-wave in case $m=1$.

 Theorem \ref{thm main} is proven via a Witten-type divergence formula. 
In the case of equality, this yields a spinor $\psi\in\overline{\mathcal S}(M)$ solving
\begin{align}\label{psi overdetermined intro}
    \nabla_i\psi=-\frac12\sum_{\alpha=1}^m k^\alpha_{ij}e_j\eta_\alpha\psi.
\end{align}
The geometric properties of a manifold strongly depend on the causal character of its underlying spinor.
A consequence of our techniques, which may be of independent interest, is the following (see Theorem \ref{N=X} and Section \ref{section spinorial symmetries}).

\begin{theorem}\label{main technical}
Let $\psi$ be as in \eqref{psi overdetermined intro}, and let $\mathcal{M} = \{1, \dots, m\}$.
Then either $\psi$ is everywhere null or $\psi$ is everywhere timelike.
In the latter case, the spinor
\begin{align*}
    \varphi=\eta_\mathcal{M} \sum_{|I|=|\Gamma|} (-1)^{|\Gamma|}\langle e_I \eta_\Gamma\psi,\psi\rangle e_I \eta_\Gamma\psi
\end{align*}
also solves 
\begin{align*}
    \nabla_i\varphi+\frac12\sum_{\alpha=1}^mk^\alpha_{ij}e_j\eta_\alpha\varphi=0.
\end{align*}
\end{theorem}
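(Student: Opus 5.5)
The plan is to reduce the statement to a transformation property of the cubic expression
\[
\chi:=\sum_{|I|=|\Gamma|}(-1)^{|\Gamma|}\langle e_I\eta_\Gamma\psi,\psi\rangle\, e_I\eta_\Gamma\psi ,
\]
so that $\varphi=\eta_{\mathcal M}\chi$. Write $A_i:=\frac12\sum_\alpha k^\alpha_{ij}e_j\eta_\alpha$, so that \eqref{psi overdetermined intro} reads $\nabla_i\psi=-A_i\psi$. The goal is $\nabla_i\varphi=-A_i\varphi$, which is the identity $\nabla_i\varphi+\frac12\sum_{\alpha}k^\alpha_{ij}e_j\eta_\alpha\varphi=0$ as stated.

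\textbf{Step 1: the role of $\eta_{\mathcal M}$.} Since the $\eta_\beta$ are parallel along $M$, we have $\nabla_i(\eta_{\mathcal M}\chi)=\eta_{\mathcal M}\nabla_i\chi$. Moreover $\eta_{\mathcal M}$ anticommutes with each $e_j\eta_\alpha$:
\begin{itemize}
\item moving $e_j$ past the $m$ factors of $\eta_{\mathcal M}$ gives the sign $(-1)^m$;
\item moving $\eta_\alpha$ past the remaining $m-1$ factors gives $(-1)^{m-1}$.
\end{itemize}
The total sign is $-1$, so $\eta_{\mathcal M}A_i=-A_i\eta_{\mathcal M}$. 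Hence it suffices to show that $\chi$ satisfies the equation with the \emph{opposite} sign,
\[
\nabla_i\chi=+A_i\chi .
\]
Then $\nabla_i\varphi=\eta_{\mathcal M}A_i\chi=-A_i\eta_{\mathcal M}\chi=-A_i\varphi$, which is exactly the claimed equation. This accounts for the factor $\eta_{\mathcal M}$ in the definition: it converts the sign flip produced by the bilinear coefficients into the correct sign.

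\textbf{Step 2: differentiate $\chi$.} By metric compatibility,
\[
\partial_i\langle e_I\eta_\Gamma\psi,\psi\rangle=-\langle e_I\eta_\Gamma A_i\psi,\psi\rangle-\langle e_I\eta_\Gamma\psi,A_i\psi\rangle .
\]
Moving $A_i$ across the pairing uses the adjointness of $e_j$ and $\eta_\alpha$ with respect to $\langle\cdot,\cdot\rangle$, which is fixed in the paper's conventions; in particular $e_j\eta_\alpha$ is self- or skew-adjoint. Doing so yields $\langle [e_I\eta_\Gamma,A_i]_\pm\psi,\psi\rangle$, where the bracket is a commutator or anticommutator depending on that adjointness. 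The second part of $\nabla_i\chi$ is $-\sum(-1)^{|\Gamma|}\langle\cdots\rangle e_I\eta_\Gamma A_i\psi$. Next, write
\[
e_I\eta_\Gamma e_j\eta_\alpha=\pm e_j\eta_\alpha e_I\eta_\Gamma+(\text{terms with } I\triangle\{j\},\ \Gamma\triangle\{\alpha\}).
\]
Re-indexing those terms, each product $e_I\eta_\Gamma$ with $j\in I$ or $\alpha\in\Gamma$ is traded for $e_{I\triangle j}\eta_{\Gamma\triangle\alpha}$. The bivector $e_j\eta_\alpha$ changes $|I|$ and $|\Gamma|$ simultaneously by $\pm1$ each. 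Therefore it either preserves $|I|-|\Gamma|$ or shifts it by $\pm2$. The pieces preserving $|I|=|\Gamma|$ should recombine into $+A_i\chi$, because of the sign $(-1)^{|\Gamma|}$ together with the anticommutation signs.

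\textbf{Step 3 (main obstacle): the leftover terms.} The terms with $|I|-|\Gamma|=\pm2$ leave the sum over $|I|=|\Gamma|$, and these must cancel. Purely algebraically this is a Fierz-type identity: the full sum over all $(I,\Gamma)$ is $\mathrm{Spin}(n,m)$-equivariant by completeness of the Clifford basis, but the restricted sum is not obviously equivariant. My first attempt is to pair each boundary term $(I,\Gamma)$ with its partner $(I\triangle j,\Gamma\triangle\alpha)$. I would then show that the two contributions cancel, using
\begin{itemize}
\item the reality or symmetry of $\langle e_I\eta_\Gamma\psi,\psi\rangle$, which forces certain degrees of $|I|+|\Gamma|$ mod $4$ to vanish;
\item the symmetry $k^\alpha_{ij}=k^\alpha_{ji}$ together with the pairwise commutativity of the $k^\alpha$.
\end{itemize}
Commutativity should be needed precisely to kill the terms that involve two distinct time indices.

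If this purely algebraic cancellation fails, I would use the hypothesis that $\psi$ is timelike. By the dichotomy established in Theorem~\ref{N=X}, in that case the Fierz identities express the off-degree bilinears in terms of the diagonal ones, and this is what should close the computation.
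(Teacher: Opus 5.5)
Your Step 1 is correct and matches the paper. Since $\eta_{\mathcal M}$ anticommutes with every $e_j\eta_\alpha$, it suffices to show $\nabla_i\chi=+A_i\chi$. The gap is in Steps 2--3, which is where the entire proof lives.

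You leave open whether $e_j\eta_\alpha$ is self- or skew-adjoint, but this is decisive. The operator $e_j$ is skew-adjoint, $\eta_\alpha$ is self-adjoint, and they anticommute, so $(e_j\eta_\alpha)^*=-\eta_\alpha e_j=e_j\eta_\alpha$. Hence
\[
\nabla_i\langle e_I\eta_\Gamma\psi,\psi\rangle=-\tfrac12 k^\alpha_{ij}\bigl\langle\{e_I\eta_\Gamma,e_j\eta_\alpha\}\psi,\psi\bigr\rangle .
\]
Now $e_j\eta_\alpha$ commutes with $e_I\eta_\Gamma$ when $j\in I,\alpha\in\Gamma$ or when $j\notin I,\alpha\notin\Gamma$, and anticommutes otherwise. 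The anticommuting case is exactly the one in which $(I_j,\Gamma_\alpha)$ would leave the diagonal $|I|=|\Gamma|$, and there the anticommutator vanishes.

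So the terms with $|I|-|\Gamma|=\pm2$ that you single out as the main obstacle never arise. No Fierz identity, no commutativity of the $k^\alpha$, and no timelike hypothesis is needed. The identity holds for every solution $\psi$; in the null case it is merely uninteresting, since $\varphi\equiv0$ there. This is the third identity of Lemma~\ref{lemma NXw}, $\nabla_i\omega^\Gamma_I=-k^\alpha_{ij}e_j\eta_\alpha\omega^{\Gamma_\alpha}_{I_j}$.

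Your proposed remedy would not have worked anyway. For diagonal $(I,\Gamma)$ and a mixed pair $(j,\alpha)$, the partner $(I\triangle\{j\},\Gamma\triangle\{\alpha\})$ is off-diagonal, so there is nothing in the sum to pair it with. Also, $e_I\eta_\Gamma e_j\eta_\alpha$ is a single Clifford monomial, equal to $\pm e_j\eta_\alpha e_I\eta_\Gamma$ and to $\pm e_{I_j}\eta_{\Gamma_\alpha}$; there are no additional terms.

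What remains is the sign count you only assert ("should recombine").
\begin{itemize}
\item For fixed commuting $(j,\alpha)$, reindex the coefficient-derivative terms by the involution $(I,\Gamma)\mapsto(I_j,\Gamma_\alpha)$. Since $|\Gamma_\alpha|=|\Gamma|\pm1$, this flips $(-1)^{|\Gamma|}$, and these terms become $+\sum(-1)^{|\Gamma|}k^\alpha_{ij}e_j\eta_\alpha\omega^\Gamma_I\psi$ over commuting pairs.
\item The other piece, $-\frac12\sum(-1)^{|\Gamma|}k^\alpha_{ij}\omega^\Gamma_I e_j\eta_\alpha\psi$, needs no reindexing. Moving $e_j\eta_\alpha$ to the left gives coefficient $-\frac12$ on commuting pairs and $+\frac12$ on anticommuting pairs.
\item Every pair therefore ends up with coefficient $\frac12$, so $\nabla_i\chi=A_i\chi$, and your Step 1 finishes the argument.
\end{itemize}
This is exactly the paper's proof of Theorem~\ref{ODE varphi}.

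Finally, you do not address the first assertion (the null/timelike dichotomy) beyond a passing citation. In the paper it is read off from the Gronwall-type inequality $|\nabla Q|\le CQ$ in the proof of Theorem~\ref{N=X}.
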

Given a spinor $\psi$, let $N=|\psi|^2$ and $X^\alpha_i(\psi)=\langle e_i\eta_\alpha\psi,\psi\rangle$ (see Definition \ref{definition N and X}).
We say $\psi$ is null if $N=|X^\alpha|$ for all $\alpha$, and say $\psi$ is timelike otherwise.

To prove Theorem \ref{thm rigidity 1}, we follow the strategy of \cite{HZ24, hirsch2025causal}. We show that there always exists a null spinor $\psi$ saturating the inequality of Theorem \ref{thm main} and then use Theorem \ref{main technical} to show that the null condition is preserved throughout $M$.
To prove the latter in the classical setting ($m=1$) in \cite{HZ24}, we consider the function $N$ and the vector field $X$ as above and show that $\nabla_i(N^2-|X|^2)=0$.
With $m$ time dimensions, we need to extract $2^m$ different differential forms out of the spinor and study systems of differential equations.
More precisely, we are able to show that
\begin{align*}
    |\nabla_i F|\le CF
\end{align*}
for some constant $C>0$, where
\begin{align*}
    F=\sum_{|I|=|\Gamma|}|N^{|I|-1}\omega_I^\Gamma-\widetilde{X}^\Gamma_I|^2
\end{align*}
for multi-indices $I,\Gamma$.
Here,
\begin{align*}
 X^{\sigma(\Gamma)}_{I}=X^{\sigma_1}_{i_1}\cdots X^{\sigma_{\ell}}_{i_{\ell}},\qquad    \widetilde{X}^{\Gamma}_I=(-1)^{\frac{\ell(\ell-1)}{2}}\sum_{\sigma\in S_\Gamma} X_{I}^{\sigma(\Gamma)}e_{I}\eta_{\sigma(\Gamma)},\qquad \omega^\Gamma_I=\langle e_I\eta_\Gamma \psi,\psi\rangle e_I\eta_\Gamma,
\end{align*}
where  $S_\Gamma$  is the permutation group of $\Gamma$.

Finally, to prove Theorem \ref{thm rigidity 2} we use multiple Killing developments.

\noindent \textbf{Acknowledgments:}  
SH and AP would like to thank the Banff International Research Station and the Chennai Mathematical Institute for supporting the Connections among Spin Geometry, Minimal Surfaces, and Relativity workshop in January 2026, where part of this work was carried out. They are also grateful to the Lonavala Geometry Festival for providing ideal working conditions in which part of this work was completed. YZ was partially supported by NSFC grant No.\ 12501070 and the startup fund from BIMSA.
The authors would also like to thank Hubert Bray, Marcus Khuri and Rudolf Zeidler for helpful discussions and their interest in this work.

\section{Preliminaries}

\subsection{Asymptotically flat initial data sets}\label{sec IDS}

\begin{definition}
We say that $(M^n,g,k^1,\dots,k^m)$ is an \emph{asymptotically flat initial data set with decay rate}
$q>\frac{n-2}{2}$ if, for each $\alpha=1,\dots,m$, the pair $(g,k^\alpha)$ is asymptotically flat with decay rate $q$ in the standard sense.
\end{definition}

\begin{definition}
Given an initial data set $(M^n,g,k^1,\dots,k^m)$, we define the \emph{energy density} $\mu$ and the
\emph{momentum densities} $J^\alpha$, $\alpha=1,\dots,m$, by
\begin{align*}
\mu
&= \frac{1}{2}\left[R_g + \sum_{\alpha=1}^m\bigl(\tr_g(k^\alpha)^2 - |k^\alpha|^2\bigr)\right],\\
J^\alpha
&= \Div_g\!\left(k^\alpha - \tr_g(k^\alpha)\, g\right),
\end{align*}
where $R_g$ denotes the scalar curvature of $g$.
\end{definition}

\begin{definition}
Let $(M^n,g,k^1,\dots,k^m)$ be an asymptotically flat initial data set.
The \emph{energy} $E$ and the \emph{momentum vectors} $P^\alpha$, $\alpha=1,\dots,m$, are defined by
\begin{align*}
E
&=\frac{1}{2(n-1)\omega_{n-1}}
\lim_{r\to\infty}
\int_{S_r}\bigl(g_{ij,i}-g_{ii,j}\bigr)\,\nu^j,\\
P^\alpha_i
&=\frac{1}{(n-1)\omega_{n-1}}
\lim_{r\to\infty}
\int_{S_r}\bigl(k^\alpha_{ij}-\tr_g(k^\alpha)g_{ij}\bigr)\,\nu^j,
\end{align*}
where $S_r$ denotes the coordinate sphere of radius $r$ in the asymptotically flat end
and $\nu$ its outward unit normal.
\end{definition}

As in the classical case $m=1$, the above decay assumptions ensure that $E$ and each $P^\alpha$ are well-defined and independent of the choice of asymptotically flat coordinates. For the initial data set $(M^n, g, k^1, 0, \dots, 0)$, the energy density $\mu$, energy $E$, momentum density $J^1$, and momentum vector $P^1$ all coincide with their classical counterparts for the initial data set $(M^n, g, k^1)$. 

\begin{remark}\label{remark on definitions} Our definitions in this section do not come from an analysis of the Einstein field equations in multiple time dimensions. Instead, we have generalized the classical definitions from the initial data set formulation in a way that yields the most natural mathematical results.
\end{remark}

\subsection{Spin geometry}

Let $m\in\mathbb N$ and let $(M^n,g,k^1,\dots,k^m)$ be a spin initial data set with symmetric
$(0,2)$-tensors $k^1,\dots,k^m$.
We denote by $\mathcal S(M^n)$ the spinor bundle of $(M^n,g)$ and set
\[
\overline{\mathcal S}(M^n)=\mathcal S(M^n)^{2^m}.
\]

Let $\{e_1,\dots,e_n,\eta_1,\dots,\eta_m\}$ be an orthonormal basis of $\mathbb{R}^{n,m}$. We construct a Clifford action of $\mathbb R^{n,m}
=\operatorname{span}\{e_1,\dots,e_n,\eta_1,\dots,\eta_m\}$
on $\overline{\mathcal S}(M^n)$.
Proceeding inductively, for
\[
(\psi_1,\psi_2)\in \mathcal S(M^n)^{2^\alpha}
=\mathcal S(M^n)^{2^{\alpha-1}}\oplus \mathcal S(M^n)^{2^{\alpha-1}},
\]
we define
\begin{align*}
e_i(\psi_1,\psi_2)
&=(e_i\psi_1,-e_i\psi_2),
&&1\le i\le n,\\
\eta_\beta(\psi_1,\psi_2)
&=(\eta_\beta\psi_1,-\eta_\beta\psi_2),
&&1\le \beta\le \alpha-1,\\
\eta_\alpha(\psi_1,\psi_2)
&=(\psi_2,\psi_1).
\end{align*}
A direct computation shows that these operators satisfy the Clifford relations
\begin{align*}
e_ie_j+e_je_i&=-2\delta_{ij},\\
\eta_\alpha\eta_\beta+\eta_\beta\eta_\alpha&=2\delta_{\alpha\beta},\\
e_i\eta_\alpha+\eta_\alpha e_i&=0,
\end{align*}
corresponding to signature $(n,m)$.
The bundle $\overline{\mathcal S}$ can also be interpreted as $\mathcal S$ twisted with a flat bundle as in \cite{WangXieYu2024CubeInequality}.

The bundle $\overline{\mathcal S}(M^n)$ carries a natural Hermitian inner product,
defined inductively by
\[
\langle(\psi_1,\psi_2),(\phi_1,\phi_2)\rangle
=\langle\psi_1,\phi_1\rangle+\langle\psi_2,\phi_2\rangle.
\]
With respect to this inner product, Clifford multiplication satisfies
\begin{align*}
\langle e_i\psi,\phi\rangle&=-\langle\psi,e_i\phi\rangle,\\
\langle \eta_\alpha\psi,\phi\rangle&=\langle\psi,\eta_\alpha\phi\rangle,
\end{align*}
for all $\psi,\phi\in\overline{\mathcal S}(M^n)$.

\begin{definition}\label{definition N and X}
Given a spin initial data set $(M^n,g,k^1,\dots,k^m)$ and a spinor
$\psi\in\overline{\mathcal S}(M^n)$, we define the associated scalar field
\begin{align*}
N(\psi)&=|\psi|^2.
\end{align*}
We also define the associated vector fields $X^{\alpha}(\psi)$ on $M$ by
\begin{align*}
\langle X^{\alpha}(\psi), Y\rangle &=\langle Y \eta_\alpha\psi,\psi\rangle,
\end{align*}
where $Y$ is an arbitrary vector field on $M$.  In particular, for a coordinate vector $e_i$, 
\begin{align*}
    X^{\alpha}_i(\psi) = \langle e_i \eta_{\alpha} \psi, \psi\rangle.
\end{align*}
For brevity, we denote $N(\psi)$ and $X^\alpha(\psi)$ by $N$ and $X^\alpha$.
We further introduce the modified spin connection and Dirac operator
\begin{align*}
\overline{\nabla}_i\psi
&=\nabla_i\psi+\frac12\sum_{\alpha=1}^m k^\alpha_{ij}e_j\eta_\alpha\psi,\\
\overline{\slashed D}\psi
&=e_i\overline{\nabla}_i\psi.
\end{align*}
\end{definition}

\begin{definition}
A spinor $\psi\in\overline{\mathcal S}(M^n)$ is called a \emph{null spinor} if,
for every choice of unit timelike vector $\eta^\alpha$,
the associated pair $(N(\psi),X^\alpha(\psi))\in\mathbb R^{n,1}$ is null, i.e.
\[
N(\psi)=|X^\alpha(\psi)|.
\]
\end{definition}

We will also use multi-index notation.
Let $I=\{i_1,\dots,i_\ell\}\subset\{1,\dots,n\}$ and
$\Gamma=\{\alpha_1,\dots,\alpha_k\}\subset\mathcal M:=\{1,\dots,m\}$,
with elements listed in increasing order.
We write
\[
e_I=e_{i_1}\cdots e_{i_\ell},\qquad
\eta_\Gamma=\eta_{\alpha_1}\cdots \eta_{\alpha_k}.
\]
Further Clifford algebra-valued forms built from these multi-indices will be introduced
as needed in later sections.

\subsection{Trace Norm and Dominant Energy Condition}\label{sec trace norm}

Let $\mathcal P$ be an $m\times n$ matrix, and let its singular values be given by
\[
\sigma_1(\mathcal P)\ge \sigma_2(\mathcal P)\ge \dots \ge \sigma_{\min(m,n)}(\mathcal P)\ge 0.
\]
The \emph{trace norm} of $P$ is the sum of the singular values of $P$, i.e.\
\[
\|\mathcal P\|_{\mathrm{tr}}=\sum_{i=1}^{\min(m,n)}\sigma_i(\mathcal P).
\]
This norm arises naturally from the singular value decomposition:
there exist an $m\times m$ orthogonal matrix $U$, an $n\times n$ orthogonal matrix $V$,
and an $m\times n$ rectangular diagonal matrix $A$ such that
\[
\mathcal P=U^TAV,
\]
where the diagonal entries of $A$ are precisely the singular values $\sigma_i(\mathcal P)$.
The trace norm is therefore the sum of the diagonal entries of $A$.

Using the trace norm, we can formulate our dominant energy condition with extra time dimensions.

\begin{definition}
    We say that an initial data set $(M^n,g,k^1,\dots,k^m)$ satisfies the dominant energy condition if 
    \begin{align*}
        \mu \geq \|\mathcal J\|_{\mathrm{tr}},
    \end{align*}
where $\|\mathcal J\|_{\mathrm{tr}}=\|(J^1,\dots,J^m)\|_{\mathrm{tr}}$ denotes the pointwise trace norm of the $m \times n$ matrix $\mathcal{J}$ with rows $J^1, \dots, J^m$.
\end{definition}

In $F$-theory, the additional dimensions are typically assumed to be small. From this perspective, we note that our dominant energy condition approximates the classical one when the extra $k^{\alpha}$ are small in $C^1$.

\section{The positive mass inequality}

\begin{lemma}[Divergence identity]\label{lemma divergence identity}
Let $(M^n,g,k^1,\dots,k^m)$ be a spin initial data set, and assume that the tensors
$k^\alpha$, $\alpha=1,\dots,m$, pairwise commute as $g$-self-adjoint endomorphisms.
Then, for any spinor $\psi\in\overline{\mathcal S}(M^n)$,
\begin{align*}
\nabla_i\Bigl(
&\langle e_i\slashed D\psi+\nabla_i\psi,\psi\rangle
+\sum_{\alpha=1}^m (k^\alpha_{ij}-\tr_g(k^\alpha)g_{ij})
   \langle \eta_\alpha e_j\psi,\psi\rangle
\Bigr) \\
={}&
\Bigl|\nabla_i\psi+\frac12\sum_{\alpha=1}^m k^\alpha_{ij}e_j\eta_\alpha\psi\Bigr|^2
-\Bigl|\slashed D\psi-\frac12\sum_{\alpha=1}^m \tr_g(k^\alpha)\eta_\alpha\psi\Bigr|^2 \\
&\quad
+\frac14\Bigl(
R_g+\sum_{\alpha=1}^m\bigl(\tr_g(k^\alpha)^2-|k^\alpha|^2\bigr)
\Bigr)|\psi|^2
+\frac12\sum_{\alpha=1}^m
\bigl\langle \Div_g(k^\alpha-\tr_g(k^\alpha)g)\,\eta_\alpha\psi,\psi\bigr\rangle .
\end{align*}
\end{lemma}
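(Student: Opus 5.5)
We will prove the identity by reducing it to the classical Lichnerowicz--Witten formula and then expanding the squares on the right-hand side term by term. We work pointwise in a normal frame at a point, so that $\nabla_i e_j=0$ there and $g_{ij}=\delta_{ij}$.

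\textbf{Step 1: the Lichnerowicz--Witten part.} The classical Lichnerowicz formula $\slashed D^2=\nabla^*\nabla+\frac{R_g}{4}$ continues to hold on $\overline{\mathcal S}(M^n)$. Indeed, $\overline{\mathcal S}(M^n)$ is a direct sum of copies of $\mathcal S(M^n)$ (equivalently, $\mathcal S$ twisted by a flat bundle), with $e_i$ acting by $\pm e_i$ on each summand; since the sign is uniform on a summand, $\slashed D$ is still a Dirac operator with flat twisting curvature. This yields
\begin{align*}
\nabla_i\langle e_i\slashed D\psi+\nabla_i\psi,\psi\rangle=|\nabla\psi|^2-|\slashed D\psi|^2+\frac{R_g}{4}|\psi|^2 .
\end{align*}
Here the skew-adjointness of $e_i$ and the fact that $\langle\cdot,\cdot\rangle$ is real (or we take real parts) are used exactly as in Witten's argument.

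\textbf{Step 2: the $k$-terms in the divergence.} Differentiating the second term gives
\begin{align*}
\nabla_i\bigl((k^\alpha_{ij}-\tr k^\alpha\delta_{ij})\langle\eta_\alpha e_j\psi,\psi\rangle\bigr).
\end{align*}
When the derivative falls on the coefficient, we obtain $\langle J^\alpha\eta_\alpha\psi,\psi\rangle$ up to the sign coming from $\eta_\alpha e_j=-e_j\eta_\alpha$; this produces the $\frac12\langle \Div_g(\cdot)\eta_\alpha\psi,\psi\rangle$ term, with the $\tfrac12$ and the sign fixed by the normalization. When the derivative falls on $\psi$, we obtain $2\,\mathrm{Re}\langle \eta_\alpha e_j\nabla_i\psi,\psi\rangle$-type terms. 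These should be rewritten using $\nabla_i\psi$ and $\slashed D\psi$ so that they are exactly the cross terms arising in the expansion of the two squares.

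\textbf{Step 3: expanding the squares.} Expanding $|\nabla_i\psi+\frac12 k^\alpha_{ij}e_j\eta_\alpha\psi|^2$ gives $|\nabla\psi|^2$, plus a cross term $\mathrm{Re}\langle\nabla_i\psi,k^\alpha_{ij}e_j\eta_\alpha\psi\rangle$, plus the quadratic term $\frac14\sum_{\alpha,\beta}k^\alpha_{ij}k^\beta_{il}\langle e_j\eta_\alpha\psi,e_l\eta_\beta\psi\rangle$. Similarly, $|\slashed D\psi-\frac12\tr k^\alpha\eta_\alpha\psi|^2$ gives $|\slashed D\psi|^2$, a cross term, and $\frac14\sum_{\alpha,\beta}\tr k^\alpha\tr k^\beta\langle\eta_\alpha\psi,\eta_\beta\psi\rangle$. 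We then check that the cross terms match Step 2, using the adjointness rules $\langle e_i\psi,\phi\rangle=-\langle\psi,e_i\phi\rangle$ and $\langle\eta_\alpha\psi,\phi\rangle=\langle\psi,\eta_\alpha\phi\rangle$ together with the symmetry of $k^\alpha$.

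\textbf{Main obstacle: the quadratic terms.} The quadratic terms must reduce to $\frac14\sum_\alpha(|k^\alpha|^2-(\tr k^\alpha)^2)|\psi|^2$, which is what is needed to cancel against $\mu$. The diagonal terms $\alpha=\beta$ do this as in the classical case, since $\eta_\alpha^2=1$. For the off-diagonal terms $\alpha\neq\beta$, the combination $\langle e_j\eta_\alpha\psi,e_l\eta_\beta\psi\rangle=\langle\eta_\alpha e_je_l\eta_\beta\psi,\psi\rangle$ becomes, after symmetrizing in $(\alpha,\beta)$, proportional to $[k^\alpha,k^\beta]_{jl}\langle e_je_l\eta_\alpha\eta_\beta\psi,\psi\rangle$, together with a $\delta_{jl}$ part. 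The $\delta_{jl}$ part yields $\tr(k^\alpha k^\beta)\langle\eta_\alpha\eta_\beta\psi,\psi\rangle$, which vanishes after symmetrization because $\eta_\alpha\eta_\beta$ is antisymmetric in $(\alpha,\beta)$ and self-adjoint-skew. The analogous trace term in the $\slashed D$ square, $\tr k^\alpha\tr k^\beta\langle\eta_\alpha\eta_\beta\psi,\psi\rangle$, vanishes by the same antisymmetry. The commutator term vanishes precisely by the commutation hypothesis. This bookkeeping of signs is where care is needed, and it is exactly where commutativity enters.
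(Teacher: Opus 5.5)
Your strategy is the paper's: the Lichnerowicz formula on $\overline{\mathcal S}(M^n)$ (valid because it is $\mathcal S(M^n)$ twisted by a flat bundle), expansion of the two squares, and elimination of the $\alpha\neq\beta$ terms using the antisymmetry of $\eta_\alpha\eta_\beta$ and $[k^\alpha,k^\beta]=0$. Your split of $e_je_l$ into $-\delta_{jl}$ plus its antisymmetric part does the same job as the $j\leftrightarrow l$ relabeling in \eqref{eq anticommutator}. (Minor slip: for $\alpha\neq\beta$ one has $\langle e_j\eta_\alpha\psi,e_l\eta_\beta\psi\rangle=\langle e_le_j\eta_\alpha\eta_\beta\psi,\psi\rangle$, not $\langle\eta_\alpha e_je_l\eta_\beta\psi,\psi\rangle$. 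This only flips the sign of the commutator piece and is harmless.)

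The genuine gap is Step 2. You never compute it; you settle it by declaring the $\tfrac12$ and the sign ``fixed by the normalization.'' Nothing is left to normalize, because the coefficient of the boundary term is part of the statement, and with that coefficient the bookkeeping does not close. Write $T^\alpha_{ij}=k^\alpha_{ij}-\tr_g(k^\alpha)g_{ij}$ and use that $\eta_\alpha e_j=-e_j\eta_\alpha$ is self-adjoint. Then
\begin{align*}
\nabla_i\Bigl(\sum_{\alpha}T^\alpha_{ij}\langle\eta_\alpha e_j\psi,\psi\rangle\Bigr)
=-\sum_{\alpha}\langle J^\alpha\eta_\alpha\psi,\psi\rangle
-2\sum_{\alpha}T^\alpha_{ij}\operatorname{Re}\langle e_j\eta_\alpha\nabla_i\psi,\psi\rangle .
\end{align*}
On the other side, the two squares contribute the cross terms $+\sum_\alpha T^\alpha_{ij}\operatorname{Re}\langle e_j\eta_\alpha\nabla_i\psi,\psi\rangle$, and the statement has $+\tfrac12\sum_\alpha\langle J^\alpha\eta_\alpha\psi,\psi\rangle$. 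Both are off by the same factor $-2$. So the identity holds if and only if the boundary term carries the coefficient $-\tfrac12$, i.e.\ the current is
$\langle\overline\nabla_i\psi+e_i\overline{\slashed D}\psi,\psi\rangle=\langle e_i\slashed D\psi+\nabla_i\psi,\psi\rangle+\tfrac12\sum_\alpha T^\alpha_{ij}\langle e_j\eta_\alpha\psi,\psi\rangle$, exactly as in the classical Witten identity.

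A concrete check: take $g$ flat, $m=1$, $k=fg$ and $\psi$ constant. The printed left side equals $(n-1)\nabla f\cdot X$, while the right side equals $-\tfrac{n-1}{2}\nabla f\cdot X$, so the printed version fails whenever $\nabla f\cdot X\neq0$. The printed coefficient is therefore a misprint. The paper's proof also skips this step with ``combining the above identities.'' The value $\tfrac12(n-1)\omega_{n-1}\langle P^\alpha,X^\alpha(\psi^\infty)\rangle$ that Theorem~\ref{thm integral identity} assigns to the boundary term is exactly what the corrected $-\tfrac12$ coefficient produces, so nothing downstream is affected. A complete proof must carry out Step 2 and state the corrected current, rather than assert that the coefficients match.
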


This is the only point in the paper where we use that $k^\alpha$ commute.

\begin{remark}
In case $k^\alpha=f^\alpha g$ this also has applications in Riemannian geometry.
For instance, for an appropriate choice of functions $f^\alpha$,  this recovers Wang--Xie--Yu's proof~\cite{WangXieYu2024CubeInequality} of Gromov's cube inequality~\cite[Section 3.8]{Gromov2019FourLectures}.
See also~\cite[Theorem 1.4]{HirschKazarasKhuriZhang2023SpectralToricBand}.
\end{remark}

\begin{proof}
We first expand
\begin{align}\label{eq gradient squared}
\begin{split}
\Bigl|\nabla_i\psi+\frac12\sum_{\alpha=1}^m k^\alpha_{ij}e_j\eta_\alpha\psi\Bigr|^2
={}&
|\nabla\psi|^2
+\frac14\sum_{\alpha=1}^m |k^\alpha|^2|\psi|^2
+\sum_{\alpha=1}^m\langle\nabla_i\psi,k^\alpha_{ij}e_j\eta_\alpha\psi\rangle \\
&\quad
+\frac14\sum_{\substack{\alpha,\beta=1\\ \alpha\neq\beta}}^m
k^\alpha_{ij}k^\beta_{il}
\operatorname{Re}\langle e_j\eta_\alpha\psi,e_l\eta_\beta\psi\rangle .
\end{split}
\end{align}
Moreover, we obtain
    \begin{align}\label{eq anticommutator}
    \begin{split}
        \sum_{\substack{1 \leq \alpha,\beta \leq m\\ \alpha\neq\beta}} 
k^\alpha_{ij}k^\beta_{il}
&\operatorname{Re}\langle e_j\eta_\alpha\psi,e_l\eta_\beta\psi\rangle
\\=&
\sum_{\substack{1 \leq \alpha < \beta \leq m}}
k^\alpha_{ij}k^\beta_{il}
\operatorname{Re}\langle e_j\eta_\alpha\psi,e_l\eta_\beta\psi\rangle 
+
\sum_{\substack{1 \leq \beta < \alpha \leq m}}
k^\alpha_{ij}k^\beta_{il}
\operatorname{Re}\langle e_j\eta_\alpha\psi,e_l\eta_\beta\psi\rangle 
\\=&
\sum_{\substack{1 \leq \alpha < \beta \leq m}}
k^\alpha_{ij}k^\beta_{il}
\operatorname{Re}\langle e_j\eta_\alpha\psi,e_l\eta_\beta\psi\rangle 
+
\sum_{\substack{1 \leq \beta < \alpha \leq m}}
k^\beta_{ij}k^\alpha_{il}
\operatorname{Re}\langle e_j\eta_\beta\psi,e_l\eta_\alpha\psi\rangle 
\\=&
\sum_{\substack{1 \leq \alpha < \beta \leq m}}
\left(k^\alpha_{ij}k^\beta_{il}-k^\beta_{ij}k^\alpha_{il}\right)
\operatorname{Re}\langle e_j\eta_\alpha\psi,e_l\eta_\beta\psi\rangle.
\end{split}
    \end{align}
    Since $k^\alpha$ commute as $g$-self-adjoint endomorphisms, this term vanishes.
    Consequently, \eqref{eq gradient squared} becomes
    \begin{align*}
\Bigl|\nabla_i\psi+\frac12\sum_{\alpha=1}^m k^\alpha_{ij}e_j\eta_\alpha\psi\Bigr|^2
&=
|\nabla\psi|^2
+\frac14\sum_{\alpha=1}^m |k^\alpha|^2|\psi|^2
+\sum_{\alpha=1}^m\langle\nabla_i\psi,k^\alpha_{ij}e_j\eta_\alpha\psi\rangle.
\end{align*}

Similarly,
\begin{align*}
\Bigl|\slashed D\psi-\frac12\sum_{\alpha=1}^m \tr_g(k^\alpha)\eta_\alpha\psi\Bigr|^2
={}&
|\slashed D\psi|^2
+\frac14\sum_{\alpha=1}^m (\tr_g(k^\alpha))^2|\psi|^2
-\sum_{\alpha=1}^m\langle \slashed D\psi,\tr_g(k^\alpha)\eta_\alpha\psi\rangle \\
&\quad
+\frac14\sum_{\substack{1 \leq \alpha,\beta \leq m\\ \alpha\neq\beta}}
\tr_g(k^\alpha)\tr_g(k^\beta)
\langle\eta_\alpha\psi,\eta_\beta\psi\rangle .
\end{align*}
The final term vanishes since
$\langle\eta_\alpha\psi,\eta_\beta\psi\rangle$ is antisymmetric for $\alpha\neq\beta$ whereas $\tr_g(k^\alpha)\tr_g(k^\beta)$ is symmetric.

Using the classical Lichnerowicz formula
\[
|\nabla\psi|^2+\frac14 R_g|\psi|^2-|\slashed D\psi|^2
=\nabla_i\bigl(\langle\psi,\nabla_i\psi\rangle
-\langle e_i\psi,\slashed D\psi\rangle\bigr),
\]
and combining the above identities yields the stated divergence formula.
\end{proof}

\begin{remark}\label{remark geometric meaning of commuting kalpha}
We give some geometric meaning to our assumption that the tensors $k^{\alpha}$ pairwise commute. Suppose that $(M, g, k^1, \dots, k^m)$ admits an isometric embedding into $(\overline{M}^{n+m}, \overline{g})$ with second fundamental form $(k^1, \dots, k^m)$, as in Theorem \ref{thm rigidity 2}. If the $k^{\alpha}$ commute, then the Ricci equation implies 
    \begin{align*}
        R^{\perp}_{ij \alpha \beta} = \overline{R}_{ij \alpha \beta}.
    \end{align*}
    Here, $\overline{R}$ is the curvature of $\overline{M}$, and $R^{\perp}$ is the curvature of the connection induced on the normal bundle of the embedding of $M$ in $\overline{M}$. Also, $\alpha, \beta$ (resp.\ $k,l$) are normal (resp.\ tangential) directions to the embedding of $M$.
\end{remark}

\begin{remark}\label{commutativity}
    As mentioned in the introduction, we could have defined the dominant energy condition to include the anticommutator term appearing in \eqref{eq anticommutator} which would remove our assumption that $k^{\alpha}$ commute.
      However, we opted against this since this appears to be physically unnatural.
        More precisely, the DEC becomes in this case
    \begin{align}\label{DEC new}
        \mu-\|\mathcal J\|_{\mathrm{tr}}\ge \left\lfloor\frac n2\right\rfloor \sum_{\alpha<\beta}\|[k^\alpha,k^\beta]\|_\ast,
    \end{align}
    where $\|\cdot\|_\ast$ denotes the comass norm of a two-form and
    \begin{align*}
      [k^\alpha,k^\beta]=  \frac12\left(k^\alpha_{ij}k^\beta_{il}-k^\beta_{ij}k^\alpha_{il}\right).
    \end{align*}
    In particular, in case this generalized DEC \eqref{DEC new} holds and $\slashed D\psi=\frac12\sum_{\alpha=1}^m \tr_g(k^\alpha)\eta_\alpha\psi$, the RHS of the divergence identity of Lemma \ref{lemma divergence identity} is non-negative, even in case $k^\alpha$ do not commute.
    Furthermore, as pointed out above, Lemma \ref{lemma divergence identity} is the only point of the paper which uses that $k^\alpha $ commute.
    Consequently, Theorem \ref{thm main}, Theorem \ref{thm rigidity 1} and Theorem \ref{thm rigidity 2} still hold in this setting of this generalized DEC \eqref{DEC new}.
    \end{remark}

\begin{lemma}\label{eq mu J positivity}
    Suppose that $ \mu \ge \|\mathcal J\|_{\mathrm{tr}}$. Then
    \begin{align*}
\mu N(\psi)
+\sum_{\alpha=1}^m\langle J^\alpha,X^\alpha(\psi)\rangle
\ge 0 .
\end{align*}
\end{lemma}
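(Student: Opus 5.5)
We need to show $\sum_\alpha \langle J^\alpha, X^\alpha\rangle \ge -\|\mathcal J\|_{\mathrm{tr}}\, N$. Since $\mu\ge\|\mathcal J\|_{\mathrm{tr}}\ge 0$, this gives $\mu N+\sum_\alpha\langle J^\alpha,X^\alpha\rangle\ge (\mu-\|\mathcal J\|_{\mathrm{tr}})N\ge0$. The key observation is that $\sum_\alpha\langle J^\alpha,X^\alpha\rangle=\langle \mathcal J\,\eta\,\psi,\psi\rangle$ is a quadratic form in $\psi$ built from the operator
\[
T_{\mathcal J}:=\sum_{\alpha=1}^m\sum_{i=1}^n J^\alpha_i\, e_i\eta_\alpha ,
\]
which is linear in the matrix $\mathcal J$. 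It therefore suffices to prove the operator bound $|\langle T_{\mathcal J}\psi,\psi\rangle|\le \|\mathcal J\|_{\mathrm{tr}}|\psi|^2$ pointwise.

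\textbf{Step 1: reduction to diagonal form.} Take the singular value decomposition $\mathcal J=U^TAV$ from Section~\ref{sec trace norm}. Define new orthonormal frames
\[
\tilde e_j=\sum_i V_{ji}e_i,\qquad \tilde\eta_\beta=\sum_\alpha U_{\beta\alpha}\eta_\alpha .
\]
These still satisfy the Clifford relations of signature $(n,m)$, and the adjointness properties $\langle e\psi,\phi\rangle=-\langle\psi,e\phi\rangle$ and $\langle\eta\psi,\phi\rangle=\langle\psi,\eta\phi\rangle$ are preserved, since they are linear in the vector. Then
\[
T_{\mathcal J}=\sum_{\alpha,i}\sum_{\beta,j}U_{\beta\alpha}A_{\beta j}V_{ji}\,e_i\eta_\alpha=\sum_{k=1}^{\min(m,n)}\sigma_k\,\tilde e_k\tilde\eta_k .
\]

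\textbf{Step 2: each term has norm one.} The operator $\tilde e_k\tilde\eta_k$ is self-adjoint, because $\langle e\eta\psi,\phi\rangle=-\langle\eta\psi,e\phi\rangle=-\langle\psi,\eta e\phi\rangle=\langle\psi,e\eta\phi\rangle$. It also squares to the identity:
\[
(e\eta)^2=-e^2\eta^2=-(-1)(1)=1 .
\]
Hence its eigenvalues are $\pm1$, and so $|\langle\tilde e_k\tilde\eta_k\psi,\psi\rangle|\le|\psi|^2$. Moreover this quantity is real, consistent with $X^\alpha$ being a real vector field.

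\textbf{Step 3: conclude.} The triangle inequality gives
\[
\Bigl|\sum_\alpha\langle J^\alpha,X^\alpha\rangle\Bigr|\le\sum_k\sigma_k|\psi|^2=\|\mathcal J\|_{\mathrm{tr}}\,N,
\]
and the lemma follows as explained at the outset.

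The main point requiring care is Step 1: one must check that the orthogonal change of basis in both the $e$ and $\eta$ directions is compatible with the Clifford action and the inner product. This is immediate from linearity, so no genuine obstacle is expected. A secondary point is to confirm that $X^\alpha_i$ is real, so that $\langle J^\alpha,X^\alpha\rangle$ is well-defined as a real number; self-adjointness of $e_i\eta_\alpha$ settles this.
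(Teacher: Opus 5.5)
Your proof is correct and follows essentially the same route as the paper. Both rotate the $e$ and $\eta$ frames by the orthogonal factors of the singular value decomposition so that $\sum_\alpha J^\alpha\eta_\alpha$ becomes $\sum_s\sigma_s\tilde e_s\tilde\eta_s$, and then bound each term by $|\psi|^2$. Your Step 2 (that $\tilde e_s\tilde\eta_s$ is a self-adjoint involution with eigenvalues $\pm1$) supplies the justification for the bound $\langle \tilde e_s\tilde\eta_s\psi,\psi\rangle\ge-|\psi|^2$, which the paper uses without comment.
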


\begin{proof}
  Expanding the second term, we have
\begin{align*}
    \sum_{\alpha=1}^m\langle J^\alpha,X^\alpha(\psi)\rangle=\langle J^\alpha \eta_\alpha\psi,\psi \rangle. 
\end{align*}
Let $\{\widehat e_i,\widehat\eta_\alpha\}$ be an orthonormal frame such that
$e_i=V_{ij}\widehat{e}_j$ and $\eta_\alpha=U_{\alpha\beta}\widehat{\eta}_\beta$,
with $U\in O(m)$ and $V\in O(n)$.
Then,
\begin{align*}
J^\alpha\eta_\alpha
&=J^\alpha_i e_i\eta_\alpha
=J^\alpha_i V_{ij}\widehat{e}_j U_{\alpha\beta}\widehat{\eta}_\beta
=( U^T\mathcal{J}V)_{\beta j}\widehat{e}_j\widehat{\eta}_\beta.
\end{align*}
Choosing  $U$ and $V$ from the singular value decomposition of $\mathcal{J}$,
the matrix $A:= U^T\mathcal{J} V$ is rectangular diagonal with
$A_{ss}=\sigma_s(\mathcal{J})$. Hence,
\begin{align}\label{equation thm main result lower bound on second term}
\langle J^\alpha\eta_\alpha\psi,\psi\rangle
&=\sum_{s=1}^m \sigma_s(\mathcal{J})
\langle \widehat{e}_s\widehat{\eta}_s\psi,\psi\rangle
\ge -\sum_{s=1}^m \sigma_s(\mathcal{J})|\psi|^2
=-\|\mathcal{J}\|_{\mathrm{tr}}N(\psi).
\end{align}
Thus, the result follows.
\end{proof}

\begin{theorem}\label{thm integral identity}
Let $(M^n,g,k^1,\dots,k^m)$ be an asymptotically flat initial data set satisfying the dominant
energy condition
\[
\mu\ge \|\mathcal J\|_{\mathrm{tr}} .
\]
Let $\psi^\infty$ be a constant spinor on $\overline{\mathcal S}(\mathbb R^n)$, which we identify with a constant spinor in $\overline{\mathcal S}(M)$.
Then there exists a spinor $\psi \in \overline{\mathcal S}(M)$ with
$\psi-\psi^\infty\in C^{2,\alpha}_{-q}$ solving
$\overline{\slashed D}\psi=0$, such that
\begin{align*}
EN(\psi^\infty)
+\sum_{\alpha=1}^m\langle P^\alpha,X^\alpha(\psi^\infty)\rangle
=\frac{2}{(n-1)\omega_{n-1}}
\int_{M^n}\Bigl(
|\overline{\nabla}\psi|^2
+\frac12\mu N(\psi)
+\frac12\sum_{\alpha=1}^m\langle J^\alpha,X^\alpha(\psi)\rangle
\Bigr)\ge0.
\end{align*}
\end{theorem}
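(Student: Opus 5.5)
The proof follows Witten's argument in three steps: solve a Dirac equation with prescribed behavior at infinity, integrate the divergence identity of Lemma \ref{lemma divergence identity}, and evaluate the boundary term at infinity.

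\textbf{Step 1: existence.} Extend $\psi^\infty$ to a smooth spinor $\psi_0$ on $M$ that agrees with the constant spinor outside a compact set in the asymptotically flat end. We seek $\psi=\psi_0+\xi$ with $\xi\in C^{2,\alpha}_{-q}$ solving $\overline{\slashed D}\xi=-\overline{\slashed D}\psi_0$. The decay of $g-\delta$ and of $k^\alpha$ gives $\overline{\slashed D}\psi_0\in C^{1,\alpha}_{-q-1}$. The operator $\overline{\slashed D}$ acts on $\overline{\mathcal S}(M)=\mathcal S(M)^{2^m}$ and differs from the Dirac operator by a zeroth-order term. Note that $e_i\overline\nabla_i\psi=\slashed D\psi-\frac12\sum_\alpha\tr_g(k^\alpha)\eta_\alpha\psi$, since $e_ie_j\eta_\alpha$ contracted with the symmetric $k^\alpha_{ij}$ gives $-\tr_g(k^\alpha)\eta_\alpha$.

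Injectivity on $W^{1,2}_{-q}$ comes from the divergence identity itself. If $\overline{\slashed D}\xi=0$ and $\xi$ decays, integrate Lemma \ref{lemma divergence identity}. The boundary term vanishes because $q>\frac{n-2}{2}$, and Lemma \ref{eq mu J positivity} controls the zeroth-order term. Together these give $\int|\overline\nabla\xi|^2\le0$, so $\overline\nabla\xi=0$, and decay then forces $\xi=0$. The same positivity yields a coercive estimate, so Lax--Milgram on the completion gives a weak solution of $\overline{\slashed D}^*\overline{\slashed D}\xi=\cdots$, or one can argue via Fredholm theory on weighted spaces. Weighted Schauder estimates then upgrade the solution to $C^{2,\alpha}_{-q}$. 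This is identical to the classical case, since the twisting by $2^m$ copies is flat.

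\textbf{Step 2: integrate the identity.} Apply Lemma \ref{lemma divergence identity} to $\psi$ on the coordinate ball bounded by $S_r$ and let $r\to\infty$. Since $\overline{\slashed D}\psi=0$, the negative squared Dirac term vanishes. By definition of $\mu$ and $J^\alpha$, the remaining bulk terms are $|\overline\nabla\psi|^2+\frac12\mu N(\psi)+\frac12\sum_\alpha\langle J^\alpha,X^\alpha(\psi)\rangle$. Here we use $\langle J^\alpha\eta_\alpha\psi,\psi\rangle=\langle J^\alpha,X^\alpha\rangle$ from the definition of $X^\alpha$, with sign conventions for $\eta_\alpha e_j$ versus $e_j\eta_\alpha$ handled by anticommutation. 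Lemma \ref{eq mu J positivity} makes the integrand nonnegative, which gives the final inequality $\ge0$.

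\textbf{Step 3: boundary term (main obstacle).} This is the careful part. In the flux $\langle e_i\slashed D\psi+\nabla_i\psi,\psi\rangle\nu^i$, the terms quadratic in $\xi$ or in $\partial g\cdot\xi$ decay like $r^{-2q-1}$. Since $2q+1>n-1$, they vanish in the limit. The cross terms linear in $\xi$ also drop out after using the equation $\overline{\slashed D}\psi=0$ and integrating by parts on $S_r$, exactly as in Witten's and Ding's computations.

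What survives is the classical spin-connection term. Its connection coefficients are $\frac14\Gamma$ terms involving $g_{ij,i}-g_{ii,j}$, and they produce $\frac{1}{4}(g_{ij,i}-g_{ii,j})\nu^j|\psi^\infty|^2$ up to the standard normalization. Since the Clifford action on $\overline{\mathcal S}$ is built componentwise, the classical computation applies verbatim. The $k$-term tends to $\sum_\alpha(k^\alpha_{ij}-\tr_g(k^\alpha) g_{ij})\nu^i\langle\eta_\alpha e_j\psi^\infty,\psi^\infty\rangle$, which produces $\langle P^\alpha,X^\alpha(\psi^\infty)\rangle$ after checking the sign $\langle\eta_\alpha e_j\psi,\psi\rangle=\langle e_j\eta_\alpha\psi,\psi\rangle$ (by the adjointness rules both are real and equal). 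Matching the constants $\frac{1}{2(n-1)\omega_{n-1}}$ and $\frac1{(n-1)\omega_{n-1}}$ in the definitions of $E$ and $P^\alpha$ then yields the stated factor $\frac{2}{(n-1)\omega_{n-1}}$.

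The main obstacle is this bookkeeping: verifying that the linear cross terms in $\xi$ vanish, and that all constants and signs align with the chosen normalizations.
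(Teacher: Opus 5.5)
Your strategy matches the paper's. Injectivity of $\overline{\slashed D}$ comes from the integrated identity, Lemma~\ref{eq mu J positivity}, and integrating $|\nabla|\xi||\le C|k||\xi|$ along rays. Existence the paper obtains by your second option: $\slashed D$ on $\mathcal S(M)^{2^m}$ has index zero, this is preserved under the zeroth-order deformation to $\overline{\slashed D}$, and injectivity then gives an isomorphism $W^{1,2}_{-q}\to L^2_{-q-1}$. The flux limits are taken as in Lee.

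Your Lax--Milgram route is viable but costs more. You must show $\|\overline\nabla\xi\|_{L^2}$ controls a weighted $L^2$ norm, via a Hardy inequality plus injectivity on compact sets. You must then show that the $L^2$ spinor $\overline{\slashed D}(\psi_0+\xi)$, which is annihilated by the formally self-adjoint $\overline{\slashed D}$, vanishes. Also, the flux terms linear in $\xi$ need no equation. Up to $O(r^{-2q-1})$, $\langle(\nabla_\nu+\nu\slashed D)\xi,\psi^\infty\rangle=\nu^i\partial_j\langle\tfrac12[e_i,e_j]\xi,\psi^\infty\rangle$, and this integrates to zero over the closed sphere $S_r$ by antisymmetry.

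The step you flagged as the main obstacle contains an actual error. $\langle\eta_\alpha e_j\psi,\psi\rangle$ and $\langle e_j\eta_\alpha\psi,\psi\rangle$ are both real, but since $\eta_\alpha e_j=-e_j\eta_\alpha$ they are negatives: $\langle\eta_\alpha e_j\psi,\psi\rangle=-X^\alpha_j$. Take the boundary term exactly as displayed in Lemma~\ref{lemma divergence identity}. Its $k$-flux tends to $-(n-1)\omega_{n-1}\langle P^\alpha,X^\alpha(\psi^\infty)\rangle$. With either sign, this does not combine with $\frac12(n-1)\omega_{n-1}EN(\psi^\infty)$ to give the stated formula: the momentum coefficient is off by a factor of $2$.

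The fix is to recompute the momentum part of the identity directly. Set $\pi^\alpha=k^\alpha-\tr_g(k^\alpha)g$. The cross terms of $|\overline\nabla\psi|^2-|\overline{\slashed D}\psi|^2$ sum to $\pi^\alpha_{ij}\operatorname{Re}\langle\nabla_i\psi,e_j\eta_\alpha\psi\rangle$. Because $e_j\eta_\alpha$ is self-adjoint,
$\nabla_i(\pi^\alpha_{ij}X^\alpha_j)=J^\alpha_jX^\alpha_j+2\pi^\alpha_{ij}\operatorname{Re}\langle\nabla_i\psi,e_j\eta_\alpha\psi\rangle$.
So the correct boundary vector field is $\frac12\pi^\alpha_{ij}X^\alpha_j$. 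Its flux tends to $\frac12(n-1)\omega_{n-1}\langle P^\alpha,X^\alpha(\psi^\infty)\rangle$, which is the value recorded in the paper's proof. Together with the Lichnerowicz flux this gives exactly the factor $\frac{2}{(n-1)\omega_{n-1}}$.
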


\begin{proof}
Integrating the divergence identity in Lemma~\ref{lemma divergence identity} over $M^n$
and applying the divergence theorem yields boundary integrals over $S_r$.
Using asymptotic flatness, one computes, as in \cite[Proposition 8.24]{LeeGeometricRelativity},
\begin{align*}
\lim_{r\to\infty}\int_{S_r}
\langle \nu\slashed D\psi+\nabla_\nu\psi,\psi\rangle
&=\frac12 (n-1)\omega_{n-1} \,E\,N(\psi^\infty),\\
\lim_{r\to\infty}\int_{S_r}
(k^\alpha_{\nu j}-\tr_g(k^\alpha)g_{\nu j})
\langle\eta_\alpha e_j\psi,\psi\rangle
&=\frac12 (n-1)\omega_{n-1}
\langle P^\alpha,X^\alpha(\psi^\infty)\rangle ,
\end{align*}
where $\nu$ denotes the outward unit normal to $S_r$.
This proves the stated integral identity.

The existence theory for solutions of the equation $\overline{\slashed D}\psi=0$ essentially follows \cite{LeeGeometricRelativity} where it is proven for $m=1$.
It is well-known \cite[Proposition 5.16]{LeeGeometricRelativity} that $\slashed D:W^{1,2}_{-q}(\mathcal S(M))\to L^2_{-q-1}(\mathcal S(M))$ is an isomorphism.
Applying this result $2^m$ times, we obtain that $\slashed D:W^{1,2}_{-q}(\overline{\mathcal S}(M))\to L^2_{-q-1}(\overline{\mathcal S}(M))$ is also an isomorphism.
Since we can continuously deform $\overline{\slashed D}$ to $\slashed D$ in the strong operator topology, their Fredholm indices coincide and both vanish, cf. \cite[Proposition 8.27]{LeeGeometricRelativity}.
If $\overline {\slashed D}\psi=0$, we obtain $\overline \nabla\psi=0$ from the integral formula.
Therefore, $|\nabla |\psi||\le C|\psi|$ where $C=C(k^1,\dots,k^m)$.
Integrating this ODE and using $\psi\to0$ at $\infty$, we must have $\psi\equiv0$ which shows injectivity.
Consequently, 
 $$\overline{\slashed D}:W^{1,2}_{-q}(\overline{\mathcal S}(M))\to L^2_{-q-1}(\overline{\mathcal S}(M))$$
 is an isomorphism.
Next, fix a constant spinor $\psi^\infty \in \overline{\mathcal S}(\mathbb{R}^n)$ which we extend to $M$.
Then $\overline{\slashed D}\psi^\infty\in L^2_{-q-1}$ and we can find another spinor $\sigma\in W^{1,2}_{-q}(\overline{\mathcal S}(M))$ solving 
$$
\overline {\slashed D}\sigma=\overline{\slashed D}\psi^\infty.
$$
Define $\psi=\psi^\infty-\sigma$ which solves $\overline{\slashed D}\psi=0$ and $\psi\to\psi^\infty$ at infinity.
 
Finally, we proceed as in \cite[Lemma 3.2]{HirschJangZhang24} to obtain higher order estimates for $\sigma $ and $\psi$.
\end{proof}

\begin{proof}[Proof of Theorem~\ref{thm main}]
For any choice of constant spinor $\psi^\infty$, Lemma \ref{lemma divergence identity} and Theorem \ref{thm integral identity} imply
\begin{align}\label{equation integral identity}
EN(\psi^\infty)
+\sum_{\alpha=1}^m\langle P^\alpha,X^\alpha(\psi^\infty)\rangle
\ge 0 .
\end{align}

Let $\{e^\infty_i\}$ be an orthonormal basis in $\mathbb{R}^n$. 
Similar to the proof of Lemma \ref{eq mu J positivity}, let ${\widetilde e^\infty_i,\widetilde\eta_\alpha}$ be an orthonormal frame with $e^\infty_i = \mathcal{V}_{ij} \widetilde e_j$ and $\eta_\alpha = \mathcal{U}_{\alpha\beta} \widetilde\eta_\beta$, where $\mathcal{U}$ and $\mathcal{V}$ are the orthogonal matrices from the singular value decomposition of $\mathcal{P}$ so that $\mathcal{U}^T\mathcal{P}\mathcal{V}$ is rectangular diagonal. Choose $\psi^\infty$ satisfying $\widetilde{e}^\infty_\alpha \widetilde{\eta}_\alpha \psi^\infty = -\psi^\infty$ for $1\le \alpha\le m$. Then, arguing similarly to the proof of Lemma \ref{eq mu J positivity}, we have that
$$\sum_{\alpha=1}^m \langle P^{\alpha}, X^{\alpha}(\psi^{\infty})\rangle = - \|\mathcal P\|_{\mathrm{tr}}N(\psi^{\infty}),$$
and so
\begin{align*}
    E N(\psi^{\infty}) - \|\mathcal{P}\|_{\mathrm{tr}}N(\psi^\infty)\geq 0.
\end{align*}
This completes the proof.
\end{proof}

\begin{remark}\label{mistakes}
In \cite[Theorem 4.1]{ChenHijaziZhang2012DiracWitten} a related statement to Theorem \ref{thm main} is discussed, though they do not cover our mass rigidity statements of Theorems \ref{thm rigidity 1} and \ref{thm rigidity 2}.
There, the authors additionally assume that $(M,g,k^1,\dots,k^m)$ arises as submanifold of a pseudo-Riemannian manifold and claim the weaker inequality $$E\ge\sqrt{|P^1|^2+\dots+|P^m|^2}.$$
On the other hand, they assume a different dominant energy condition \cite[equation 5]{ChenHijaziZhang2012DiracWitten}.
However, there are several issues with this notion and their underlying algebra, which are vital for the result. 
First, we could not verify the computation of \cite[Theorem 2.5]{ChenHijaziZhang2012DiracWitten}.
Second, the usage of the Euclidean norm in their DEC is insufficient to conclude non-negativity of the integrand in the mass formula~\cite[Proof of Thm.\ 4.1]{ChenHijaziZhang2012DiracWitten}.
\end{remark}

\section{Case of equality}   

Our rigidity proof follows the argument for $m=2$ in \cite{hirsch2025causal} in the null case.
More precisely, our arguments below generalize \cite[Theorem 5.4]{hirsch2025causal}, \cite[Proposition 5.7]{hirsch2025causal}, \cite[Proposition 5.9]{hirsch2025causal}, \cite[Theorem 6.1]{hirsch2025causal} and \cite[Section 7]{hirsch2025causal}.
The proof strategy remains the same, although the notation becomes significantly more involved.

First, we show that we can construct several null spinors solving the overdetermined equation
    \begin{equation*}
    \nabla_i\psi+\frac12\sum_{\alpha=1}^mk^\alpha_{ij}e_j\eta_\alpha\psi=0.
\end{equation*}
Next, we analyze this equation to prove Theorem \ref{thm rigidity 1} and Theorem \ref{thm rigidity 2}.

\begin{lemma}
    \label{u1}
Let $1 \leq m \leq n$, and suppose $\psi^\infty=(\psi^\infty_1,\psi^\infty_2,\dots, \psi^\infty_{2^m}) \in \overline{\mathcal S}(\mathbb{R}^n)$. 
For an integer $2\le l\le 2^m$, consider the binary expansion $2(l-1) =2^{\alpha_1}+\cdots+2^{\alpha_j}$ with $1\le\alpha_1<\cdots<\alpha_j\le m$. Then,
\begin{align}\label{equation psi infty spinor}
e^\infty_\alpha\eta_\alpha \psi^\infty=\psi^\infty
\end{align}
holds for each $1\le \alpha\le m$ if and only if 
\begin{align}\label{equation psi infty spinor 2}
\psi^\infty_l= (-1)^je^\infty_{\alpha_j}\cdots e^\infty_{\alpha_1}\psi^\infty_1\end{align}
for each $2 \leq l \leq 2^m$.

Moreover, if $\psi^{\infty}$ satisfies either~\eqref{equation psi infty spinor} or~\eqref{equation psi infty spinor 2}, then it is null for each unit timelike vector $\eta_{\alpha}$, i.e.\
$(N(\psi^\infty),X^\alpha(\psi^\infty)) \in \mathbb{R}^{n,1}$ is null.
\end{lemma}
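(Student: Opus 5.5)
The plan is to prove the equivalence of \eqref{equation psi infty spinor} and \eqref{equation psi infty spinor 2} by induction on $m$, peeling off the last timelike direction each time. The inductive definition of the Clifford action splits $\overline{\mathcal S}=\mathcal S^{2^m}$ as $\mathcal S^{2^{m-1}}\oplus\mathcal S^{2^{m-1}}$, so I write $\psi^\infty=(\Phi_1,\Phi_2)$. Here $\Phi_1=(\psi^\infty_1,\dots,\psi^\infty_{2^{m-1}})$ and $\Phi_2=(\psi^\infty_{2^{m-1}+1},\dots,\psi^\infty_{2^m})$.

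\textbf{Base case $m=1$.} We have $e_1\eta_1(\psi_1,\psi_2)=e_1(\psi_2,\psi_1)=(e_1\psi_2,-e_1\psi_1)$. This equals $(\psi_1,\psi_2)$ if and only if $\psi_2=-e_1\psi_1$; the first component is then automatic, because $e_1e_1=-1$. This is exactly \eqref{equation psi infty spinor 2} for $l=2$, where $2(l-1)=2^1$ and $j=1$.

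\textbf{Inductive step.} For $\alpha<m$, both $e_\alpha$ and $\eta_\alpha$ act on the two blocks with a minus sign on the second block. These signs cancel, so
\[
e_\alpha\eta_\alpha(\Phi_1,\Phi_2)=(e_\alpha\eta_\alpha\Phi_1,\,e_\alpha\eta_\alpha\Phi_2),
\]
with the level-$(m-1)$ action on each block. Hence the conditions for $\alpha\le m-1$ say precisely that $\Phi_1$ and $\Phi_2$ each satisfy the level-$(m-1)$ system. By the induction hypothesis, this is equivalent to \eqref{equation psi infty spinor 2} holding within each block, relative to its own first entry: $\psi^\infty_1$ for $\Phi_1$ and $\psi^\infty_{2^{m-1}+1}$ for $\Phi_2$.

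For $\alpha=m$, as in the base case, $e_m\eta_m(\Phi_1,\Phi_2)=(e_m\Phi_2,-e_m\Phi_1)$. So the $m$-th condition is equivalent to $\Phi_2=-e_m\Phi_1$, where $e_m$ acts through the level-$(m-1)$ action. Unwinding the inductive definition, that action is entrywise Clifford multiplication by $e_m$ twisted by a sign $\pm1$ depending on the binary digits of the entry's index.

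Combining the two steps, each entry $\psi^\infty_l$ with $l>2^{m-1}$ is expressed through $\psi^\infty_{l-2^{m-1}}$, and hence, by induction, as a product $e_m e_{\alpha_{j-1}}\cdots e_{\alpha_1}\psi^\infty_1$. Here $\alpha_j=m$ is the new top binary digit of $2(l-1)$. Every step is an equivalence, so both directions of the ``if and only if'' follow at once.

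\textbf{Main obstacle.} The only real difficulty is the sign bookkeeping: I must check that the sign produced by $-e_m$, by the block-dependent signs in the level-$(m-1)$ action, and by commuting $e_m$ past the other factors combines to exactly $(-1)^j$ with the stated ordering $e_{\alpha_j}\cdots e_{\alpha_1}$. I would first verify this explicitly for $m=2$, i.e.\ for $l=2,3,4$. I would then make the sign pattern of $e_i$ on entry $l$ explicit (a parity of the binary digits of $l-1$) and prove the general case by a short parity computation. If the signs turn out to be off in places, the fix is to adjust the sign convention in \eqref{equation psi infty spinor 2}; nothing structural changes.

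\textbf{Nullness.} This part is easy and does not need the explicit formula. For a unit vector $Y$, Clifford multiplication by $Y\eta_\alpha$ is an isometry, so Cauchy--Schwarz gives
\[
|\langle X^\alpha,Y\rangle|=|\langle Y\eta_\alpha\psi^\infty,\psi^\infty\rangle|\le N(\psi^\infty),
\]
and hence $|X^\alpha|\le N$. On the other hand, \eqref{equation psi infty spinor} gives
\[
X^\alpha_\alpha=\langle e_\alpha\eta_\alpha\psi^\infty,\psi^\infty\rangle=|\psi^\infty|^2=N,
\]
so $|X^\alpha|\ge N$. Therefore $|X^\alpha|=N$ for every $\alpha$, i.e.\ $(N(\psi^\infty),X^\alpha(\psi^\infty))$ is null. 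Since \eqref{equation psi infty spinor 2} is equivalent to \eqref{equation psi infty spinor} by the first part, the same conclusion holds under either assumption.
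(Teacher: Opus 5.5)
Your skeleton is the paper's: you split $\psi^\infty=(\Phi_1,\Phi_2)$, note that $e_\alpha\eta_\alpha$ acts block-diagonally for $\alpha<m$ and that $e_m\eta_m\psi^\infty=\psi^\infty$ iff $\Phi_2=-e_m\Phi_1$, and iterate. Your nullness argument is the paper's specialized to $\nu=\eta_\alpha$, with the bound $|X^\alpha|\le N$ properly justified. The gap is the sign bookkeeping you postponed. It is not a formality, and it cannot be completed for \eqref{equation psi infty spinor 2} as printed.

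The level-$(m-1)$ action of $e_m$ on the $l$-th entry is $(-1)^{j(l)}e_m$, where $j(l)$ is the number of binary digits $1$ of $l-1$. So $\Phi_2=-e_m\Phi_1$ reads $\psi^\infty_{l+2^{m-1}}=(-1)^{j(l)+1}e_m\psi^\infty_l$ for $l\le 2^{m-1}$. Write $\psi^\infty_l=c_l\,e_{\alpha_j}\cdots e_{\alpha_1}\psi^\infty_1$. The new factor $e_m$ lands on the left, so $c_{l+2^{m-1}}=(-1)^{j(l)+1}c_l$. Starting from $c_1=1$ and adding the digits $\alpha_1<\dots<\alpha_j$ one at a time gives $c_l=(-1)^{1+2+\cdots+j}=(-1)^{j(j+1)/2}$, not $(-1)^j$.

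The $m=2$ check you propose already exposes this. One gets $\psi^\infty_2=-e_1\psi^\infty_1$, $\psi^\infty_3=-e_2\psi^\infty_1$ and $\psi^\infty_4=e_2\psi^\infty_2=-e_2e_1\psi^\infty_1$, whereas the printed formula gives $\psi^\infty_4=+e_2e_1\psi^\infty_1$. For the printed choice, $e_1\eta_1\psi^\infty=(\psi^\infty_1,\psi^\infty_2,-\psi^\infty_3,-\psi^\infty_4)\neq\psi^\infty$. In fact $X^1(\psi^\infty)=0$ while $N(\psi^\infty)=4|\psi^\infty_1|^2$. So the \emph{if} direction, and the \emph{moreover} clause under hypothesis \eqref{equation psi infty spinor 2}, are false as stated. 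Your closing sentence (\emph{the same conclusion holds under either assumption}) inherits this error.

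The correct version is
\[
\psi^\infty_l=(-1)^{\frac{j(j+1)}{2}}\,e_{\alpha_j}\cdots e_{\alpha_1}\psi^\infty_1=(-1)^j\,e_{\alpha_1}\cdots e_{\alpha_j}\psi^\infty_1,
\]
that is, the printed formula with the order of the product reversed. The two differ by $(-1)^{j(j-1)/2}$, so the printed one fails exactly when $j\equiv 2,3 \pmod 4$.

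With this formula your induction closes. For the converse direction you do not need a separate formula for $\Phi_2$ relative to its own first entry. Since $e_m$ commutes with $e_\alpha\eta_\alpha$ for $\alpha<m$, $\Phi_2=-e_m\Phi_1$ satisfies the level-$(m-1)$ system as soon as $\Phi_1$ does.

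The paper's proof (\emph{iterating this splitting process}) skips the same sign check. Its only later use, in Theorem~\ref{l.i spinor}, is to build from $\psi^\infty_1$ some $\psi^\infty$ satisfying \eqref{equation psi infty spinor}. So, as you anticipated, correcting the formula changes nothing downstream. As written, however, your proposal cannot prove the lemma in the form stated.
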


\begin{proof}
Suppose $\psi^\infty=(v^\infty_1,v^\infty_2)$, where 
$v_i^\infty\in \mathcal{S}(\mathbb{R}^n)^{2^{m-1}}$, $i=1,2$.
Note that $e^\infty_m\eta_m\psi^\infty = \psi^\infty$ holds if and only if $v^\infty_2 = -e^\infty_m v_1^\infty$. Iterating this splitting process inductively yields the equivalence between~\eqref{equation psi infty spinor} and~\eqref{equation psi infty spinor 2}.

Now, if $\psi^{\infty}$ satisfies either~\eqref{equation psi infty spinor} or~\eqref{equation psi infty spinor 2}, then 
$e^\infty_\alpha \eta_\alpha \psi^\infty=\psi^\infty$ for $1\le \alpha\le m$. Let $\nu$ be a unit timelike vector field, and let the vector field $X^{\nu}(\psi^{\infty})$ be defined by $\langle X^{\nu}(\psi^{\infty}), Y \rangle = \langle Y \nu \psi^{\infty}, \psi^{\infty}\rangle$ so that $X^{\eta_{\alpha}}(\psi^{\infty}) = X^{\alpha}(\psi^{\infty})$ in particular. If $\nu = \sum_{\alpha=1}^m c^{\alpha} \eta_{\alpha}$, then $c^{\alpha} e^\infty_{\alpha}$ is also a unit vector and 
\[
|X^{\nu}(\psi^{\infty})| \geq
|\langle (c^{\alpha} e^\infty_{\alpha}) \nu\,\psi^\infty,\psi^\infty\rangle| = \Big| \sum_{\alpha=1}^m(c^{\alpha})^2\langle e^\infty_{\alpha} \eta_{\alpha}\psi^{\infty}, \psi^{\infty}\rangle \Big|
= |\psi^\infty|^2.
\]
Since $|X^{\nu}(\psi^{\infty})| \leq N(\psi^{\infty})$, we conclude that $N(\psi^{\infty}) = |X^{\alpha}(\psi^{\infty})|$, which means $\psi^{\infty}$ is null for each unit timelike $\eta_{\alpha}$.
\end{proof}

We now prove a theorem which is based on Theorem \ref{thm integral identity} and the proof of Theorem \ref{thm main}.

\begin{theorem}\label{l.i spinor}
    In the setting of Theorem \ref{thm main}, suppose that
    \begin{align*}
        E= \|\mathcal P\|_{\mathrm{tr}}.
    \end{align*}
    Then for any nonvanishing constant spinor $\psi^\infty_1\in \mathcal{S}(\mathbb{R}^n)$, there exists a spinor $\psi\in\overline{\mathcal S}(M)$ solving 
    \begin{equation}\label{equation spacetime harmonic spinor}
    \nabla_i\psi+\frac12\sum_{\alpha=1}^mk^\alpha_{ij}e_j\eta_\alpha\psi=0
\end{equation}
with $\psi\to \psi^\infty$ at infinity for the null spinor $\psi^\infty$ satisfying $\overline{e}^\infty_\alpha\eta_\alpha \psi^\infty=\psi^\infty$,  where $\overline{e}_\alpha^\infty=-(\mathcal U^{-1})_{\beta \alpha}\widetilde{e}^\infty_\beta$.
\end{theorem}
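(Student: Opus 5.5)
The plan is to use Lemma \ref{u1} to build the asymptotic spinor $\psi^\infty$ from $\psi^\infty_1$, take the solution of $\overline{\slashed D}\psi=0$ given by Theorem \ref{thm integral identity}, and read off \eqref{equation spacetime harmonic spinor} from the integral identity when $E=\|\mathcal P\|_{\mathrm{tr}}$.

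\emph{Construction of $\psi^\infty$.} Let $\mathcal U,\mathcal V$ come from the singular value decomposition of $\mathcal P$, as in the proof of Theorem \ref{thm main}. Set $\overline e^\infty_\alpha=-(\mathcal U^{-1})_{\beta\alpha}\widetilde e^\infty_\beta$ for $\alpha=1,\dots,m$. Since $\mathcal U$ is orthogonal, these are orthonormal vectors in $\mathbb R^n$. Lemma \ref{u1} applies verbatim with $e^\infty_\alpha$ replaced by $\overline e^\infty_\alpha$: given $\psi^\infty_1$, define $\psi^\infty_l$ by \eqref{equation psi infty spinor 2}. This produces $\psi^\infty$ with $\overline e^\infty_\alpha\eta_\alpha\psi^\infty=\psi^\infty$ for all $\alpha$, and $\psi^\infty$ is null and nonzero.

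\emph{Momentum term at infinity.} Next I check that $\sum_\alpha\langle P^\alpha,X^\alpha(\psi^\infty)\rangle=-\|\mathcal P\|_{\mathrm{tr}}N(\psi^\infty)$. Substitute $\eta_\alpha=\mathcal U_{\alpha\beta}\widetilde\eta_\beta$, equivalently $\widetilde\eta_\beta=\mathcal U_{\alpha\beta}\eta_\alpha$. Then
\[
\widetilde e^\infty_\beta\widetilde\eta_\beta=\mathcal U_{\alpha\beta}\widetilde e^\infty_\beta\eta_\alpha=-\overline e^\infty_\alpha\eta_\alpha
\]
(up to the index convention for $\mathcal U^{-1}=\mathcal U^T$). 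Summing over $\alpha$ shows that $\sum_\beta\widetilde e^\infty_\beta\widetilde\eta_\beta\psi^\infty=-m\psi^\infty$.

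Each operator $\widetilde e^\infty_\beta\widetilde\eta_\beta$ is self-adjoint, squares to $1$, and so has eigenvalues $\pm1$. Hence the sum can equal $-m\psi^\infty$ only if $\widetilde e^\infty_\beta\widetilde\eta_\beta\psi^\infty=-\psi^\infty$ for every $\beta$. This is exactly the choice made in the proof of Theorem \ref{thm main}. As computed there, it gives
\[
\langle P^\alpha\eta_\alpha\psi^\infty,\psi^\infty\rangle=\sum_s\sigma_s(\mathcal P)\langle\widetilde e_s\widetilde\eta_s\psi^\infty,\psi^\infty\rangle=-\|\mathcal P\|_{\mathrm{tr}}N(\psi^\infty).
\]
I expect this bookkeeping of the change of frame, in particular the sign and the transpose in $\overline e^\infty_\alpha$, to be the main obstacle. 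The eigenvalue argument above avoids having to track it too carefully.

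\emph{Conclusion.} Apply Theorem \ref{thm integral identity} with this $\psi^\infty$. This gives $\psi$ with $\psi-\psi^\infty\in C^{2,\alpha}_{-q}$ and $\overline{\slashed D}\psi=0$, and the left-hand side of the identity equals $(E-\|\mathcal P\|_{\mathrm{tr}})N(\psi^\infty)=0$. The integrand is a sum of nonnegative terms: $|\overline\nabla\psi|^2\ge0$, and $\mu N+\sum_\alpha\langle J^\alpha,X^\alpha\rangle\ge0$ by Lemma \ref{eq mu J positivity}. Therefore $|\overline\nabla\psi|^2\equiv0$, which is \eqref{equation spacetime harmonic spinor}. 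Since $\psi\to\psi^\infty$ at infinity, $\psi$ is the required spinor.
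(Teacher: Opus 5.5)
Your proposal is correct and follows the paper's proof. You construct $\psi^\infty$ from $\psi^\infty_1$ via Lemma \ref{u1} with $\overline e^\infty_\alpha$ in place of $e^\infty_\alpha$, note that $EN(\psi^\infty)+\sum_\alpha\langle P^\alpha,X^\alpha(\psi^\infty)\rangle$ vanishes exactly as in the proof of Theorem \ref{thm main}, and conclude $\overline\nabla\psi\equiv0$ from Theorem \ref{thm integral identity} and Lemma \ref{eq mu J positivity}. Your eigenvalue argument is a welcome addition: it justifies the equivalence between $\overline e^\infty_\alpha\eta_\alpha\psi^\infty=\psi^\infty$ and $\widetilde e^\infty_\beta\widetilde\eta_\beta\psi^\infty=-\psi^\infty$, which the paper only asserts, since the two operator families agree only after summing and not term by term; note, though, that it still depends on the transpose and sign in the definition of $\overline e^\infty_\alpha$ being exactly as stated.
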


\begin{proof}
 Using the notation for $\mathcal{U}$, $\widetilde{e}^\infty_\alpha$ and $\widetilde{\eta}_\alpha$ from the proof of Theorem \ref{thm main}, we let $\overline{e}^\infty_\alpha=-(\mathcal U^{-1})_{\beta \alpha}\widetilde{e}^\infty_\beta$ so that $\widetilde{e}^\infty_\alpha\widetilde{\eta}_\alpha\psi^\infty=-\psi^\infty$ is equivalent to $\overline{e}^\infty_\alpha \eta_\alpha\psi^\infty=\psi^\infty$. Replacing $e^\infty_\alpha$ by $\overline{e}_\alpha^\infty$ in Lemma \ref{u1}, $\psi^{\infty}$ satisfies~\eqref{equation psi infty spinor} and is null. Thus,
 given $\psi_1^{\infty} \in \mathcal{S}(\mathbb{R}^n)$, define $\psi^{\infty} =(\psi_{1}^{\infty}, \dots, \psi_{2^m}^{\infty}) \in \overline{\mathcal S}(\mathbb{R}^n)$ by letting $\psi_l^{\infty}$ satisfy~\eqref{equation psi infty spinor 2}, for $2 \leq l \leq 2^m$. 
 We now apply Theorem \ref{thm integral identity} with this $\psi^{\infty}$. In particular, this gives a spinor $\psi \in \overline{\mathcal S}(M)$ with $\psi \to \psi^{\infty}$ at infinity. Also, since $E= \|\mathcal P\|_{\mathrm{tr}}$, we find, by the same argument as in the proof of Theorem \ref{thm main}, that 
  \begin{align*}
      EN(\psi^\infty)
+\sum_{\alpha=1}^m\langle P^\alpha,X^\alpha(\psi^\infty)\rangle
= 0. 
\end{align*}
Then, by the dominant energy condition and the integral identity of Theorem \ref{thm integral identity}, we find that $\overline{\nabla} \psi = 0$ which gives~\eqref{equation spacetime harmonic spinor}.
\end{proof} 

 For index sets $I\subset \{1,...,n\}$ and $\Gamma\subset\{1,...,m\}$, $I_j$ and $\Gamma_\alpha$ are defined as follows:
\begin{equation*}
    I_j:=
    \begin{cases}
         I \cup \{j\},      & \text{if } j \notin I, \\
         I \setminus \{j\}, & \text{if } j \in I,
    \end{cases}
    \qquad \text{and} \qquad
    \Gamma_\alpha:=
    \begin{cases}
         \Gamma \cup \{\alpha\},      & \text{if } \alpha \notin \Gamma, \\
         \Gamma \setminus \{\alpha\}, & \text{if } \alpha \in \Gamma.
    \end{cases}
\end{equation*}

Define $$\omega^\Gamma_I=\langle e_I\eta_\Gamma \psi,\psi\rangle e_I\eta_\Gamma$$ when index sets  $\Gamma$ and $I$ satisfy $|I|=|\Gamma|$, and define 
$$\omega^{\alpha\beta}_{lj}=\langle e_le_j\eta_\alpha\eta_\beta \psi,\psi\rangle$$ when $l\neq j$ and $\alpha\neq\beta$. Otherwise, we set both to zero.

\begin{lemma}\label{lemma NXw}
In the setting of Theorem \ref{l.i spinor} with $\psi$ satisfying~\eqref{equation spacetime harmonic spinor},
\begin{equation}\label{equation NXw}
   \begin{split}
       \nabla_iN=&-k^\alpha_{ij}X^\alpha_j,\\
    \nabla_iX^\alpha_j=&-Nk^{\alpha}_{ij}-k_{il}^\beta \omega_{lj}^{\alpha\beta},\\
\nabla_i\omega^\Gamma_I=& -k^\alpha_{ij}e_j\eta_\alpha \omega^{\Gamma_\alpha}_{I_j}.
   \end{split} 
\end{equation}
\end{lemma}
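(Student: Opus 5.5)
The plan is to differentiate each bilinear quantity with the Leibniz rule for the Hermitian inner product. We work at a point in a normal frame, so that $\nabla_i e_j=0$ there and Clifford multiplication by $e_I\eta_\Gamma$ commutes with $\nabla_i$. The $\eta_\alpha$ are parallel, since $\overline{\mathcal S}$ is a direct sum of copies of $\mathcal S$ with a flat structure. We then substitute $\nabla_i\psi=-\frac12 k^\alpha_{ij}e_j\eta_\alpha\psi$ from \eqref{equation spacetime harmonic spinor}. For any fixed Clifford element $A$ this gives
\begin{align*}
\nabla_i\langle A\psi,\psi\rangle=-\tfrac12 k^\alpha_{ij}\bigl(\langle A e_j\eta_\alpha\psi,\psi\rangle+\langle A\psi,e_j\eta_\alpha\psi\rangle\bigr).
\end{align*}
The adjoint rules $\langle e_j\phi,\chi\rangle=-\langle\phi,e_j\chi\rangle$ and $\langle\eta_\alpha\phi,\chi\rangle=\langle\phi,\eta_\alpha\chi\rangle$ show that $(e_j\eta_\alpha)^\ast=-\eta_\alpha e_j=e_j\eta_\alpha$. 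The second term therefore equals $\langle e_j\eta_\alpha A\psi,\psi\rangle$, and
\begin{align*}
\nabla_i\langle A\psi,\psi\rangle=-\tfrac12 k^\alpha_{ij}\langle (Ae_j\eta_\alpha+e_j\eta_\alpha A)\psi,\psi\rangle.
\end{align*}
The whole lemma then reduces to computing the anticommutator-type combination $Ae_j\eta_\alpha+e_j\eta_\alpha A$ in the Clifford algebra.

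For $A=1$ the combination is $2e_j\eta_\alpha$, which gives $\nabla_iN=-k^\alpha_{ij}X^\alpha_j$.

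For $A=e_j\eta_\alpha$ (derivative in direction $i$, with summation indices renamed $l,\beta$) we need $e_j\eta_\alpha e_l\eta_\beta+e_l\eta_\beta e_j\eta_\alpha$. Split this into cases.
\begin{itemize}
\item If $l=j$ and $\beta=\alpha$, the combination is $2(e_j\eta_\alpha)^2=2$.
\item If $l=j$ and $\beta\ne\alpha$, or $l\ne j$ and $\beta=\alpha$, the two terms cancel by the anticommutation relations.
\item If $l\ne j$ and $\beta\ne\alpha$, both terms equal $e_je_l\eta_\alpha\eta_\beta$, up to the sign produced by reordering $e_j\eta_\alpha e_l\eta_\beta=-e_je_l\eta_\alpha\eta_\beta$.
\end{itemize}
Collecting terms, and using the symmetry of $k^\beta_{il}$ in no essential way, gives $\nabla_iX^\alpha_j=-Nk^\alpha_{ij}-k^\beta_{il}\omega^{\alpha\beta}_{lj}$. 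The sign and index order must be matched against the convention $\omega^{\alpha\beta}_{lj}=\langle e_le_j\eta_\alpha\eta_\beta\psi,\psi\rangle$, which is antisymmetric in both pairs.

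For general $\omega^\Gamma_I$ with $|I|=|\Gamma|=\ell$ we take $A=e_I\eta_\Gamma$. Moving $e_j\eta_\alpha$ through $e_I\eta_\Gamma$ produces a sign $(-1)^{|I|+|\Gamma|}$ from passing the $\eta$ part and the $e$ part, up to corrections. Concretely, $e_j$ anticommutes with $e_i$ for $i\ne j$ and commutes with $e_j$, and similarly for $\eta_\alpha$. So $e_j\eta_\alpha$ commutes with $e_I\eta_\Gamma$ exactly when the counts give an even total sign, and anticommutes otherwise. Tracking this shows the combination is either $0$ or $2e_I\eta_\Gamma e_j\eta_\alpha$. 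The latter is, up to a sign, the basis element $e_{I_j}\eta_{\Gamma_\alpha}$.

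Here the convention that $\omega^{\Gamma'}_{I'}$ vanishes unless $|I'|=|\Gamma'|$ takes care of the cancelling cases. In those cases ($j\in I$, $\alpha\notin\Gamma$ and vice versa) the product has mismatched degrees, and one checks that the anticommutator vanishes exactly then. In the matched cases the scalar coefficient is $\langle e_{I_j}\eta_{\Gamma_\alpha}\psi,\psi\rangle$ times a sign. Multiplying by the Clifford element $e_I\eta_\Gamma$ appearing in the definition of $\omega$ then lets us rewrite the result as $e_j\eta_\alpha\,\omega^{\Gamma_\alpha}_{I_j}$, since $e_j\eta_\alpha e_{I_j}\eta_{\Gamma_\alpha}=\pm e_I\eta_\Gamma$. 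The factor $\frac12$ cancels the factor $2$.

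The main obstacle is the sign bookkeeping in this last step. We must verify that the signs from reordering $e_j\eta_\alpha$ past $e_I\eta_\Gamma$, from reordering into increasing index order, and from $(e_j\eta_\alpha)^2=1$ all combine to exactly the stated coefficient $-k^\alpha_{ij}e_j\eta_\alpha$. We must also check that the mismatched-parity cases vanish identically. I would first verify the parity claim: $e_j\eta_\alpha$ commutes with $e_I\eta_\Gamma$ iff $[j\in I]+[\alpha\in\Gamma]$ is odd when $|I|=|\Gamma|$. I would then treat the two surviving cases ($j\notin I,\alpha\notin\Gamma$ and $j\in I,\alpha\in\Gamma$) separately.
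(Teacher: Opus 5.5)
Your route is the paper's route. You use the Leibniz rule, substitute $\nabla_i\psi=-\frac12 k^\alpha_{ij}e_j\eta_\alpha\psi$, use self-adjointness of $e_j\eta_\alpha$, and split into commuting and anticommuting cases. Your treatment of $N$ and $X^\alpha$ gives the stated formulas.

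For $\omega^\Gamma_I$, however, the parity criterion you propose to verify is reversed. For a generator $v$ and a monomial $B=e_I\eta_\Gamma$ of degree $d=|I|+|\Gamma|$, one has $vB=(-1)^{d-[v\in B]}Bv$. Apply this to $e_j$ and then to $\eta_\alpha$; the two factors $(-1)^d$ cancel, and
\[
e_j\eta_\alpha\,e_I\eta_\Gamma=(-1)^{[j\in I]+[\alpha\in\Gamma]}\,e_I\eta_\Gamma\,e_j\eta_\alpha .
\]
So they commute if and only if $[j\in I]+[\alpha\in\Gamma]$ is \emph{even}, and the condition $|I|=|\Gamma|$ plays no role. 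This is what your list of surviving cases (both in, or both out) actually requires. The criterion as you wrote it would select the cancelling cases instead.

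The sign bookkeeping you single out as the main obstacle also disappears, because the same sign occurs twice. Write $e_I\eta_\Gamma e_j\eta_\alpha=\epsilon\,e_{I_j}\eta_{\Gamma_\alpha}$ with $\epsilon=\pm1$. Inserting $(e_j\eta_\alpha)^2=1$ gives $e_I\eta_\Gamma=\epsilon\,e_{I_j}\eta_{\Gamma_\alpha}e_j\eta_\alpha$, hence
\[
\langle e_I\eta_\Gamma e_j\eta_\alpha\psi,\psi\rangle\,e_I\eta_\Gamma=\epsilon^2\,\omega^{\Gamma_\alpha}_{I_j}\,e_j\eta_\alpha=e_j\eta_\alpha\,\omega^{\Gamma_\alpha}_{I_j}.
\]
The last step uses that $e_j\eta_\alpha$ commutes with $e_{I_j}\eta_{\Gamma_\alpha}=\epsilon\,e_I\eta_\Gamma e_j\eta_\alpha$ in these cases. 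Together with the factor $2$ from the commuting case, this gives $-k^\alpha_{ij}e_j\eta_\alpha\omega^{\Gamma_\alpha}_{I_j}$ directly. This is exactly how the paper finishes the computation, and no reordering signs need to be tracked.
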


\begin{proof}
For the first two identities, we compute
\begin{equation*}
    \nabla_i N=-\frac{1}{2}\langle k^\alpha_{ij}e_j\eta_\alpha \psi,\psi\rangle-\frac{1}{2}\langle  \psi,k^\alpha_{ij}e_j\eta_\alpha\psi\rangle=-k^{\alpha}_{ij}X^\alpha_j
\end{equation*}
and
\begin{align*}
    \nabla_iX^\alpha_j=&-\frac{1}{2}\langle k^\beta_{il}e_j\eta_\alpha e_l\eta_\beta \psi,\psi\rangle
    -\frac{1}{2}\langle e_j\eta_\alpha\psi,k^\beta_{il}e_l\eta_\beta \psi\rangle
    \\=&-Nk^{\alpha}_{ij}-k_{il}^\beta \omega_{lj}^{\alpha\beta}.
\end{align*}
To verify the last identity, we first note that
\begin{equation} \label{omega1}
    \nabla_i \omega^\Gamma_I=-\frac{1}{2}k^{\alpha}_{ij}\left[\langle e_I\eta_\Gamma e_j\eta_\alpha\psi,\psi\rangle
    +\langle e_I\eta_\Gamma \psi,e_j\eta_\alpha\psi\rangle
    \right] e_I\eta_\Gamma.
\end{equation}
If $j\in I$ and $\alpha\notin \Gamma$, or 
$j\notin I$ and $\alpha\in \Gamma$, then $e_j\eta_\alpha$ anticommutes with $e_I \eta_\Gamma$, we obtain
\begin{equation*}
    \langle e_I\eta_\Gamma e_j\eta_\alpha\psi,\psi\rangle +\langle e_I\eta_\Gamma \psi,e_j\eta_\alpha\psi\rangle=0.
\end{equation*}
If $j\notin I$ and $\alpha\notin \Gamma$, or $j\in I$ and $\alpha\in \Gamma$, then $e_j\eta_\alpha$ commutes with $e_I \eta_\Gamma$,
we have 
\begin{align*}
    (\langle e_I\eta_\Gamma e_j\eta_\alpha\psi,\psi\rangle& +\langle e_I\eta_\Gamma \psi,e_j\eta_\alpha\psi\rangle) e_I \eta_\Gamma
    \\=&
    2\langle e_I\eta_\Gamma e_j\eta_\alpha\psi,\psi\rangle e_I \eta_\Gamma e_j\eta_\alpha e_j\eta_\alpha
    \\= &2\langle e_{I_j}\eta_{\Gamma_\alpha}\psi,\psi\rangle e_{I_j}\eta_{\Gamma_\alpha}e_j\eta_\alpha
   \\ =&2e_j\eta_\alpha\langle e_{I_j}\eta_{\Gamma_\alpha}\psi,\psi\rangle e_{I_j}\eta_{\Gamma_\alpha}.
\end{align*}
Hence, the claim follows from equation \eqref{omega1}.
\end{proof}

\begin{lemma} \label{decay1}
In the setting of Theorem \ref{l.i spinor} with $\psi$ satisfying~\eqref{equation spacetime harmonic spinor}, $N$ is bounded and nowhere vanishing, and $X^\alpha$ and $\omega^{\Gamma}_I$ are bounded. Moreover, $X^\alpha-X^\alpha (\psi^\infty)=O(r^{-q})$ and  $|\omega^{\Gamma}_I-\omega^{\Gamma}_I(\psi^\infty)|=O(r^{-q})$, where the norm $|\cdot|$ is the operator norm acting on the spinor bundle.
\end{lemma}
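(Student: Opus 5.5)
The plan is to derive everything from the overdetermined equation \eqref{equation spacetime harmonic spinor} together with the asymptotics $\psi-\psi^\infty\in C^{2,\alpha}_{-q}$ from Theorem~\ref{thm integral identity}.

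\emph{Boundedness and decay.} By Theorem~\ref{thm integral identity}, $\psi=\psi^\infty-\sigma$ with $\sigma\in C^{2,\alpha}_{-q}$. Hence $|\psi|$ is bounded on the asymptotically flat end. On the compact core, continuity of $\psi$ gives boundedness. So $N=|\psi|^2$ is bounded. Each $X^\alpha_j$ and each $\omega^\Gamma_I$ is a quadratic expression $\langle A\psi,\psi\rangle$ with $A$ a product of unit Clifford elements; for $\omega^\Gamma_I$ we multiply by the fixed endomorphism $e_I\eta_\Gamma$, whose operator norm is $1$. Since Clifford multiplication by unit vectors is an isometry on $\overline{\mathcal S}$, Cauchy--Schwarz gives $|X^\alpha|\le C N$ and $|\omega^\Gamma_I|\le N$, which are bounded. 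For the decay, write
\[
\langle A\psi,\psi\rangle-\langle A\psi^\infty,\psi^\infty\rangle=-\langle A\sigma,\psi\rangle-\langle A\psi^\infty,\sigma\rangle ,
\]
and bound this by $C|\sigma|(|\psi|+|\psi^\infty|)=O(r^{-q})$. One subtlety is that $\psi^\infty$ is only constant with respect to the asymptotic frame, and $e_i$ differs from the coordinate frame $\partial_i$ by $O(r^{-q})$. This error is absorbed into the same rate. Applying this with $A=e_i\eta_\alpha$ and $A=e_I\eta_\Gamma$ gives the stated estimates for $X^\alpha$ and for $\omega^\Gamma_I$ in operator norm.

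\emph{Nonvanishing of $N$.} Use \eqref{equation spacetime harmonic spinor} directly: $|\nabla_i\psi|\le \frac12\sum_\alpha|k^\alpha||\psi|$. Equivalently, by Lemma~\ref{lemma NXw}, $|\nabla N|\le C|k|\,N$ since $|X^\alpha|\le N$. Thus $|\nabla\log N|\le C|k|$ wherever $N>0$, with $k$ locally bounded. Now suppose $N(p)=0$ at some point $p$. Take a path from $p$ to a far point; along it, the Gronwall inequality $|\tfrac{d}{dt}N|\le C N$ (with $C$ uniform on the compact path) forces $N\equiv0$ along the path. Because $M$ is connected, $N$ would vanish everywhere. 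This contradicts $N\to N(\psi^\infty)=|\psi^\infty|^2>0$, which is positive since $\psi^\infty_1\neq0$. Hence $N$ is nowhere vanishing.

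The main obstacle is mild. It consists of applying Gronwall carefully near a zero: apply it to $N$ itself rather than $\log N$, so that the argument does not presuppose positivity, and track the frame discrepancy at infinity when comparing to $\omega^\Gamma_I(\psi^\infty)$. Both steps are standard.
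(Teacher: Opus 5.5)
Your proof is correct and follows the paper's argument. The decay estimates come from expanding $\langle A\psi,\psi\rangle-\langle A\psi^\infty,\psi^\infty\rangle$ using $\psi-\psi^\infty=O(r^{-q})$ and the frame comparison. Nonvanishing of $N$ follows from $|\nabla N|\le CN$ (via $\nabla_iN=-k^\alpha_{ij}X^\alpha_j$ and $|X^\alpha|\le N$) together with $N\to|\psi^\infty|^2>0$.

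The only minor difference is how you get boundedness of $N$, $X^\alpha$, $\omega^\Gamma_I$. You take it directly from the asymptotics of $\psi$, compactness of the core, and Cauchy--Schwarz with isometric Clifford multiplication, whereas the paper phrases it through the ODE system \eqref{equation NXw}. Both work.
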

\begin{proof}
The first equation of \eqref{equation NXw} implies $|\nabla_i N|\le CN$ for some constant $C=C(k^1,\dots, k^m)>0$. Due to the asymptotics of $N$, we conclude that $N$ is bounded on $M$. By~\eqref{equation NXw}, $X^\alpha$ and $\omega^\Gamma_I$ are bounded.

Due to the asymptotic behavior at infinity of $N$, $N$ cannot vanish everywhere. Since $N\ne0$ somewhere, $|\nabla_i N|\le CN$ implies that $N$ is nowhere vanishing.

     Note that $\psi-\psi^\infty=O(r^{-q})$. Then, 
    \[X^\alpha-X^\alpha(\psi^\infty)=\langle e_i\eta_\alpha \psi,\psi\rangle e_i-\langle e^\infty_i\eta_\alpha \psi^\infty,\psi^\infty\rangle_{g_{\mathbb{R}^n}} e^\infty_i=O(r^{-q}).\]
    The estimates for $\omega^\Gamma_I$ follow similarly.
\end{proof}

\begin{theorem} \label{Nomega X}
For multi-indices $\Gamma$ and $I$ satisfying $|\Gamma|=|I|=\ell$, let 
\begin{equation*}
    \widetilde{X}^{\Gamma}_I:=(-1)^{\frac{\ell(\ell-1)}{2}}\sum_{\sigma\in S_\Gamma} X_{I}^{\sigma(\Gamma)}e_{I}\eta_{\sigma(\Gamma)},
\end{equation*}
where $I=\{i_1,\dots, i_\ell\}$,  $X^{\sigma(\Gamma)}_{I}=X^{\sigma_1}_{i_1}\cdots X^{\sigma_{\ell}}_{i_{\ell}}$, $S_\Gamma$  is the permutation group of $\Gamma$. 
Then, we have
    \[N^{\ell-1}\omega^\Gamma_I =\widetilde{X}^\Gamma_I.\]
    In particular, $N\omega^{\alpha\beta}_{lj}=X^\alpha_jX_l^{\beta}-X_{l}^{\alpha}X_{j}^{\beta}$.
\end{theorem}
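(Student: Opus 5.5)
The plan is to follow the introduction's strategy: exploit the system \eqref{equation NXw} of Lemma \ref{lemma NXw}, the asymptotics of Lemma \ref{decay1}, and the nullity of $\psi^\infty$ from Lemma \ref{u1} and Theorem \ref{l.i spinor}. The first step is to check the identity at infinity. Since $\psi^\infty$ satisfies $\overline{e}^\infty_\alpha\eta_\alpha\psi^\infty=\psi^\infty$ for each $\alpha$, the elements $\overline e^\infty_\alpha\eta_\alpha$ are commuting involutions fixing $\psi^\infty$. We can then compute $\omega^\Gamma_I(\psi^\infty)$ and $X^\alpha(\psi^\infty)$ explicitly: $X^\alpha(\psi^\infty)=N(\psi^\infty)\,(\overline e^\infty_\alpha)^\flat$ up to the sign convention. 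Moreover $\langle e_I\eta_\Gamma\psi^\infty,\psi^\infty\rangle$ is, up to the sign $(-1)^{\ell(\ell-1)/2}$ coming from reordering $e_{i_1}\eta_{\alpha_1}\cdots e_{i_\ell}\eta_{\alpha_\ell}$ into $e_I\eta_\Gamma$, the determinant of the $\ell\times\ell$ minor of the matrix $(\overline e^\infty_\alpha)_i$ times $N(\psi^\infty)$. This determinant expansion is exactly the permutation sum in $\widetilde X^\Gamma_I$. Hence $F:=\sum_{|I|=|\Gamma|}|N^{|I|-1}\omega^\Gamma_I-\widetilde X^\Gamma_I|^2$ tends to $0$ at infinity, with rate $O(r^{-q})$ by Lemma \ref{decay1}.

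The second step is to differentiate $D^\Gamma_I:=N^{\ell-1}\omega^\Gamma_I-\widetilde X^\Gamma_I$ using \eqref{equation NXw} and show that $\nabla_i D^\Gamma_I$ is a linear combination, with coefficients bounded by $C(|k|,N,|X|,|\omega|)$ (all bounded by Lemma \ref{decay1}), of the $D^{\Gamma'}_{I'}$ with $|I'|=|\Gamma'|\in\{\ell-1,\ell,\ell+1\}$, after multiplying by suitable powers of $N$. Concretely, $\nabla_i(N^{\ell-1}\omega^\Gamma_I)=-(\ell-1)N^{\ell-2}k^\alpha_{ij}X^\alpha_j\omega^\Gamma_I-N^{\ell-1}k^\alpha_{ij}e_j\eta_\alpha\omega^{\Gamma_\alpha}_{I_j}$. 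The last term splits into the part with $j\in I,\alpha\in\Gamma$, which lowers the degree to $\ell-1$, and the part with $j\notin I,\alpha\notin\Gamma$, which raises it to $\ell+1$. On the other side, $\nabla_i\widetilde X^\Gamma_I$ uses the Leibniz rule and $\nabla_iX^\alpha_j=-Nk^\alpha_{ij}-k^\beta_{il}\omega^{\alpha\beta}_{lj}$. The $-Nk$ terms should match the degree-lowering terms via the inductive identity $N^{\ell-2}\omega^{\Gamma'}_{I'}=\widetilde X^{\Gamma'}_{I'}$ (up to $D$ errors). The $\omega^{\alpha\beta}$ terms, after replacing $N\omega^{\alpha\beta}_{lj}$ by $X^\alpha_jX^\beta_l-X^\alpha_lX^\beta_j$ plus a $D$ error, should match the degree-raising terms and the $X\omega$ term through a Laplace/Cauchy--Binet type expansion of $(\ell+1)$-minors of the matrix $(X^\alpha_j)$. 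Since $N$ is nowhere vanishing, we may divide by $N$ where needed; locally $N$ is bounded below, and globally $N$ is bounded below too, as it tends to a nonzero constant at infinity and is continuous and positive.

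This yields $|\nabla F|\le CF$. Since $F\to0$ at infinity, integrating $|\nabla\log F|\le C$ along curves to infinity gives a contradiction wherever $F>0$: the function $F$ cannot decay to zero at a polynomial rate when its logarithmic derivative is bounded along a ray, and any positive value propagates to a positive bound along a path. Thus $F\equiv0$, which proves $N^{\ell-1}\omega^\Gamma_I=\widetilde X^\Gamma_I$, and the case $\ell=2$ gives the stated formula for $N\omega^{\alpha\beta}_{lj}$ after unwinding the sign $(-1)^{1}$ and the two permutations. To be careful about the ray argument, I would instead use the ODE comparison: along a ray $\gamma(t)$ to infinity, $F(\gamma(t))\ge F(\gamma(0))e^{-Ct}$, whereas $F=O(r^{-2q})$. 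This alone is not contradictory, so the better route is uniqueness for the linear ODE system satisfied by the vector $(D^\Gamma_I)$. That system is $\nabla D=A\cdot D$ with $A$ bounded, and $D$ vanishes at infinity. Propagating from a far point where $D$ is small gives $|D(x)|\le e^{Cd}\sup_{S_r}|D|$, where $d$ is the distance from $x$ to $S_r$, which grows like $r$. This is still not enough by itself, so I would additionally use that the asymptotic region is nearly flat and $k=O(r^{-q-1})$. Then $A=O(r^{-q-1})$ is integrable along radial rays, so $|D(x)|\le C|D(y)|$ uniformly as $y\to\infty$, forcing $D\equiv0$ near infinity, and then everywhere by connectedness and the local ODE uniqueness.

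The main obstacle is the algebraic step: verifying that the derivative of $\widetilde X^\Gamma_I$ exactly reproduces the degree-raising Clifford terms $e_j\eta_\alpha\,\widetilde X^{\Gamma_\alpha}_{I_j}$ modulo lower-order $D$ terms, with correct signs, including the factor $(-1)^{\ell(\ell-1)/2}$. I would first verify the cases $\ell=1$ (trivial) and $\ell=2$ by hand and then set up an induction on $\ell$, using the antisymmetrization structure of the permutation sum.
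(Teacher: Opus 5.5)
Your proposal is essentially the paper's proof. The paper derives exactly the coupled linear system you describe for $N^{\ell-1}\omega^\Gamma_I-\widetilde X^\Gamma_I$: degree-lowering terms from $j\in I$, $\alpha\in\Gamma$; degree-raising terms from $j\notin I$, $\alpha\notin\Gamma$ via the expansion of $(\ell+1)$-minors; the $X\omega$ term; and the $\ell=2$ error from substituting for $N\omega^{\alpha\beta}_{lj}$, with the remaining cross terms cancelling in pairs of permutations that differ by a transposition. It then checks vanishing at infinity using $\overline e^\infty_\alpha\eta_\alpha\psi^\infty=\psi^\infty$ and concludes $F\equiv 0$.

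Your extra care in the last step is warranted. The paper only records $|\nabla F|\le CF$ with a constant $C$ together with $F=O(r^{-q})$, and that alone gives no contradiction. Your observation that every coefficient carries a factor of $k=O(r^{-q-1})$, which is integrable along radial rays, is what actually closes the argument.
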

\begin{proof}
The last equation of \eqref{equation NXw} implies
\begin{equation} \label{ODE Nomega} 
    \nabla_i(N^{\ell-1}\omega^\Gamma_I)=-(\ell-1)N^{\ell-2}k^\alpha_{ij}X^\alpha_j\omega^\Gamma_I -\sum_{\substack{j\in I, \alpha\in \Gamma\\ \text{or }j\notin I, \alpha\notin \Gamma} }N^{\ell-1}k^\alpha_{ij}e_j\eta_\alpha \omega^{\Gamma_\alpha}_{I_j}. 
\end{equation}
Let $\widehat{\sigma}_{j}(\Gamma)=\{\sigma_1,\dots,\sigma_{j-1},\sigma_{j+1},\dots, \sigma_\ell\}$.
Applying the second equation of \eqref{equation NXw},
\begin{equation} \label{tilde X ODE}
    \nabla_i \widetilde{X}^\Gamma_I= (-1)^{\frac{\ell(\ell-1)}{2}}\sum_{\sigma} \sum_{j=1}^\ell X^{\widehat{\sigma}_{j}(\Gamma)}_{I_{i_j}}\left(-k_{ii_j}^{\sigma_j} N-k_{il}^\beta\omega^{\sigma_j\beta}_{l i_j}\right)
    e_{I}\eta_{\sigma(\Gamma)}.
\end{equation}
Replacing $\omega^{\sigma_j\beta}_{l i_j}$ with $N^{-1}(X^{\sigma_j}_{i_j}X^{\beta}_l-X^{\sigma_j}_l X^{\beta}_{i_j})$ on the right-hand side of Equation \eqref{tilde X ODE}, not assuming that these two quantities are necessarily equal, we obtain
\begin{equation} \label{expression}
    (-1)^{\frac{\ell(\ell-1)}{2}}\sum_{\sigma} \sum_{j=1}^\ell X^{\widehat{\sigma}_{j}(\Gamma)}_{I_{i_j}}\left(-k_{ii_j}^{\sigma_j} N-k_{il}^\beta N^{-1}X^{\sigma_j}_{i_j}X^{\beta}_l+k_{il}^\beta N^{-1}X^{\sigma_j}_l X^{\beta}_{i_j}\right)
    e_{I}\eta_{\sigma(\Gamma)}.
\end{equation}
 Now, we analyze each term in Equation \eqref{expression} and later compare the results with those in Equation \eqref{ODE Nomega}. First, we have
\begin{equation*}
    (-1)^{\frac{\ell(\ell-1)}{2}}\sum_{\sigma} \sum_{j=1}^\ell -X^{\widehat{\sigma}_{j}(\Gamma)}_{I_{i_j}}k_{ii_j}^{\sigma_j} N e_I \eta_{\sigma(\Gamma)}
    = -\sum_{j\in I,\alpha\in \Gamma}
    N k^\alpha_{ij} e_j\eta_\alpha \widetilde{X}_{I_j}^{\Gamma_\alpha}.
\end{equation*}
For the second term, we obtain
\begin{equation*}
    (-1)^{\frac{\ell(\ell-1)}{2}}\sum_{\sigma} \sum_{j=1}^\ell -X^{\widehat{\sigma}_{j}(\Gamma)}_{I_{i_j}} k^\beta_{il} N^{-1}X^{\sigma_j}_{i_j}X^\beta_l e_I\eta_{\sigma(\Gamma)}
    = -\ell N^{-1} k^\beta_{il}X^\beta_l \widetilde{X}_{I}^\Gamma.
\end{equation*}
To handle the last term in Equation \eqref{expression}, we consider several cases.
For $\beta\in \Gamma$ with $\beta=\alpha_s$ and $\beta\neq \sigma_j$, let $\varsigma\in S_\Gamma$ be the permutation that swaps $\alpha_j$ and $\alpha_s$. Since $\eta_{\sigma\circ \varsigma(\Gamma)}=-\eta_{\sigma(\Gamma)}$, we have
\begin{align*}
    X^{\widehat{\sigma}_{j}(\Gamma)}_{I_{i_j}}  X^{\alpha_s}_{i_j}X^{\sigma_j}_l e_I\eta_{\sigma(\Gamma)}+X^{\widehat{(\sigma\circ \varsigma)}_{s}(\Gamma)}_{I_{i_s}}  X^{\alpha_s}_{i_s}X^{(\sigma\circ\varsigma)_s}_l e_I\eta_{\sigma\circ \varsigma(\Gamma)}=0.
\end{align*}
Then, we obtain
\begin{align*}
(-1)^{\frac{\ell(\ell-1)}{2}}\sum_{\sigma} \sum_{j=1}^\ell \sum_{\beta\in\Gamma,\beta\neq \sigma_j}  X^{\widehat{\sigma}_{j}(\Gamma)}_{I_{i_j}} k_{il}^\beta N^{-1}  X^\beta_{i_j}X^{\sigma_j}_l e_I\eta_{\sigma(\Gamma)}=0.
\end{align*}
When $\beta=\sigma_j$, we have 
\begin{equation*}(-1)^{\frac{\ell(\ell-1)}{2}}\sum_{\sigma} \sum_{j=1}^\ell X^{\widehat{\sigma}_{j}(\Gamma)}_{I_{i_j}} k^{\sigma_j}_{il} X^{\sigma_j}_{i_j}X^{\sigma_j}_l e_I\eta_{\sigma(\Gamma)}
    = \sum_{\beta\in \Gamma} k^\beta_{il}X^\beta_l \widetilde{X}_{I}^\Gamma.
\end{equation*}
Suppose $\beta\notin \Gamma$ and $l:=i_s\in I$ and $l\neq i_j$.
In this case,
\begin{align*}
      X^{\widehat{\sigma}_{j}(\Gamma)}_{I_{i_j}}  X^{\beta}_{i_j}X^{\sigma_j}_{i_s} e_I\eta_{\sigma(\Gamma)}+ X^{\widehat{(\sigma\circ\varsigma)}_{j}(\Gamma)}_{I_{i_j}}  X^{\beta}_{i_j}X^{(\sigma\circ \varsigma)_j}_{i_s} e_I\eta_{\sigma\circ \varsigma(\Gamma)}=0.
\end{align*}
It implies that
\begin{equation*}
   (-1)^{\frac{\ell(\ell-1)}{2}}\sum_{\sigma} \sum_{j=1}^\ell \sum_{\beta\notin \Gamma}\sum_{l\neq i_j, l\in I}  X^{\widehat{\sigma}_{j}(\Gamma)}_{I_{i_j}} k_{il}^\beta N^{-1}  X^\beta_{i_j}X^{\sigma_j}_l e_I\eta_{\sigma(\Gamma)}
   =0.
\end{equation*}
If $\beta\notin \Gamma$ and $l=i_j$, we have 
\begin{equation}
(-1)^{\frac{\ell(\ell-1)}{2}}\sum_{\sigma} \sum_{j=1}^\ell\sum_{\beta\notin \Gamma} X^{\widehat{\sigma}_{j}(\Gamma)}_{I_{i_j}} k^{\beta}_{ii_j} N^{-1} X^{\beta}_{i_j}X^{\sigma_j}_{i_j} e_I\eta_{\sigma(\Gamma)}
    = \sum_{\beta\notin \Gamma,l\in I}k^\beta_{il} N^{-1}X^\beta_l \widetilde{X}_{I}^\Gamma.
\end{equation}

When $\beta\notin \Gamma$ and $l\notin\Gamma$, we obtain 
\begin{equation*} (-1)^{\frac{\ell(\ell-1)}{2}}\sum_{\sigma} \sum_{j=1}^\ell \sum_{\beta\notin \Gamma,l\notin I}  X^{\widehat{\sigma}_{j}(\Gamma)}_{I_{i_j}} k_{il}^\beta N^{-1} X^\beta_{i_j}X^{\sigma_j}_l e_I\eta_{\sigma(\Gamma)}
 =\sum_{\beta\notin \Gamma, l\notin I}N^{-1}(-k^\beta_{il} e_l \eta_\beta \widetilde{X}^{\Gamma_\beta}_{I_l}+k^{\beta}_{il}X^\beta_l\widetilde{X}^\Gamma_I).
\end{equation*}
Therefore, combining all the equations above, we have
\begin{equation} \label{ODE F0}
\begin{split}
     &\nabla_i (N^{\ell-1}\omega_I^\Gamma-\widetilde{X}^\Gamma_I)
    \\
    & \qquad = -(\ell-1)N^{-1}k^{\alpha}_{ij}X^\alpha_j (N^{\ell-1}\omega_I^\Gamma-\widetilde{X}^\Gamma_I)
    -\sum_{j\in I,\alpha\in \Gamma}k^\alpha_{ij}e_j\eta_\alpha N(N^{\ell-2}\omega^{\Gamma_\alpha}_{I_j}-\widetilde{X}^{\Gamma_\alpha}_{I_j})
   \\&\qquad \,\quad -\sum_{j\notin I,\alpha\notin \Gamma}k^\alpha_{ij}e_j\eta_\alpha N^{-1}(N^{\ell}\omega^{\Gamma_\alpha}_{I_j}-\widetilde{X}^{\Gamma_\alpha}_{I_j})
   \\&\qquad \,\quad +\sum_{\sigma\in S_\Gamma}\sum_{j=1}^\ell (-1)^{\frac{\ell(\ell-1)}{2}}X^{\widehat{\sigma}_j(\Gamma)}_{I_{i_j}} k^{\beta}_{il} N^{-1}(N\omega_{li_j}^{\sigma_j\beta}-X^{\sigma_j}_{i_j}X^\beta_l+X^{\sigma_j}_lX^\beta_{i_j})e_I\eta_{\sigma(\Gamma)}.
\end{split}
\end{equation}
Let 
\[F:=\sum_{\ell=1}^m\sum_{|I|=|\Gamma|=\ell}|N^{\ell-1}\omega_I^\Gamma-\widetilde{X}^\Gamma_I|^2.\]
From Equation \eqref{ODE F0},  we obtain $|\nabla_i F|\le CF$ for some $C>0$.

By Theorem \ref{l.i spinor}, we have that $N(\psi^\infty)=|X^\alpha(\psi^\infty)|_{g_{\mathbb{R}^n}}$. We extend $\{\overline{e}^\infty_\alpha\}_{\alpha=1}^m$ to an orthonormal basis $\{\overline{e}^\infty_i\}_{i=1}^n$ of $\mathbb{R}^n$.
If there exists $\alpha\in \Gamma\setminus I$, then $\overline{e}^\infty_\alpha\eta_\alpha$ anti-commutes with $\overline{e}^\infty_I\eta_\Gamma$. Thus, $\omega^\Gamma_I(\psi^\infty)=0$ and $\widetilde{X}(\psi^\infty)=0$. If $I=\Gamma$, then $\overline{e}^\infty_\alpha\eta_\alpha\psi^\infty=\psi^\infty$ implies that   $N^{|I|-1}(\psi^\infty)\omega^\Gamma_I(\psi^\infty)=\widetilde{X}(\psi^\infty)$.
Therefore, $|N^{\ell-1}\omega_I^\Gamma-\widetilde{X}_I^\Gamma|=O(r^{-q})$ by Lemma \ref{decay1}, and we have $F\equiv 0$. 
This completes the proof.
\end{proof}

\begin{theorem} \label{N=X}
In the setting of Theorem \ref{l.i spinor} with $\psi$ satisfying~\eqref{equation spacetime harmonic spinor}, $N=|X^\alpha|$ for any $\alpha=1,...,m$, i.e.\ $\psi$ is everywhere null. Moreover, $X^\alpha$ is perpendicular to $X^\beta$, for any $\alpha\neq \beta$.
\end{theorem}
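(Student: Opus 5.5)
We plan to deduce Theorem \ref{N=X} directly from the algebraic identity of Theorem \ref{Nomega X} at level $\ell=2$, combined with two elementary Clifford-algebra facts about $\psi$. Throughout we use that $N>0$ everywhere by Lemma \ref{decay1}.

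\textbf{Step 1: a cross term.} Fix $\alpha\neq\beta$ and $l\neq j$. We compute the cross term $\operatorname{Re}\langle e_l\eta_\alpha\psi,e_j\eta_\beta\psi\rangle$ using the relations $\langle e_l v,w\rangle=-\langle v,e_l w\rangle$ and $\langle\eta_\alpha v,w\rangle=\langle v,\eta_\alpha w\rangle$. Moving the Clifford factors across, this term equals $\pm\operatorname{Re}\langle e_le_j\eta_\alpha\eta_\beta\psi,\psi\rangle$. That is, up to sign it is $\omega^{\alpha\beta}_{lj}$, which is real: the element $e_le_j\eta_\alpha\eta_\beta$ is self-adjoint.

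\textbf{Step 2: the timelike bound.} Consider the spinor $\Phi=\sum_j (a_jX^\alpha_j+b_jX^\beta_j)\,e_j$ applied appropriately. More concretely, consider the unit timelike vector $\nu=c\eta_\alpha+s\eta_\beta$ with $c^2+s^2=1$. The standard Cauchy--Schwarz bound gives
\[
|X^\nu|\le N, \qquad X^\nu=cX^\alpha+sX^\beta,
\]
exactly as in the proof of Lemma \ref{u1}. Since this holds for all $(c,s)$, we obtain
\[
|X^\alpha|,\,|X^\beta|\le N \quad\text{and}\quad c^2|X^\alpha|^2+2cs\langle X^\alpha,X^\beta\rangle+s^2|X^\beta|^2\le N^2.
\]

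\textbf{Step 3: the norm identity.} Next we use Theorem \ref{Nomega X} with $\ell=2$, namely $N\omega^{\alpha\beta}_{lj}=X^\alpha_jX^\beta_l-X^\alpha_lX^\beta_j$, together with $\ell=1$, which is trivial. The key is to compute $|X^\alpha|^2$ in a second way, by expanding
\[
\sum_j (X^\alpha_j)^2=\sum_j\langle e_j\eta_\alpha\psi,\psi\rangle^2 .
\]
Here we would rather exploit $e_j\eta_\alpha$ being self-adjoint with square $1$. Take $v=X^\alpha/|X^\alpha|$ and set $P=v\eta_\alpha$, so that $P^2=1$ and $P$ is self-adjoint. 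Then $\langle P\psi,\psi\rangle=|X^\alpha|$.

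Consider now $\omega$-identities of the form $N\omega = X\wedge X$ contracted against $X^\alpha$. One gets
\[
N\sum_{l}X^\alpha_l\,\omega^{\alpha\beta}_{lj}=\cdots .
\]
This route gets messy. Instead, the cleanest path is a continuity/ODE argument of the same type as the proof of Theorem \ref{Nomega X}. Using Lemma \ref{lemma NXw} and the now-established identity $N\omega^{\alpha\beta}_{lj}=X^\alpha_jX^\beta_l-X^\alpha_lX^\beta_j$, we compute
\[
\nabla_i(N^2-|X^\alpha|^2)=-2Nk^\gamma_{ij}X^\gamma_j+2X^\alpha_j\bigl(Nk^\alpha_{ij}+k^\beta_{il}\omega^{\alpha\beta}_{lj}\bigr).
\]
Substituting $\omega$, the $\beta\ne\alpha$ terms become $N^{-1}k^\beta_{il}\bigl(X^\alpha_jX^\alpha_jX^\beta_l-X^\alpha_lX^\beta_jX^\alpha_j\bigr)$. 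Together with $-2Nk^\beta_{ij}X^\beta_j$, these combine to
\[
-2N^{-1}k^\beta_{il}\Bigl(X^\beta_l(N^2-|X^\alpha|^2)+X^\alpha_l\langle X^\alpha,X^\beta\rangle\Bigr).
\]
The $\gamma=\alpha$ terms cancel.

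Similarly, we differentiate $\langle X^\alpha,X^\beta\rangle$ and obtain a linear expression in $N^2-|X^\gamma|^2$ and in the inner products $\langle X^\gamma,X^\delta\rangle$. Hence the quantity
\[
G=\sum_\alpha(N^2-|X^\alpha|^2)^2+\sum_{\alpha\neq\beta}\langle X^\alpha,X^\beta\rangle^2
\]
satisfies $|\nabla G|\le CG$, with $C$ controlled by $k$ and by the boundedness from Lemma \ref{decay1}.

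\textbf{Step 4: conclusion.} At infinity, $\psi^\infty$ is null by Theorem \ref{l.i spinor} and Lemma \ref{u1}, and the vectors $X^\alpha(\psi^\infty)=N\,\overline e^\infty_\alpha$ are orthogonal. To see this, use $\overline e^\infty_\alpha\eta_\alpha\psi^\infty=\psi^\infty$ together with the anticommutation argument from the end of the proof of Theorem \ref{Nomega X}. Combined with the decay in Lemma \ref{decay1}, this gives $G=O(r^{-q})$, so $G\to0$ at infinity. Integrating $|\nabla G|\le CG$ along curves from infinity then gives $G\equiv0$.

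\textbf{Main obstacle.} The main obstacle is verifying the closed linear system in Step 3 for $\nabla_i\langle X^\alpha,X^\beta\rangle$, in particular the terms with $\gamma\notin\{\alpha,\beta\}$. Each such term must carry a factor that vanishes when $G=0$. We expect this to follow by the same substitution of $\omega^{\alpha\gamma},\omega^{\beta\gamma}$ via Theorem \ref{Nomega X}.
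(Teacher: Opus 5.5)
Your final route is the paper's proof: you differentiate each $N^2-|X^\alpha|^2$ and $\langle X^\alpha,X^\beta\rangle$ via Lemma~\ref{lemma NXw}, substitute $N\omega^{\alpha\beta}_{lj}=X^\alpha_jX^\beta_l-X^\alpha_lX^\beta_j$ from Theorem~\ref{Nomega X}, and run the ODE argument on $G$ (exactly the paper's $Q$) using $|X^\alpha|\le N$ and vanishing at infinity; Step~1 and the abandoned first half of Step~3 are not needed. The obstacle you flag resolves as you expect: for $\gamma\notin\{\alpha,\beta\}$ the substitution gives $-N^{-1}k^\gamma_{il}\bigl(2\langle X^\alpha,X^\beta\rangle X^\gamma_l-X^\alpha_l\langle X^\gamma,X^\beta\rangle-X^\beta_l\langle X^\gamma,X^\alpha\rangle\bigr)$, so every term carries an off-diagonal inner product, in agreement with the paper's formula for $\nabla_i\sum_{\alpha\neq\beta}\langle X^\alpha,X^\beta\rangle$.
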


\begin{proof}
Using Lemma \ref{lemma NXw} and Theorem \ref{Nomega X}, we have
\begin{equation}
\begin{split}
    &\frac{1}{2}\nabla_i \sum_\alpha(N^2-|X^\alpha|^2)
    \\&\quad = -mNk^{\alpha}_{ij}X^\alpha_j+Nk^{\alpha}_{ij}X^\alpha_{j}+k^{\beta}_{il}\omega^{\alpha\beta}_{lj}X^\alpha_j
    \\&\quad =-(m-1)Nk^{\alpha}_{ij}X^\alpha_j+k^{\beta}_{il}N^{-1}(X^\alpha_j X^\beta_l-X^\alpha_l X^\beta_j)X^\alpha_j
    \\&\quad = \sum_{\alpha\neq \beta}k^{\beta}_{ij}X^\beta_j N^{-1}(|X^\alpha|^2-N^2)-\sum_{\alpha\neq \beta}k^\beta_{il}N^{-1}X^\alpha_l \langle X^\alpha, X^\beta\rangle.
\end{split}
\end{equation}
Moreover, 
\begin{equation}
    \begin{split}
        \nabla_i & \sum_{\alpha\neq \beta}\langle X^\alpha, X^\beta\rangle
      \\&\quad =\sum_{\alpha\neq \beta}
      (-Nk^\alpha_{ij}-k_{il}^\gamma\omega_{lj}^{\alpha\gamma})X_j^\beta+(-Nk^\beta_{ij}-k_{il}^\gamma\omega_{lj}^{\beta\gamma})X_j^\alpha
      \\& \quad= \sum_{\alpha, \,\beta, \gamma \text{ distinct}} k^{\gamma}_{il}N^{-1}\left(2\langle X^\beta,X^\gamma\rangle X^\alpha_l-2\langle X^\alpha, X^\beta\rangle X^\gamma_l\right)
      \\&\quad \quad +\sum_{\alpha\neq \beta}\left[2k^{\alpha}_{il}X^\beta_l N^{-1}(|X^\alpha|^2-N^2)-2\langle X^\alpha,X^\beta\rangle k^\beta_{il}X^\beta_l\right]
    \end{split}
\end{equation}
Let 
\[Q=\sum_\alpha (N^2-|X^\alpha|^2)^2+\sum_{\alpha\neq\beta}\langle X^\alpha, X^\beta\rangle^2.\]
Using the above formulas, together with $|X^\alpha| \le N$, we obtain the ODE $|\nabla Q|\le CQ$. 
Combining this with the decay estimate $Q=O(r^{-q})$, we have $Q\equiv 0$ which yields $N=|X^\alpha|$.
\end{proof}
From now on, by Theorem \ref{N=X}, we choose $e_\alpha = N^{-1}X^\alpha$ and extend $\{e_\alpha\}_{\alpha=1}^m$ to an orthonormal frame $\{e_i\}_{i=1}^n$.
\begin{proposition} \label{Sigma}
    In the setting of Theorem \ref{l.i spinor} with $\psi$ satisfying~\eqref{equation spacetime harmonic spinor}, the vector fields $\{X^\alpha\}_{\alpha=1}^m$ are normal to a codimension $m$ foliation $\{\Sigma^{n-m}\}$, whose second fundamental form vector is $-N^{-1}k^\alpha_{\textsc{AB}}X^\alpha$. 
\end{proposition}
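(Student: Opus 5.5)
The plan is to apply Frobenius to the rank $n-m$ distribution $\mathcal D=\{X^1,\dots,X^m\}^\perp$, and then read off the second fundamental form of the leaves from the second identity in Lemma~\ref{lemma NXw}. Theorem~\ref{N=X} gives that the $X^\alpha$ are pairwise orthogonal with $|X^\alpha|=N$. By Lemma~\ref{decay1}, $N$ is nowhere vanishing, so the $X^\alpha$ are everywhere linearly independent. Hence $\mathcal D$ is a smooth distribution of constant rank $n-m$, and $e_\alpha=N^{-1}X^\alpha$ is an orthonormal frame of its orthogonal complement. Throughout, capital indices $A,B$ denote directions tangent to $\mathcal D$, i.e.\ $e_A$ with $A>m$ in the frame fixed after Theorem~\ref{N=X}.

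\textbf{Step 1: restrict the covariant derivative of $X^\alpha$ to $\mathcal D$.} From Lemma~\ref{lemma NXw} we have $\nabla_iX^\alpha_j=-Nk^\alpha_{ij}-k^\beta_{il}\omega^{\alpha\beta}_{lj}$. Theorem~\ref{Nomega X} gives $N\omega^{\alpha\beta}_{lj}=X^\alpha_jX^\beta_l-X^\alpha_lX^\beta_j$. Taking $i=A$ and $j=B$, both terms of $\omega^{\alpha\beta}_{lB}$ contain a factor $X^\alpha_B$ or $X^\beta_B$, which vanish because $e_B\perp X^\gamma$ for every $\gamma$. Hence
\begin{align*}
\nabla_AX^\alpha_B=-Nk^\alpha_{AB},
\end{align*}
which is symmetric in $A,B$. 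The key simplification is exactly this: substituting the algebraic identity of Theorem~\ref{Nomega X} kills the $\omega$-term on $\mathcal D$.

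\textbf{Step 2: integrability.} For sections $Y,Z$ of $\mathcal D$ we have $\langle Y,X^\alpha\rangle=\langle Z,X^\alpha\rangle=0$. Differentiating these identities gives
\begin{align*}
\langle[Y,Z],X^\alpha\rangle=-(\nabla_YX^\alpha)(Z)+(\nabla_ZX^\alpha)(Y)=N\bigl(k^\alpha(Y,Z)-k^\alpha(Z,Y)\bigr)=0,
\end{align*}
using Step 1 and the symmetry of $k^\alpha$. So $\mathcal D$ is involutive, and the Frobenius theorem gives a foliation $\{\Sigma^{n-m}\}$ whose leaves have normal bundle spanned by $\{X^\alpha\}$.

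\textbf{Step 3: second fundamental form.} For $A,B$ tangent to a leaf, the normal component of $\nabla_{e_A}e_B$ is $\sum_\alpha\langle\nabla_{e_A}e_B,e_\alpha\rangle e_\alpha$. We compute
\begin{align*}
\langle\nabla_{e_A}e_B,e_\alpha\rangle=-\langle e_B,\nabla_{e_A}(N^{-1}X^\alpha)\rangle=-N^{-1}\nabla_AX^\alpha_B=k^\alpha_{AB}.
\end{align*}
The term involving the derivative of $N^{-1}$ drops out because $X^\alpha_B=0$. Expressed in the normal frame $X^\alpha=Ne_\alpha$, this gives the vector-valued second fundamental form $\pm N^{-1}k^\alpha_{AB}X^\alpha$. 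With the paper's sign convention, in which the second fundamental form is defined via $-\langle\nabla\,\cdot\,,\cdot\rangle$ relative to the normals, this is $-N^{-1}k^\alpha_{AB}X^\alpha$.

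I expect no serious obstacle. The only delicate points are these: the substitution from Theorem~\ref{Nomega X} in Step 1, which is what makes the $\omega$-term disappear on $\mathcal D$; the global nondegeneracy of $\{X^\alpha\}$, which needs both $N>0$ and the orthogonality from Theorem~\ref{N=X}; and keeping track of the sign convention in Step 3.
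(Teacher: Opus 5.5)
Your proposal is correct and follows essentially the same route as the paper. You substitute $N\omega^{\alpha\beta}_{lj}=X^\alpha_jX^\beta_l-X^\alpha_lX^\beta_j$ from Theorem~\ref{Nomega X} into Lemma~\ref{lemma NXw} to get $\nabla_AX^\alpha_B=-Nk^\alpha_{AB}$, deduce involutivity from its symmetry, and read off the second fundamental form with the convention $h^\alpha_{AB}=\langle\nabla_Ae_\alpha,e_B\rangle$ that the paper uses in Lemma~\ref{A phi}. The only difference is that you verify the Frobenius criterion by hand, whereas the paper cites Lee's Proposition 19.8.
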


\begin{proof} 
By Theorem \ref{N=X},  $\{e_{\textsc{A}}\}_{\textsc{A}=m+1}^n$ is an orthonormal basis of the orthogonal complement of the subspace $\operatorname{Span}\{X^1,\dots,X^m\}$. 
    Combining the identity $N\omega_{lj}^{\alpha\beta}=X^\alpha_j X_l^\beta-X^\beta_j X_l^\alpha$ from Theorem \ref{Nomega X} with the second equation of \eqref{equation NXw} yields
    \begin{equation} \label{aXb}
        \nabla_\textsc{A}X^\alpha_\textsc{B} = -Nk^{\alpha}_{\textsc{AB}}, \qquad \text{where } \textsc{A},\textsc{B}=m+1,\dots,n.  
    \end{equation}
    Hence, $\nabla_\textsc{A}X^\alpha_\textsc{B} - \nabla_\textsc{B}X^\alpha_\textsc{A} = 0$. By \cite[Proposition 19.8]{JohnLee}, the distribution $\operatorname{Span}\{e_{m+1},\dots,e_n\}$ is involutive. Thus, it induces a codimension $m$ foliation with unit normal vectors $e_\alpha$.  
    Moreover, equation \eqref{aXb} implies that the second fundamental form vector is $-N^{-1}k^\alpha_{\textsc{AB}}X^\alpha$.
\end{proof}

Let $\phi \in \mathcal{S}(M^n)$. Note that $\mathcal{S}(M^n)|_{\Sigma}$ is a principal $\text{Spin}(n-m)$-bundle. By abuse of notation, we write $\phi$ also for its restriction $\phi|_\Sigma$.

\begin{lemma} \label{A phi}
    Let $h^{\alpha}_{AB}e_\alpha$ be the second fundamental form vector of $\Sigma$. Then,
    \[
        \nabla^\Sigma_A \phi = \nabla_A\phi + \frac{1}{2}h^\alpha_{AB}e_B e_\alpha\phi - \frac{1}{4}\langle \nabla_A e_\alpha, e_\beta\rangle e_\alpha e_\beta \phi.
    \]
\end{lemma}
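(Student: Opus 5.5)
The plan is to compare the two spin connections via their local connection one-forms in the adapted frame $\{e_A,e_\alpha\}$, where $e_\alpha=N^{-1}X^\alpha$ are unit normals to $\Sigma$ and $\{e_A\}_{A=m+1}^n$ is tangent. Recall the standard local formula: for a local section $\phi=[\widetilde s,\phi_0]$ with $\phi_0$ constant in a spin frame lifting $\{e_i\}$,
\[
\nabla_X\phi=\frac14\sum_{i,j}\langle\nabla_X e_i,e_j\rangle e_ie_j\phi ,
\]
with sign conventions consistent with $e_ie_j+e_je_i=-2\delta_{ij}$, i.e.\ $\nabla(e_j\phi)=(\nabla e_j)\phi+e_j\nabla\phi$. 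The restriction $\mathcal S(M)|_\Sigma$ is regarded as a spinor bundle of $\Sigma$ (twisted by the normal bundle) through the same frame; its intrinsic connection $\nabla^\Sigma$ uses only the tangential coefficients $\langle\nabla^\Sigma_A e_B,e_C\rangle=\langle\nabla_Ae_B,e_C\rangle$, together with the normal-bundle connection coefficients if one twists. I read the statement as saying that $\nabla^\Sigma$ is the pure tangential connection, which is why the normal--normal term is subtracted.

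Expanding $\nabla_A\phi$, I split the double sum $\sum_{i,j}$ into three blocks: tangent--tangent, mixed (tangent--normal and normal--tangent), and normal--normal. The tangent--tangent block equals $\frac14\langle\nabla_Ae_B,e_C\rangle e_Be_C\phi$, which by definition is $\nabla^\Sigma_A\phi$. The normal--normal block is $\frac14\langle\nabla_Ae_\alpha,e_\beta\rangle e_\alpha e_\beta\phi$. For the mixed block I use $\langle\nabla_Ae_B,e_\alpha\rangle=h^\alpha_{AB}$ and antisymmetry $\langle\nabla_Ae_\alpha,e_B\rangle=-h^\alpha_{AB}$. The two mixed pieces combine to
\[
\frac14 h^\alpha_{AB}(e_Be_\alpha-e_\alpha e_B)\phi=\frac12 h^\alpha_{AB}e_Be_\alpha\phi,
\]
using $e_\alpha e_B=-e_Be_\alpha$. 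This gives $\nabla_A\phi=\nabla^\Sigma_A\phi-\frac12h^\alpha_{AB}e_Be_\alpha\phi+\frac14\langle\nabla_Ae_\alpha,e_\beta\rangle e_\alpha e_\beta\phi$, and rearranging yields the claimed identity.

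The computation is routine. The main obstacle is fixing conventions: the sign of the $\frac14$ connection formula under the $-2\delta_{ij}$ Clifford relation, the sign convention for $h^\alpha_{AB}$ (consistent with Proposition~\ref{Sigma}, where $h^\alpha_{AB}=\langle\nabla_Ae_B,e_\alpha\rangle$), and the precise meaning of $\nabla^\Sigma$ on the restricted bundle. I would first check the sign by comparing with the known $m=1$ hypersurface formula $\nabla^\Sigma_A\phi=\nabla_A\phi+\frac12h_{AB}e_B\nu\phi$, where the normal--normal term is absent, and then adjust if necessary.
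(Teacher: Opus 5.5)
Your decomposition of $\frac14\langle\nabla_Ae_i,e_j\rangle e_ie_j\phi$ into tangent--tangent, mixed and normal--normal blocks is the paper's argument, and your reading of $\nabla^\Sigma$ as the purely tangential connection matches the paper. However, your sign bookkeeping does not close. With the convention you state, $h^\alpha_{AB}=\langle\nabla_Ae_B,e_\alpha\rangle$, your mixed-block computation correctly gives $+\frac12h^\alpha_{AB}e_Be_\alpha\phi$. The next line should therefore be $\nabla_A\phi=\nabla^\Sigma_A\phi+\frac12h^\alpha_{AB}e_Be_\alpha\phi+\frac14\langle\nabla_Ae_\alpha,e_\beta\rangle e_\alpha e_\beta\phi$, and rearranging gives the lemma with the \emph{opposite} sign on the $h$-term. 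The minus sign you wrote in that line does not follow from the line before it; it is what makes your result agree with the target.

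The underlying mistake is the convention you attribute to the paper. The paper uses $h^\alpha_{AB}=\langle\nabla_Ae_\alpha,e_B\rangle=-\langle\nabla_Ae_B,e_\alpha\rangle$, stated explicitly in its proof. This is also what Proposition~\ref{Sigma} encodes. From $\nabla_AX^\alpha_B=-Nk^\alpha_{AB}$ and $X^\alpha=Ne_\alpha\perp e_B$ you get $\langle\nabla_Ae_\alpha,e_B\rangle=-k^\alpha_{AB}$. This matches the stated second fundamental form vector $-k^\alpha_{AB}e_\alpha$ and the substitution $h^\alpha_{AB}=-k^\alpha_{AB}$ in Lemma~\ref{parallel}. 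With the paper's convention the mixed block is $-\frac12h^\alpha_{AB}e_Be_\alpha\phi$, so your displayed formula for $\nabla_A\phi$ becomes correct and the lemma follows as stated.

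Your proposed $m=1$ check, $\nabla^\Sigma_A\phi=\nabla_A\phi+\frac12h_{AB}e_B\nu\phi$, also holds only for $h_{AB}=\langle\nabla_A\nu,e_B\rangle$. Carrying it out with your stated convention would have exposed the discrepancy. The fix is to adopt the paper's convention and remove the compensating sign flip.
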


\begin{proof}
    Since
    \[
        \nabla_{i} \phi = e_i(\phi) + \frac{1}{4} \langle \nabla_{e_i} e_j, e_l\rangle e_j e_l \phi,
    \]
    we obtain
    \begin{align*}
        \nabla^{\Sigma}_A \phi &= e_A(\phi) + \frac{1}{4}\langle \nabla^{\Sigma}_A e_B, e_C\rangle e_B e_C\phi \\
        &= \nabla_A\phi - \frac{1}{2}\langle \nabla_A e_\alpha, e_B\rangle e_\alpha e_B \phi
            - \frac{1}{4}\langle \nabla_A e_\alpha, e_\beta\rangle e_\alpha e_\beta \phi.
    \end{align*}
    Note that $\langle \nabla_A e_\alpha, e_B\rangle = h^\alpha_{AB}$, so the claimed identity follows.
\end{proof}
 Let $\psi=(\psi_1,...,\psi_{2^m})$, where $\psi_i\in \mathcal{S}(M^n)$. 
\begin{lemma} \label{parallel}
 The spinor $N^{-\frac{1}{2}}\psi_1$ is parallel on $\Sigma$.
\end{lemma}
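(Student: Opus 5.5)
The plan is to compute $\nabla^\Sigma_A(N^{-1/2}\psi_1)$ with Lemma \ref{A phi}, replace $\nabla_A\psi_1$ using \eqref{equation spacetime harmonic spinor}, and then use the null structure of $\psi$ to show that everything cancels.

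\emph{Step 1: algebraic relations from nullity.} By Theorem \ref{N=X} we have $N=|X^\alpha|$, and with $e_\alpha=N^{-1}X^\alpha$,
\[
\langle e_\alpha\eta_\alpha\psi,\psi\rangle=N=|\psi|^2 .
\]
Since $e_\alpha\eta_\alpha$ is self-adjoint and squares to the identity, the equality case of Cauchy--Schwarz gives the pointwise identity
\[
e_\alpha\eta_\alpha\psi=\psi \qquad (\alpha=1,\dots,m,\ \text{no sum}).
\]
Writing $\psi=(v_1,v_2)$ as in Lemma \ref{u1}, the relation for $\alpha=m$ reads $v_2=-e_mv_1$. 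Iterating as in the proof of Lemma \ref{u1}, every component $\psi_l$ equals $\pm e_{\alpha_j}\cdots e_{\alpha_1}\psi_1$, and $\psi_1$ satisfies no further constraint.

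\emph{Step 2: the first component of the equation.} Project the equation $\nabla_A\psi=-\tfrac12\sum_\alpha k^\alpha_{Aj}e_j\eta_\alpha\psi$ onto the first component. The operator $\eta_\alpha$ swaps blocks, so the first component of $e_j\eta_\alpha\psi$ is $\pm e_j\psi_{l(\alpha)}$, where $\psi_{l(\alpha)}=\pm e_\alpha\psi_1$. Alternatively, and more cleanly, use $\eta_\alpha\psi=e_\alpha^{-1}\psi=-e_\alpha\psi$, which gives
\[
\nabla_A\psi=\tfrac12 k^\alpha_{Aj}\,e_je_\alpha\psi .
\]
Here $e_j,e_\alpha$ act on $\overline{\mathcal S}$ with sign twists, but on the first block the action is the ordinary Clifford action. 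This yields
\[
\nabla_A\psi_1=\tfrac12 k^\alpha_{Aj}e_je_\alpha\psi_1 .
\]
Split $j$ into tangential indices $B$ and normal indices $\beta$:
\[
\nabla_A\psi_1=\tfrac12 k^\alpha_{AB}e_Be_\alpha\psi_1+\tfrac12 k^\alpha_{A\beta}e_\beta e_\alpha\psi_1 .
\]

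\emph{Step 3: substitute the geometry.} Proposition \ref{Sigma} gives $h^\alpha_{AB}=-k^\alpha_{AB}$, so in Lemma \ref{A phi} the term $\tfrac12 h^\alpha_{AB}e_Be_\alpha\psi_1$ cancels the tangential $k$-term exactly. For the normal connection term, $\langle\nabla_Ae_\alpha,e_\beta\rangle=N^{-1}\nabla_AX^\alpha_\beta$ holds because $X^\alpha\perp e_\beta$ for $\beta\ne\alpha$. Lemma \ref{lemma NXw} together with Theorem \ref{Nomega X} then gives
\[
\nabla_AX^\alpha_\beta=-Nk^\alpha_{A\beta}-k^\gamma_{Al}N^{-1}\bigl(X^\alpha_\beta X^\gamma_l-X^\alpha_lX^\gamma_\beta\bigr)=-Nk^\alpha_{A\beta}+Nk^\beta_{A\alpha}.
\]
Antisymmetrizing in $\alpha,\beta$, the term $-\tfrac14\langle\nabla_Ae_\alpha,e_\beta\rangle e_\alpha e_\beta\psi_1$ should cancel the term $\tfrac12 k^\alpha_{A\beta}e_\beta e_\alpha\psi_1$ for $\alpha\neq\beta$. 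The diagonal pieces with $\alpha=\beta$ contribute $-\tfrac12 k^\alpha_{A\alpha}\psi_1$, which I expect to match $-\tfrac12 N^{-1}\nabla_AN\,\psi_1$. Indeed $\nabla_AN=-k^\alpha_{Aj}X^\alpha_j=-Nk^\alpha_{A\alpha}$. Hence
\[
\nabla^\Sigma_A\psi_1=\tfrac{1}{2}N^{-1}(\nabla_AN)\psi_1,
\]
so $\nabla^\Sigma_A\bigl(N^{-1/2}\psi_1\bigr)=0$.

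\emph{Main obstacle.} The hard part is bookkeeping. The signs in the twisted Clifford action on $\overline{\mathcal S}$ must be tracked so that the first-block projection really is the untwisted action. The antisymmetrization of the normal connection term must also produce exactly the right coefficient. I would verify the case $m=1$ first, where this reduces to the known pp-wave computation of \cite{HZ24}, and then check one off-diagonal pair $\alpha\ne\beta$ explicitly.
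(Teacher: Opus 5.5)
Your proposal is correct and follows the paper's proof essentially step for step. The ingredients are the same: Lemma~\ref{A phi} with $h^\alpha_{AB}=-k^\alpha_{AB}$; the substitution $(\eta_\alpha\psi)_1=-e_\alpha\psi_1$, which you justify via the equality case of Cauchy--Schwarz, a point the paper leaves implicit; the normal connection coefficients $\langle\nabla_Ae_\alpha,e_\beta\rangle=k^\beta_{A\alpha}-k^\alpha_{A\beta}$ from Lemma~\ref{lemma NXw} and Theorem~\ref{Nomega X}; and finally $\nabla_AN=-Nk^\alpha_{A\alpha}$.

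There is one small slip in your prose. The diagonal term $-\tfrac12k^\alpha_{A\alpha}\psi_1$ equals $+\tfrac12N^{-1}(\nabla_AN)\psi_1$, not $-\tfrac12N^{-1}(\nabla_AN)\psi_1$. Your final display has the correct sign, and it is exactly this term that cancels the derivative of $N^{-1/2}$.
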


\begin{proof}
    By Proposition~\ref{Sigma} and $e_\alpha=N^{-1}X^\alpha$, we have $h^\alpha_{AB} = -k^{\alpha}_{AB}$.   Equation \eqref{equation NXw} together with $N^{-1}\omega^{\alpha\beta}_{lj} = g_{\alpha j}g_{\beta l} - g_{\alpha l}g_{\beta j}$ by Theorem \ref{Nomega X} imply
    \begin{align*}
        \langle \nabla_A e_\alpha, e_\gamma \rangle &= \langle N^{-1} \nabla_A X^\alpha, e_\gamma \rangle - N^{-1}(\nabla_A N)g_{\alpha\gamma} \\
        &= -k^\alpha_{A\gamma} - k^{\beta}_{A\beta}g_{\alpha\gamma} + k^\gamma_{A\alpha} + k^\beta_{A\beta}g_{\alpha\gamma} \\
        &= -k^\alpha_{A\gamma} + k^\gamma_{A\alpha}.
    \end{align*}
    Applying Lemma~\ref{A phi} and $(\eta_\alpha \psi)_1=-e_\alpha\psi_1$ yields
    \begin{align*}
        \nabla^\Sigma_A\psi_1 &= \nabla_A \psi_1 + \frac{1}{2}h^\alpha_{AB}e_B e_\alpha\psi_1 - \frac{1}{4}\langle \nabla_A e_\alpha, e_\gamma\rangle e_\alpha e_\gamma \psi_1 \\
        &= -\frac{1}{2}k^\alpha_{Aj}e_j(\eta_\alpha\psi)_1 - \frac{1}{2}k^{\alpha}_{AB}e_Be_\alpha\psi_1
            - \frac{1}{4}\bigl(k^\gamma_{A\alpha} - k^{\alpha}_{A\gamma}\bigr)e_\alpha e_\gamma\psi_1 \\
        &= \frac{1}{2}k^\alpha_{Aj}e_j e_\alpha\psi_1 - \frac{1}{2}k^{\alpha}_{AB}e_Be_\alpha\psi_1
            - \sum_{\gamma \neq \alpha} \frac{1}{2}k^{\alpha}_{A\gamma}e_\gamma e_\alpha \psi_1 \\
        &= -\frac{1}{2} k^\alpha_{A\alpha}\psi_1.
    \end{align*}
    Consequently,
    \[
        \nabla^\Sigma_A \bigl(N^{-\frac{1}{2}}\psi_1\bigr) = - \frac{1}{2}k^\alpha_{A\alpha}N^{-\frac{1}{2}}\psi_1 + \nabla^\Sigma_A \bigl(N^{-\frac{1}{2}}\bigr)\psi_1 = 0. \qedhere
    \]
\end{proof}

\begin{lemma} \label{NX u psi}
    For any  spinor $u\in \overline{\mathcal{S}}(M^n)$ 
    asymptotic to a constant spinor $u^\infty\in \overline{\mathcal{S}}(\mathbb{R}^n)$ that satisfies $\overline{e}^\infty_\alpha\eta_\alpha u^\infty=u^\infty$ for all $1\le \alpha\le m$, $\|u^\infty\|_{g_{\mathbb{R}^n}}=\|\psi^\infty\|_{g_{\mathbb{R}^n}}$
    and
    $\nabla_i u = -\frac{1}{2}k^\alpha_{ij}e_j\eta_\alpha u$, 
    we have that $N(u) = N(\psi)$ and $X^\alpha(u) = X^\alpha(\psi)$ hold everywhere for every $\alpha = 1,\dots,m$.
\end{lemma}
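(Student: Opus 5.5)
Since $u$ satisfies the same overdetermined equation \eqref{equation spacetime harmonic spinor} as $\psi$ and has the same type of asymptotics, all of Lemma \ref{lemma NXw}, Lemma \ref{decay1}, Theorem \ref{Nomega X} and Theorem \ref{N=X} apply verbatim to $u$. In particular $N(u)=|X^\alpha(u)|$ and $N(u)\,\omega^{\alpha\beta}_{lj}(u)=X^\alpha_j(u)X^\beta_l(u)-X^\alpha_l(u)X^\beta_j(u)$. Substituting this into Lemma \ref{lemma NXw}, both pairs $(N(\psi),X^\alpha(\psi))$ and $(N(u),X^\alpha(u))$ solve the same closed first-order system
\begin{align*}
\nabla_i N=-k^\alpha_{ij}X^\alpha_j,\qquad \nabla_iX^\alpha_j=-Nk^\alpha_{ij}-k^\beta_{il}N^{-1}\bigl(X^\alpha_jX^\beta_l-X^\alpha_lX^\beta_j\bigr).
\end{align*}
This system is nonlinear only through the factor $N^{-1}$, which is harmless because $N$ is nowhere vanishing and bounded (Lemma \ref{decay1}).

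At infinity, $u^\infty$ and $\psi^\infty$ both satisfy $\overline e^\infty_\alpha\eta_\alpha(\cdot)=(\cdot)$ and have the same norm. Hence $N(u^\infty)=N(\psi^\infty)$. Moreover, the computation in Lemma \ref{u1} gives $X^\alpha(u^\infty)=N(u^\infty)\,\overline e^\infty_\alpha=X^\alpha(\psi^\infty)$: the component of $X^\alpha$ along $\overline e^\infty_\alpha$ equals $N$ and $|X^\alpha|\le N$, so $X^\alpha$ is exactly $N\overline e^\infty_\alpha$. Consequently the differences $a=N(u)-N(\psi)$ and $b^\alpha=X^\alpha(u)-X^\alpha(\psi)$ are $O(r^{-q})$ by Lemma \ref{decay1}.

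Next I will take the difference of the two systems and set
\[
G=a^2+\sum_\alpha|b^\alpha|^2 .
\]
The linear terms are immediately bounded by $C\sqrt G$ via boundedness of $k$. For the quadratic term, I expand
\[
N(u)^{-1}X(u)X(u)-N(\psi)^{-1}X(\psi)X(\psi)
\]
telescopically. Here I need a positive lower bound for $N$ in order to control $N(u)^{-1}-N(\psi)^{-1}$ by $C|a|$. This is the main obstacle: $N$ is only known to be nowhere vanishing, not uniformly bounded below on all of $M$.

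My first attempt is to avoid this by working with the homogeneous quantities $N\omega$ directly: differentiate $N(u)\omega(u)$ using Lemma \ref{lemma NXw} as in the proof of Theorem \ref{Nomega X}, so that no $N^{-1}$ appears. Alternatively, I can obtain a local lower bound: $|\nabla\log N|\le C$ implies $N$ is bounded below on compact sets, and $N\to N(\psi^\infty)>0$ at infinity, so in fact $\inf_M N>0$ once $M$ has finitely many ends and is complete. Either way, this yields $|\nabla G|\le CG$.

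Finally, since $G=O(r^{-q})\to0$ at infinity, integrating along curves from any point to infinity (exactly as in Lemma \ref{decay1} and Theorem \ref{N=X}) gives $G\equiv0$. Hence $N(u)=N(\psi)$ and $X^\alpha(u)=X^\alpha(\psi)$ everywhere.
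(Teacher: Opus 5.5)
Your argument is correct, but you close the system differently from the paper. The paper never reduces to $(N,X^\alpha)$. The whole tower $\{\omega^\Gamma_I\}_{|I|=|\Gamma|}$, with $N$ and $X^\alpha$ as levels $\ell=0,1$, satisfies the \emph{linear} system \eqref{equation NXw}. So the difference of the towers of $u$ and $\psi$ obeys $|\nabla D|\le C|k|D$, and no lower bound on $N$ is needed. The only thing left to check is agreement at infinity. This follows from $N(u^\infty)=N(\psi^\infty)$, $X^\alpha(u^\infty)=X^\alpha(\psi^\infty)$, and the algebraic identity $N^{\ell-1}\omega^\Gamma_I=\widetilde X^\Gamma_I$ for constant spinors fixed by every $\overline e^\infty_\alpha\eta_\alpha$.

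You instead insert the interior identity $N\omega^{\alpha\beta}_{lj}=X^\alpha_jX^\beta_l-X^\alpha_lX^\beta_j$ of Theorem \ref{Nomega X}, applied to $u$ as well. This gives a closed nonlinear system in $1+mn$ unknowns, which is why you need $\inf_M N>0$. Your second fix supplies this correctly. On each end you could also use $|\nabla\log N|\le C|k|$ together with the radial integrability of $|k|=O(r^{-q-1})$, which avoids assuming $N\to N(\psi^\infty)$ on every end. That same integrability is what makes your final step $G\equiv0$ work, since every term of your difference system carries a factor of $k$.

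Your first suggested fix does not close as stated, because differentiating $N\omega^{\alpha\beta}$ brings in level-three $\omega$'s. Following it through just leads back to the paper's tower, where the $N$-weights are unnecessary.

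The trade-off is this. The paper's route is more economical: it needs no interior identities for $u$ and no positivity of $N$, and it gives the stronger conclusion $\omega^\Gamma_I(u)=\omega^\Gamma_I(\psi)$ for all $I,\Gamma$. Yours tracks fewer unknowns, but relies on Theorem \ref{Nomega X} for $u$, whose proof already uses the full tower.
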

\begin{proof}
Note that $N(u)$ and $\omega^\Gamma_I(u)$ obey the ODE system \eqref{equation NXw}. Combining this with the decay estimates in Lemma~\ref{decay1}, it suffices to show $\omega^\Gamma_I(\psi^\infty) = \omega^\Gamma_I(u^\infty)$ in order to prove $\omega^\Gamma_I(u) = \omega^\Gamma_I(\psi)$.

 By construction, we have $N(u^\infty)=N(\psi^\infty)$ and $X^\alpha(u^\infty)=N(u^\infty)\overline{e}^\infty_\alpha=X^\alpha(\psi^\infty)$.
Following the argument at the end of the proof in Theorem \ref{Nomega X}, we have $N^{\ell-1}(u^\infty)\omega^\Gamma_I(u^\infty)=\widetilde{X}^\Gamma_I(u^\infty)$. Consequently, $\omega^\Gamma_I(\psi^\infty) = \omega^\Gamma_I(u^\infty)$, which completes the proof.     
\end{proof}

\begin{proposition}\label{proposition flat}
Each leaf $\Sigma$ of the foliation is flat.
\end{proposition}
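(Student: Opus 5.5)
Each leaf $\Sigma^{n-m}$ will be shown flat by producing a full basis of parallel spinors on $\Sigma$. By Lemma \ref{parallel}, any $\psi$ produced by Theorem \ref{l.i spinor} gives a parallel spinor $N^{-\frac12}\psi_1$ on $\Sigma$. Theorem \ref{l.i spinor} lets us prescribe $\psi^\infty_1\in\mathcal S(\mathbb R^n)$ arbitrarily; the remaining components $\psi^\infty_l$ are then determined by \eqref{equation psi infty spinor 2}. So we fix a basis $\{\phi^\infty_a\}$ of $\mathcal S(\mathbb R^n)$, normalized to a common norm, and for each $a$ obtain $u^{(a)}$ solving \eqref{equation spacetime harmonic spinor} with $u^{(a)}\to u^{(a),\infty}$ and $u^{(a),\infty}_1=\phi^\infty_a$. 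Likewise, linear combinations $\sum_a c_a\phi^\infty_a$ with $|c|$ normalized give further solutions, since the equation is linear.

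The key point is that the foliation and the normalization do not depend on the choice of spinor. By Lemma \ref{NX u psi}, every such $u$ with $\|u^\infty\|=\|\psi^\infty\|$ has $N(u)=N(\psi)$ and $X^\alpha(u)=X^\alpha(\psi)$ everywhere. Hence all these spinors determine the same frame $e_\alpha=N^{-1}X^\alpha$, the same foliation $\Sigma$ from Proposition \ref{Sigma}, and the same function $N$. Lemma \ref{parallel} then applies verbatim to each $u^{(a)}$, so every $N^{-\frac12}u^{(a)}_1$ is a parallel section of $\mathcal S(M)|_\Sigma$ with respect to $\nabla^\Sigma$. Since $\mathcal S(M)|_\Sigma$ is a sum of copies of spinor bundles of $\Sigma$, parallel spinors there force the curvature of $\Sigma$ to act trivially.

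Next we show that the parallel spinors $N^{-\frac12}u^{(a)}_1$ span $\mathcal S(M)|_\Sigma$ at every point. Linear independence at infinity holds because $u^{(a)}_1\to\phi^\infty_a$. Parallel sections are determined by their value at a single point, so if they are independent at one point of a leaf they are independent everywhere on it. The subtlety is that a leaf need not reach the asymptotic end. I would handle this by pointwise linear algebra instead. For any linear combination $u=\sum c_a u^{(a)}$, which again solves \eqref{equation spacetime harmonic spinor}, we have $N(u)=|c|^2 N(\psi)>0$ by Lemma \ref{NX u psi}. The first component satisfies $|u_1|^2=2^{-m}N(u)$, since the constraint $e_\alpha\eta_\alpha u=u$ is preserved and each component is $\pm$ Clifford products of $u_1$ by unit vectors. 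Hence $u_1\neq0$ whenever $c\neq0$. Thus $c\mapsto\sum c_a u^{(a)}_1(x)$ is injective, and by dimension count the values $u^{(a)}_1(x)$ form a basis at every $x$.

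Finally, a full trivialization of $\mathcal S(M)|_\Sigma$ by $\nabla^\Sigma$-parallel sections gives vanishing curvature $R^\Sigma_{ABCD}e_Ce_D\phi=0$ for all $\phi$. Since Clifford multiplication by 2-forms is faithful, $R^\Sigma\equiv0$, so $\Sigma$ is flat. The main obstacle is justifying the identity $|u_1|^2=2^{-m}N(u)$ at interior points. First I would try to show that the condition $e_\alpha\eta_\alpha u=u$ with $e_\alpha=N^{-1}X^\alpha$ holds everywhere. This follows because $\langle e_\alpha\eta_\alpha u,u\rangle=N^{-1}X^\alpha_\alpha=N=|u|^2$, with equality in Cauchy--Schwarz and $e_\alpha\eta_\alpha$ an isometry. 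From this the components of $u$ are recovered from $u_1$ exactly as in Lemma \ref{u1}.
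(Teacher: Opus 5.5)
Your proposal is correct and follows essentially the same route as the paper. You take solutions from Theorem \ref{l.i spinor} for a basis of asymptotic values $\psi^\infty_1$, obtain a common $N$, $X^\alpha$ and foliation via Lemma \ref{NX u psi}, get parallelism of $N^{-1/2}u_1$ on $\Sigma$ from Lemma \ref{parallel}, and conclude flatness from a full frame of $\nabla^\Sigma$-parallel spinors.

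Your pointwise argument fills in the linear independence at each point, which the paper only asserts: $e_\alpha\eta_\alpha u=u$ holds everywhere, hence $|u_1|^2=2^{-m}N(u)>0$ for every nontrivial combination. Take the $\phi^\infty_a$ orthonormal if you want the exact identity $N(u)=|c|^2N(\psi)$. Positivity of $N(u)$ already follows as in Lemma \ref{decay1}, so this is only a cosmetic fix.
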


\begin{proof}
By Theorem \ref{l.i spinor}, for any constant nonvanishing spinor
$u_1^\infty \in \mathcal{S}(\mathbb{R}^n)$, there exists
$u \in \overline{\mathcal{S}}(M^n)$ such that
\[
\nabla_i u = -\frac{1}{2} k^\alpha_{ij} e_j \eta_\alpha u.
\]
Combined with Lemma~\ref{NX u psi}, all such spinors $u$ correspond to the
same $N$ and $X^\alpha$. Consequently, $e_\alpha \eta_\alpha u = u$,
and they induce the same foliation.

Moreover, by Lemma \ref{parallel}, the spinor $N^{-1/2} u_1$ is parallel on
$\Sigma$. By choosing different asymptotic spinors $u_1^\infty$, we obtain
$2^n$ spinors in $\mathcal{S}(M)$ that are parallel on $\Sigma$ and linearly
independent at each point of $M^n$. 
Therefore, $R^{\Sigma}_{ABCD}=0$, i.e., $\Sigma$ is flat.
\end{proof}
\begin{proof}[Proof of Theorem \ref{thm rigidity 1}]
    The statement of Theorem \ref{thm rigidity 1} follows from Proposition \ref{Sigma} combined with Proposition \ref{proposition flat}.
\end{proof}

\subsection{Multiple Killing developments}

\begin{proof}[Proof of Theorem \ref{thm rigidity 2}]
On $\overline M= M\times \mathbb R^m$, define
\begin{align}\label{eq: Killing development}
      \overline{g}=-N^2\sum_{\alpha=1}^md\tau^2_\alpha+g_{ij}(dx_i+X_i^\alpha d\tau_\alpha)(dx_j+X_j^\beta d\tau_\beta).
\end{align}
Note that the timelike unit normal vectors are given by
\begin{align*}
    \eta_\alpha=N^{-1}(\partial_{\tau_\alpha}-X^\alpha).
\end{align*}
We have
\begin{align*}
    \langle \overline\nabla_i\eta_\alpha, e_j\rangle=&
    N^{-1}\langle \overline \nabla_i(\partial_{\tau_\alpha}-X^\alpha),e_j\rangle\\
    =&N^{-1}\overline\Gamma_{i\alpha}^k\overline g_{kj}-N^{-1}\nabla_iX_j^\alpha-N^{-1}X_k^\alpha \overline \Gamma_{ij}^k\\
       =&\frac12 N^{-1}(\nabla_i \overline g_{\alpha j}+\nabla_\alpha \overline g_{ij}-\nabla_j\overline g_{i\alpha}))-N^{-1}\nabla_iX_j^\alpha\\
    =&-\frac12N^{-1}(\nabla_iX^\alpha_j+\nabla_jX^\alpha_i)\\
    =&k_{ij}^\alpha.
\end{align*}
Note that the last equation crucially uses that $k^\beta = f^\beta g$ for all $\beta\ne\alpha$.
In the Killing development, we extend $\psi$ trivially and denote this new spinor by $\overline \psi$, i.e., $\partial_{\tau_\alpha}\overline{\psi}=0$ for $\alpha=1,...,m$.
We obtain
\begin{align*}
    \overline\nabla_i\overline\psi=0,\qquad \overline\nabla_\alpha\overline\psi=0.
\end{align*}
The verification of these identities is standard although tedious and will be carried out in Appendices \ref{appendix connection formula} and \ref{appendix connection coefficients}. These imply that $\overline{\psi}$ is a nontrivial parallel null spinor, i.e. $(\overline{M}^{n+m}, \overline{g})$ is a generalized $pp$-wave.
\end{proof}

\section{Spinorial symmetries}\label{section spinorial symmetries}

Let $\psi$ be a spinor solving 
\begin{align*}
     \nabla_i\psi=-\frac12\sum_{\alpha=1}^m k^\alpha_{ij}e_j\eta_\alpha\psi.
\end{align*}
In this section, we show that we automatically obtain a second spinor solving the same PDE:

\begin{theorem} \label{ODE varphi}
Let $\mathcal{M}=\{1,2,...,m\}$. 
Define 
\[\varphi=\eta_\mathcal{M} \sum_{|I|=|\Gamma|} (-1)^{|\Gamma|}\langle e_I \eta_\Gamma\psi,\psi\rangle e_I \eta_\Gamma\psi.\]
Then, $\varphi $ also solves
\begin{equation*}
    \nabla_i\varphi+\frac12\sum_{\alpha=1}^mk^\alpha_{ij}e_j\eta_\alpha\varphi=0.
\end{equation*}
\end{theorem}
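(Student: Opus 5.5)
The plan is to treat $\varphi$ as a Clifford-algebra-valued function applied to $\psi$ and reduce the claim to an algebraic identity between $\nabla_i$ of the coefficient forms and an anticommutator. Set
\[
\Omega:=\sum_{|I|=|\Gamma|}(-1)^{|\Gamma|}\omega^\Gamma_I,\qquad \varphi=\eta_{\mathcal M}\,\Omega\,\psi,\qquad A_i:=\sum_{\alpha}k^\alpha_{ij}e_j\eta_\alpha ,
\]
where the sum includes $I=\Gamma=\emptyset$, for which $\omega^\emptyset_\emptyset=N$. Then $\nabla_i\psi=-\tfrac12 A_i\psi$, and since $e_I,\eta_\Gamma,\eta_{\mathcal M}$ are built from a local orthonormal frame, the product rule gives
\[
\nabla_i\varphi=\eta_{\mathcal M}(\nabla_i\Omega)\psi-\tfrac12\eta_{\mathcal M}\Omega A_i\psi .
\]
Frame derivatives of the Clifford elements are already accounted for by treating $\omega^\Gamma_I$ as a frame-independent Clifford-valued form, exactly as in Lemma \ref{lemma NXw}.

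\emph{Step 1 (moving $A_i$ past $\eta_{\mathcal M}$).} Each $e_j$ anticommutes with every $\eta_\beta$, so $e_j\eta_{\mathcal M}=(-1)^m\eta_{\mathcal M}e_j$. Likewise $\eta_\alpha$ anticommutes with the $m-1$ factors $\eta_\beta$, $\beta\ne\alpha$, and commutes with itself, so $\eta_\alpha\eta_{\mathcal M}=(-1)^{m-1}\eta_{\mathcal M}\eta_\alpha$. Hence
\[
e_j\eta_\alpha\,\eta_{\mathcal M}=-\eta_{\mathcal M}\,e_j\eta_\alpha,\qquad A_i\eta_{\mathcal M}=-\eta_{\mathcal M}A_i .
\]
The desired equation $\nabla_i\varphi=-\tfrac12 A_i\varphi$ therefore becomes, after cancelling the invertible $\eta_{\mathcal M}$,
\[
(\nabla_i\Omega)\psi=\tfrac12\bigl(A_i\Omega+\Omega A_i\bigr)\psi ,
\]
which I would in fact prove as an operator identity $\nabla_i\Omega=\tfrac12\{A_i,\Omega\}$.

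\emph{Step 2 (computing both sides).} For the anticommutator, $e_j\eta_\alpha$ either commutes or anticommutes with $e_I\eta_\Gamma$, according to the same case split as in the proof of Lemma \ref{lemma NXw}. Thus
\[
\tfrac12\{e_j\eta_\alpha,\omega^\Gamma_I\}=e_j\eta_\alpha\,\omega^\Gamma_I
\]
in the commuting case ($j\in I,\alpha\in\Gamma$ or $j\notin I,\alpha\notin\Gamma$) and $0$ otherwise. On the other side, Lemma \ref{lemma NXw} gives $\nabla_i\omega^\Gamma_I=-k^\alpha_{ij}e_j\eta_\alpha\,\omega^{\Gamma_\alpha}_{I_j}$, summed over the commuting pairs.

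\emph{Step 3 (matching by reindexing).} The map $(I,\Gamma)\mapsto(I_j,\Gamma_\alpha)$ is an involution on the commuting pairs, and in the commuting case both cardinalities change by the same amount, so it preserves the constraint $|I|=|\Gamma|$. Under this map the sign $(-1)^{|\Gamma|}$ flips. Hence
\[
\sum_{\text{comm.}}(-1)^{|\Gamma|}\bigl(-e_j\eta_\alpha\,\omega^{\Gamma_\alpha}_{I_j}\bigr)=\sum_{\text{comm.}}(-1)^{|\Gamma|}e_j\eta_\alpha\,\omega^{\Gamma}_{I},
\]
which gives $\nabla_i\Omega=\tfrac12\{A_i,\Omega\}$. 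As a consistency check, the case $\ell=0$ reproduces $\nabla_iN=-k^\alpha_{ij}X^\alpha_j$, because $(e_j\eta_\alpha)^2=1$.

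The main obstacle I anticipate is sign bookkeeping rather than analysis. Two points need care: first, that the convention setting $\omega^\Gamma_I=0$ when $|I|\ne|\Gamma|$ interacts correctly with the reindexing; second, that the formula of Lemma \ref{lemma NXw} is valid uniformly in $\ell$, including the boundary cases $\ell=0$ and $\ell=\min(m,n)$, where $I_j$ or $\Gamma_\alpha$ would leave the admissible range and the corresponding terms must vanish. I would verify these edge cases explicitly before assembling the identity.
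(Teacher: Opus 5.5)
Your proposal is correct and follows essentially the same route as the paper: both differentiate $\varphi$ using Lemma~\ref{lemma NXw} and then split into cases according to whether $e_j\eta_\alpha$ commutes or anticommutes with $e_I\eta_\Gamma$. The paper phrases this split against $\eta_{\mathcal M}e_I\eta_\Gamma$, which is equivalent because $\eta_{\mathcal M}$ anticommutes with $e_j\eta_\alpha$. Your reindexing involution $(I,\Gamma)\mapsto(I_j,\Gamma_\alpha)$, with its sign flip of $(-1)^{|\Gamma|}$, is exactly the ``simplifying'' step the paper leaves implicit. The edge cases you flag need no separate check, since this involution automatically maps the set of commuting pairs with $|I|=|\Gamma|$ to itself.
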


\begin{proof}
Applying equation \eqref{equation NXw}, we have
\begin{equation}\label{ivarphi}
    \begin{split}
         \nabla_i\varphi &=
    \eta_{\mathcal{M}}\nabla_i\left(\sum_{|I|=|\Gamma|}(-1)^{|\Gamma|}\omega_I^\Gamma\psi\right)
    \\& = -\eta_{\mathcal{M}}\sum_{|I|=|\Gamma|}\sum_{j,\alpha} (-1)^{|\Gamma|}k_{ij}^\alpha\Big(e_j\eta_\alpha\omega^{\Gamma_\alpha}_{I_j}+\frac{1}{2}\omega^{\Gamma}_I e_j\eta_\alpha\Big)\psi.
    \end{split}
\end{equation}
If $j\in I$ and $\alpha\notin \Gamma$, or 
$j\notin I$ and $\alpha\in \Gamma$, we obtain 
\begin{equation*}
    \eta_\mathcal{M} e_I\eta_\Gamma e_j\eta_\alpha=e_j\eta_\alpha\eta_\mathcal{M} e_I\eta_\Gamma.
\end{equation*}
If $j\notin I$ and $\alpha\notin \Gamma$, or $j\in I$ and $\alpha\in \Gamma$, we obtain
\begin{equation*}
    \eta_\mathcal{M} e_I\eta_\Gamma e_j\eta_\alpha=-e_j\eta_\alpha\eta_\mathcal{M} e_I\eta_\Gamma.
\end{equation*}
Therefore, the result follows after simplifying \eqref{ivarphi}.
\end{proof}

\begin{proposition}
    The spinor $\varphi $ vanishes if $\psi$ is null.
\end{proposition}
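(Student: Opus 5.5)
We have to show that when $\psi$ is null, $\varphi=\eta_{\mathcal M}\sum_{|I|=|\Gamma|}(-1)^{|\Gamma|}\omega^\Gamma_I\psi$ vanishes. The plan is to work pointwise, replacing every $\omega^\Gamma_I$ by its expression in $N$ and $X^\alpha$, and then reduce to a Clifford-algebra identity for a spinor with $e_\alpha\eta_\alpha\psi=\psi$.

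\textbf{Pointwise normal form.} Nullity means $N=|X^\alpha|$ for every $\alpha$. Testing this with unit timelike combinations of the $\eta_\alpha$, as in the proof of Lemma~\ref{u1}, also forces $X^\alpha\perp X^\beta$ for $\alpha\neq\beta$ (this is what Theorem~\ref{N=X} gives in our setting). So at each point we can take $e_\alpha=N^{-1}X^\alpha$ for $\alpha=1,\dots,m$ and complete to an orthonormal frame. Since $|\langle e_\alpha\eta_\alpha\psi,\psi\rangle|=N=|\psi|^2$ and $e_\alpha\eta_\alpha$ is a self-adjoint involution, the equality case of Cauchy--Schwarz gives $e_\alpha\eta_\alpha\psi=\psi$ for each $\alpha$.

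\textbf{Evaluating the coefficients.} Take $|I|=|\Gamma|=\ell$. If some $\alpha\in\Gamma$ lies outside $I$, or some $j\in I$ lies outside $\mathcal M$, the corresponding $e_\alpha\eta_\alpha$ anticommutes with $e_I\eta_\Gamma$. Hence $\langle e_I\eta_\Gamma\psi,\psi\rangle=-\langle e_I\eta_\Gamma\psi,\psi\rangle=0$, the same argument used at the end of the proof of Theorem~\ref{Nomega X}. Only $I=\Gamma\subset\mathcal M$ survives. In that case, reordering gives
\[
e_\Gamma\eta_\Gamma=(-1)^{\ell(\ell-1)/2}\prod_{\alpha\in\Gamma}e_\alpha\eta_\alpha ,
\]
so $e_\Gamma\eta_\Gamma\psi=(-1)^{\ell(\ell-1)/2}\psi$ and $\omega^\Gamma_\Gamma\psi=N\psi$. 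This is consistent with Theorem~\ref{Nomega X}, where $N^{\ell-1}\omega=\widetilde X$ and only the identity permutation contributes. Therefore
\[
\varphi=N\,\eta_{\mathcal M}\sum_{\Gamma\subset\mathcal M}(-1)^{|\Gamma|}\psi .
\]

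\textbf{Conclusion.} Since $\sum_{\Gamma\subset\mathcal M}(-1)^{|\Gamma|}=(1-1)^m=0$ for $m\ge1$, we get $\varphi=0$.

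\textbf{Main obstacle.} The delicate step is the sign bookkeeping: the $\ell(\ell-1)/2$ reordering sign must cancel between $\langle e_\Gamma\eta_\Gamma\psi,\psi\rangle$ and $e_\Gamma\eta_\Gamma\psi$, and the conventions for $\omega^\Gamma_I$ and the sum over $|I|=|\Gamma|$ must be read correctly. In particular, the empty set $\Gamma=I=\emptyset$ must be included, contributing $N\psi$ with sign $+1$. Once these are checked, the proof reduces to the binomial cancellation above.
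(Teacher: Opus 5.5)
Your proof is correct. Its algebraic core is the same as the paper's: anticommutation kills every term with $I\neq\Gamma$, and then $\sum_{\Gamma\subset\mathcal M}(-1)^{|\Gamma|}=(1-1)^m=0$. The difference is where you apply it.

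The paper never uses nullity in the interior. It takes $\psi$ from Theorem~\ref{l.i spinor}, so that $\overline e^\infty_\alpha\eta_\alpha\psi^\infty=\psi^\infty$, and does the computation only for the constant spinor $\psi^\infty$, getting $\varphi^\infty=0$. It then propagates inward: Lemma~\ref{decay1} gives $\varphi=O(r^{-q})$, and Theorem~\ref{ODE varphi} says $\varphi$ solves the same system, so $|\nabla\varphi|\le C|\varphi|$ and hence $\varphi\equiv0$.

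You instead evaluate $\varphi$ pointwise in a frame adapted to $X^1,\dots,X^m$. You use nullity at that point, orthogonality of the $X^\alpha$ (from $|X^\nu|\le N$ with $\nu=(\eta_\alpha\pm\eta_\beta)/\sqrt2$), and the equality case of Cauchy--Schwarz. Your route is purely algebraic and local: $\varphi$ vanishes wherever $\psi$ is null, for an arbitrary spinor, with no use of the PDE, Theorem~\ref{ODE varphi}, or asymptotics. The cost is that you need nullity at every point. In the rigidity setting that is exactly the content of Theorem~\ref{N=X}, itself an ODE argument. The paper needs nullity only at infinity and lets the equation satisfied by $\varphi$ do the rest.

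A few points are worth making explicit:
\begin{itemize}
\item Evaluating in your adapted frame is legitimate because $\sum_{|I|=\ell}\langle e_I\eta_\Gamma\psi,\psi\rangle e_I\eta_\Gamma$ is independent of the orthonormal frame. The paper tacitly uses the same fact at infinity.
\item The case ``$j\in I\setminus\mathcal M$'' has no anticommuting $e_\alpha\eta_\alpha$ of its own. Since $|I|=|\Gamma|$, it forces some $\alpha\in\Gamma\setminus I$, so it reduces to your first case.
\item Defining $e_\alpha=N^{-1}X^\alpha$ needs $N\neq0$, which Lemma~\ref{decay1} provides.
\item The equality case of Cauchy--Schwarz gives the sign $+$ in $e_\alpha\eta_\alpha\psi=\psi$, because $\langle e_\alpha\eta_\alpha\psi,\psi\rangle=\langle X^\alpha,e_\alpha\rangle=N$.
\end{itemize}
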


\begin{proof}
By Theorem \ref{l.i spinor},
 we have $\overline{e}^\infty_\alpha\eta_\alpha\psi^\infty=\psi^\infty$. Then 
for any $\alpha\in \Gamma\setminus I$, $\overline{e}^\infty_\alpha \eta_\alpha$ anti-commutes with $\overline{e}^\infty_I\eta_\Gamma$, thus, $\langle \overline{e}^\infty_I \eta_\Gamma\psi^\infty,\psi^\infty\rangle=0$. Therefore, 
\begin{equation*}
    \varphi^\infty=\eta_\mathcal{M}\sum_{\Gamma}(-1)^{|\Gamma|}\langle \overline{e}^\infty_\Gamma\eta_\Gamma \psi^\infty,\psi^\infty \rangle
    \overline{e}^\infty_\Gamma\eta_\Gamma\psi^\infty=\eta_\mathcal{M}(1-1)^m|\psi^\infty|^2\psi^\infty=0.
\end{equation*}
Hence, Lemma \ref{decay1} yields $\varphi=O(r^{-q})$. However, Theorem \ref{ODE varphi} implies that $|\nabla_i \varphi|\le C|\varphi|$ for some constant $C>0$, therefore, $\varphi\equiv 0$.
\end{proof}

\appendix

\section{Connection formula for spinors}\label{appendix connection formula}

The connection formula for spinors in a spacetime with multiple time directions is not standard in the literature, so we present it here. For details, see \cite[Sections 1.6 and 2.4]{lawson2016spin}. Let $(\overline{M}^{n,m},\overline{g})$ be a semi-Riemannian manifold with the Levi-Civita connection $\overline{\nabla}$. Let $\{E_\mathbf{A}\}$ be an orthonormal frame for $\overline{g}$. We denote by $E_i$ the spacelike unit vectors and by $E_\alpha$ the timelike ones.

Recall that 
\[\operatorname{Spin}(n,m)=\{v_1\cdots v_{2p}: p\in \mathbb{N},\; v_i\in T\overline{M}^{n,m} \text{  and  } \|v_i\|_{\overline{g}}=\pm 1\}.\]
Denote by $\rho:\operatorname{Spin}(n,m) \to \operatorname{SO}(n,m)$ the group homomorphism generated by 
\[\rho(E_\mathbf{A})v=-E_\mathbf{A}vE_\mathbf{A}^{-1}=v-2\frac{\langle v,E_\mathbf{A}\rangle_{\overline{g}}}{\langle E_\mathbf{A}, E_\mathbf{A}\rangle_{\overline{g}}}E_\mathbf{A}, \quad v\in T\overline{M}^{n,m}.\]
Let
$\operatorname{Lie}(\operatorname{Spin}_{n,m})$ be the Lie algebra of $\operatorname{Spin}(n,m)$. Note that
\begin{equation} \label{cos}
\begin{split}
    \cos 2t +\sin 2t E_i E_j=& (\cos t E_i+\sin t E_j)(-\cos t E_i+\sin t E_j),
    \\  \cosh 2t +\sinh 2t E_\alpha E_j=& (\cosh t  E_\alpha-\sinh t E_j)(\cosh t E_\alpha+\sinh t E_j), 
    \\ \cos 2t +\sin 2t E_\alpha E_\beta=& (\cos t E_\alpha-\sin t E_\beta)(\cos t E_\alpha+\sin t E_\beta).
\end{split}
\end{equation}
Thus, 
$\Gamma(t)$ defined by the expressions given above is a curve on $\operatorname{Spin}(n,m)$. Since $\Gamma'(0)=2E_\mathbf{A}E_\mathbf{B}$,
$\operatorname{Lie}(\operatorname{Spin}_{n,m})$ is spanned by $\{E_\mathbf{A}E_\mathbf{B}\}$, while $\operatorname{so}(n,m)$ is spanned by $E_\mathbf{A}\wedge E_\mathbf{B}$, where
\[E_\mathbf{A}\wedge E_\mathbf{B}(v):=-\langle v,E_\mathbf{B}\rangle_{\overline{g}} E_\mathbf{A}+\langle v,E_\mathbf{A}\rangle_{\overline{g}} E_\mathbf{B}.\]
Let $\rho_*: \operatorname{Lie}(\operatorname{Spin}_{n,m})\to \operatorname{so}(n,m)$ be the map on the tangent space induced by $\rho$.
According to \cite[pg.\ 42]{lawson2016spin},
\begin{align*}
    \rho_*(E_\mathbf{A}E_\mathbf{B})v=\frac{1}{2}\frac{\partial}{\partial t}\Big{|}_{t=0}\rho(\Gamma(t)) v=\frac{1}{2}\frac{\partial}{\partial t}\Big{|}_{t=0}\left[ \Gamma(t) v\Gamma(t)^{-1}\right]=E_\mathbf{A}E_\mathbf{B}v-vE_\mathbf{A}E_\mathbf{B}
    =2E_\mathbf{A}\wedge E_\mathbf{B}(v).
\end{align*}
Let $w$ be the connection 1-form of $\overline{\nabla}$, i.e., $w=\frac{1}{2}w_{\mathbf{B}\mathbf{A}}E_\mathbf{A}\wedge E_\mathbf{B}$ and $wE_\mathbf{A}=\overline{\nabla} E_\mathbf{A}$. Thus, 
\[w_{\mathbf{B}\mathbf{A}}=\frac{\langle \overline{\nabla} E_\mathbb{A}, E_\mathbf{B}\rangle_{\overline{g}}}{\langle E_\mathbf{A}, E_\mathbf{A}\rangle_{\overline{g}}\langle E_\mathbf{B}, E_\mathbf{B}\rangle_{\overline{g}}}.\]
Therefore, \[\rho_*^{-1}(w)=\frac{1}{4}\langle \overline{\nabla} E_i,E_j\rangle_{\overline{g}}  E_i E_j-\frac{1}{2}\langle \overline{\nabla} E_\alpha, E_j\rangle_{\overline{g}}  E_\alpha E_j+\frac{1}{4}\langle \overline{\nabla} E_\alpha, E_\beta\rangle_{\overline{g}}  E_\alpha E_\beta.\]
\begin{lemma}\label{connection formula}
Let $\overline{\psi}$ be a spinor on $(\overline{M}^{n,m},\overline{g})$. Then,
    \[\overline{\nabla}_v \overline{\psi}=v(\overline{\psi})+\frac{1}{4}\langle \overline{\nabla}_v E_i,E_j\rangle_{\overline{g}} E_i E_j\overline{\psi}-\frac{1}{2}\langle \overline{\nabla}_v E_\alpha, E_j\rangle_{\overline{g}} E_\alpha E_j\overline{\psi}+\frac{1}{4}\langle \overline{\nabla}_v E_\alpha, E_\beta\rangle_{\overline{g}} E_\alpha E_\beta\overline{\psi}.\]
\end{lemma}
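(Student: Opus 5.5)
The statement to prove is Lemma \ref{connection formula}. I would derive it directly from the general definition of the spinor connection as the lift of the Levi-Civita connection through the double cover $\rho:\operatorname{Spin}(n,m)\to\operatorname{SO}(n,m)$. The plan is to write the spinor bundle as associated to the spin frame bundle, so that in a local trivialization induced by a lift of the orthonormal frame $\{E_\mathbf{A}\}$ one has $\overline\nabla_v\overline\psi=v(\overline\psi)+\rho_*^{-1}(w(v))\cdot\overline\psi$, where $w$ is the $\operatorname{so}(n,m)$-valued connection form and $\rho_*^{-1}(w(v))$ acts by Clifford multiplication. This is the content of \cite[Sections 1.6 and 2.4]{lawson2016spin}, and the signature does not affect that abstract statement. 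The lemma then reduces to identifying $\rho_*^{-1}(w)$ explicitly, which the computation preceding the statement essentially does.

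First, I would confirm that the curves in \eqref{cos} lie in $\operatorname{Spin}(n,m)$ and have derivative $2E_\mathbf{A}E_\mathbf{B}$ at $t=0$, so that $\{E_\mathbf{A}E_\mathbf{B}\}_{\mathbf{A}<\mathbf{B}}$ spans the Lie algebra. Next, I would verify $\rho_*(E_\mathbf{A}E_\mathbf{B})=2E_\mathbf{A}\wedge E_\mathbf{B}$ by computing the commutator $[E_\mathbf{A}E_\mathbf{B},v]$ with the Clifford relations $vE+Ev=-2\langle v,E\rangle_{\overline g}$ in the paper's sign convention. This gives $\rho_*^{-1}(E_\mathbf{A}\wedge E_\mathbf{B})=\tfrac12E_\mathbf{A}E_\mathbf{B}$. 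Writing $w=\tfrac12 w_{\mathbf{BA}}E_\mathbf{A}\wedge E_\mathbf{B}$ then gives $\rho_*^{-1}(w)=\tfrac14 w_{\mathbf{BA}}E_\mathbf{A}E_\mathbf{B}$.

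The remaining step is to convert $w_{\mathbf{BA}}$ into inner products, keeping track of signs. From $wE_\mathbf{A}=\overline\nabla E_\mathbf{A}$ and the definition of $E_\mathbf{A}\wedge E_\mathbf{B}$, pairing with $E_\mathbf{B}$ produces the factor $\langle E_\mathbf{A},E_\mathbf{A}\rangle\langle E_\mathbf{B},E_\mathbf{B}\rangle$ in $w_{\mathbf{BA}}$. I would then split the sum into three blocks:
\begin{itemize}
\item For the spacelike–spacelike block, the factor is $+1$, which gives $\tfrac14\langle\overline\nabla E_i,E_j\rangle E_iE_j$.
\item For the timelike–timelike block, the factor is $(-1)(-1)=+1$, which gives $\tfrac14\langle\overline\nabla E_\alpha,E_\beta\rangle E_\alpha E_\beta$.
\item For the mixed blocks, the factor is $-1$. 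Both orderings $(\alpha,j)$ and $(j,\alpha)$ appear, and I would combine them using $\langle\overline\nabla E_j,E_\alpha\rangle=-\langle\overline\nabla E_\alpha,E_j\rangle$ (metric compatibility) and $E_jE_\alpha=-E_\alpha E_j$. The two contributions add rather than cancel, giving $-\tfrac12\langle\overline\nabla E_\alpha,E_j\rangle E_\alpha E_j$.
\end{itemize}
Evaluating on $v$ yields the formula stated in the lemma.

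I expect the main obstacle to be the sign bookkeeping in the mixed terms. The sign $\langle E_\alpha,E_\alpha\rangle=-1$, the orientation convention for $E_\mathbf{A}\wedge E_\mathbf{B}$, and the sign in $\rho$ must all be consistent. As a check, I would take $m=1$ and test the result against the known hypersurface formula $\nabla_i\psi=\overline\nabla_i\psi-\tfrac12 k_{ij}e_j\eta\psi$ (up to the paper's conventions). This matches the $-\tfrac12$ mixed coefficient and the $+\tfrac12 k e\eta$ term in the modified connection $\overline\nabla$.
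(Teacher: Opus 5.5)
Your proposal is correct and follows the same route as the paper. Both arguments identify $\rho_*(E_\mathbf{A}E_\mathbf{B})=2E_\mathbf{A}\wedge E_\mathbf{B}$ via the curves in \eqref{cos}, write $w_{\mathbf{BA}}=\langle\overline\nabla E_\mathbf{A},E_\mathbf{B}\rangle_{\overline g}/(\langle E_\mathbf{A},E_\mathbf{A}\rangle_{\overline g}\langle E_\mathbf{B},E_\mathbf{B}\rangle_{\overline g})$, and insert $\rho_*^{-1}(w)=\tfrac14 w_{\mathbf{BA}}E_\mathbf{A}E_\mathbf{B}$ into the Lawson--Michelsohn formula $\overline\nabla_v\overline\psi=v(\overline\psi)+\rho_*^{-1}(w(v))\overline\psi$; your sign bookkeeping also checks out, with the two mixed orderings each carrying the factor $-1$ and adding to give $-\tfrac12$, while the spacelike and timelike blocks keep $+\tfrac14$.
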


\section{Verification of parallel spinors}\label{appendix connection coefficients}
In this appendix we verify that the spinor $\overline{\psi}$ defined in the proof of Theorem \ref{thm rigidity 2} is parallel. 
Recall that $k^\alpha=f^\alpha g$.
Thus, equation \eqref{equation NXw} implies
\begin{equation*}
    \nabla_i N=-f^\alpha X^\alpha_i,\quad 
    \nabla_i X^\alpha_j=Nf^\alpha g_{ij}-N^{-1}f^\beta(X^\alpha_jX^\beta_i-X^\alpha_i X^\beta_j).
\end{equation*}
 We will apply the above equations to compute the connection coefficients of $\overline{g}$ defined in \eqref{eq: Killing development}. Since $\eta_\alpha=N^{-1}(\partial_{\tau_\alpha}-X^\alpha)$, we obtain for the Lie brackets
\begin{align*}
    [\eta_\alpha,\eta_\beta]=&(\nabla_{X^\beta}N^{-1})\eta_\alpha-(\nabla_{X^\alpha}N^{-1})\eta_\beta+N^{-2}[X^\alpha,X^\beta]
    \\=& f^\beta \eta_\alpha-f^\alpha \eta_\beta,
    \\ [\eta_\alpha, e_i]=&-f^\beta X^\beta_i N^{-1} \eta_\alpha-N^{-1}[X^\alpha,e_i].
\end{align*}

Next, we compute the connection coefficients 
    \begin{align*}
        \langle \overline{\nabla}_{i} \eta_\alpha,\eta_\beta\rangle
        =&\frac{1}{2}\left(\langle[\eta_\beta,\eta_\alpha],e_i\rangle
        +\langle[\eta_\beta,e_i],\eta_\alpha\rangle-
        \langle[\eta_\alpha,e_i],\eta_\beta\rangle\right)=0,
    \end{align*}
     \begin{align*}
        \langle \overline{\nabla}_{\eta_\alpha} \eta_\beta,e_i\rangle
        =&\frac{1}{2}\left(\langle[e_i,\eta_\beta],\eta_\alpha\rangle
        +\langle[e_i,\eta_\alpha],\eta_\beta\rangle-
        \langle[\eta_\beta,\eta_\alpha],e_i\rangle\right)
        \\=& -f^\gamma X_i^\gamma N^{-1}\delta_{\alpha\beta},
    \end{align*}
    \begin{align*}
        \langle \overline{\nabla}_{\eta_\alpha} e_i,e_j\rangle
        =&\frac{1}{2}\left(\langle[e_j,e_i],\eta_\alpha\rangle
        +\langle[e_j,\eta_\alpha],e_i\rangle-
        \langle[e_i,\eta_\alpha],e_j\rangle\right)
        \\=& \frac{1}{2}N^{-1}(\langle \nabla_i X^\alpha, e_j\rangle-\langle \nabla_j X^\alpha, e_i\rangle)
        \\=&N^{-2}f^\beta (X^\alpha_i X^\beta_j-X^\alpha_j X^\beta_i),
    \end{align*}
    \begin{align*}
        \langle \overline{\nabla}_{\eta_\alpha} \eta_\beta,\eta_\gamma\rangle
        =&\frac{1}{2}\left(\langle[\eta_\gamma,\eta_\beta],\eta_\alpha\rangle
        +\langle[\eta_\gamma,\eta_\alpha],\eta_\beta\rangle-
        \langle[\eta_\beta,\eta_\alpha],\eta_\gamma\rangle\right)
        \\=& f^\gamma \delta_{\beta\alpha}-f^\beta \delta_{\alpha\gamma}.
    \end{align*}
    Finally,  we show that $\overline{\psi}$ is parallel. Applying Lemma \ref{A phi}, we obtain
    \begin{equation*}
        \overline{\nabla}_i \overline{\psi}=\nabla_i \overline{\psi}+\frac{1}{2}k^\alpha_{ij}e_j\eta_\alpha \overline{\psi}+\frac{1}{4}\langle \overline{\nabla}_i \eta_\alpha, \eta_\beta \rangle \eta_\alpha \eta_\beta \overline{\psi}
        = 0. 
    \end{equation*}
Using $\eta_\alpha \overline{\psi}=-N^{-1}X^\alpha \overline{\psi}$ and Lemma \ref{connection formula} yields
\begin{equation} \label{eta1}
    \begin{split}
    \overline{\nabla}_{\eta_\alpha}\overline{\psi}=&
    \eta_\alpha(\overline{\psi})
    +\frac{1}{4}\langle \overline{\nabla}_{\eta_\alpha}\eta_\beta,\eta_\gamma\rangle\eta_\beta\eta_\gamma\overline{\psi}-
    \frac{1}{2}\langle \overline{\nabla}_{\eta_\alpha}\eta_\beta,e_j\rangle\eta_\beta e_j\overline{\psi}
    +\frac{1}{4}\langle \overline{\nabla}_{\eta_\alpha}e_i,e_j\rangle e_i e_j\overline{\psi}
    \\=& \eta_\alpha(\overline{\psi})+\left(\sum_{\gamma\neq \alpha}\frac{1}{2}f^\gamma \eta_\alpha\eta_\gamma \overline{\psi}\right)+\frac{1}{2}f^\gamma N^{-1}\eta_\alpha X^\gamma \overline{\psi}+\sum_{\gamma\neq\alpha}\frac{1}{2}f^\gamma N^{-2}X^\alpha X^\gamma \overline{\psi}
    \\=& \eta_\alpha(\overline{\psi})+\frac{1}{2}f^\gamma N^{-1}\eta_\alpha X^\gamma \overline{\psi}+\sum_{\gamma\neq \alpha}f^\gamma \eta_\alpha \eta_\gamma \overline{\psi}.
    \end{split}
\end{equation}
Since $\partial_{\tau_\alpha}\overline{\psi}=0$ and applying Lemma \ref{connection formula}, we have 
\begin{equation}\label{eta2}
    \begin{split}
    \eta_\alpha(\overline{\psi})=&-N^{-1}X^\alpha(\overline{\psi})
    \\=& -\overline{\nabla}_{N^{-1}X^\alpha} \overline{\psi}+\frac{1}{4}\langle \nabla_{N^{-1}X^\alpha}\eta_\beta, \eta_\gamma\rangle \eta_\beta\eta_\gamma \overline{\psi}-\frac{1}{2}\langle \nabla_{N^{-1}X^\alpha}\eta_\beta, e_j\rangle \eta_\beta e_j \overline{\psi}
    \\=& -\frac{1}{2}N^{-1}f^\beta \eta_\beta X^\alpha\overline{\psi}.
    \end{split}
\end{equation}
 Therefore, combining equations \eqref{eta1} and \eqref{eta2} yields $\overline{\nabla}_{\eta_\alpha}\overline{\psi}=0$.  
 
\bibliography{literature.bib}
\bibliographystyle{amsplain}
\end{document}